\begin{document}

\title{Classes of treebased networks}
%


\author{Mareike Fischer         \and
		Michelle Galla 			\and
		Lina Herbst				\and
		Yangjing Long			\and
		Kristina Wicke
}


\institute{
	Mareike Fischer (\Letter) 	\email{email@mareikefischer.de}
            \\
     Mareike Fischer, Michelle Galla, Lina Herbst, Kristina Wicke \at Institute of Mathematics and Computer Science, University of Greifswald, Germany \\ 
     Yangjing Long \at School of Mathematics and Statistics, Central China Normal University, Wuhan, Hubei, China\\
}

\date{Received: date / Accepted: date}

\maketitle

\begin{abstract}
Recently, so-called treebased phylogenetic networks have gained considerable interest in the literature, where a treebased network is a network that can be constructed from a phylogenetic tree, called the base tree, by adding additional edges. 
The main aim of this manuscript is to provide some sufficient criteria for treebasedness by reducing phylogenetic networks to related graph structures. While it is generally known that deciding whether a network is treebased is NP-complete, one of these criteria, namely edgebasedness, can be verified in linear time. Surprisingly, the class of edgebased networks is closely related to a well-known family of graphs, namely the class of generalized series parallel graphs, and we will explore this relationship in full detail. Additionally, we introduce further classes of treebased networks and analyze their relationships.

\keywords{phylogenetic tree \and phylogenetic network \and treebased network \and edgebased network \and chordal network \and Hamilton connected \and Hamiltonian path \and generalized series parallel graphs \and series parallel graphs}
\end{abstract}

\section{Introduction}
Phylogenetic networks are of considerable interest in the current literature as they allow for the representation of non-treelike evolutionary events, such as hybridization and horizontal gene transfer. 

Various classes of phylogenetic networks have been introduced and studied, one of them being the class of so-called treebased networks. Roughly speaking, a phylogenetic network is treebased if it can be obtained from a phylogenetic tree by adding additional edges. 

While \cite{Francis2015} first introduced this concept for binary rooted phylogenetic networks, recently \cite{Francis2018} extended it to binary unrooted networks, \cite{Jetten2018} to non-binary rooted networks and \cite{Hendriksen2018} and \cite{paper2} to non-binary unrooted networks. 

In the present manuscript, we focus on unrooted networks and consider both the binary and non-binary case. 
First, we introduce three procedures that reduce a phylogenetic network to related graphs. This leads to some sufficient criteria which guarantee a phylogenetic network to be treebased (whether they are binary or not). Some of these criteria are based on classical graph theory, particularly on the theory of Hamiltonian paths, cycles, and graphs. 
Another sufficient criterion for treebasedness is a property that we refer to as edgebasedness. This criterion is again related to classical graph theory, namely to so-called generalized series parallel graphs. We will introduce this concept in full detail, highlight the relationship between edgebased graphs and generalized series parallel graphs and analyze its implications. In particular, we remark that edgebasedness can be tested in linear time since generalized series parallel graphs can be recognized in linear time. This is also of practical relevance as, in general, the problem of deciding whether a network is treebased or not is an NP-complete problem (cf. \citet{Francis2018}).

The remainder of this manuscript is organized as follows. In Section \ref{sec_preliminaries}, we introduce some basic phylogenetic and graph-theoretical concepts and terminology. We then introduce three procedures, leaf cutting, leaf shrinking and leaf connecting that reduce a phylogenetic network to related graphs in Section \ref{NetworkReduction}. This leads to some sufficient criteria for treebasedness, e.g. edgebaseness, and some classes of phylogenetic networks that are guaranteed to be treebased, which we introduce in Section \ref{sec_classes}. After summarizing the relationships between these classes, we conclude this manuscript with Section \ref{sec_discussion}, where we discuss our results and indicate possible directions of future research.

\section{Methods} 
We use mathematical proof techniques to obtain our results based on the definitions and methods defined in the Preliminaries.

\subsection{Preliminaries}\label{sec_preliminaries}
\subsubsection*{Phylogenetic and basic graph-theoretical concepts}
Throughout this manuscript let $G=(V(G),E(G))$ (or $G=(V,E)$ for short) denote a graph with vertex set $V(G)$ and edge set $E(G)$.  
Note that throughout this manuscript graphs may contain parallel edges and loops. Whenever we require graphs without parallel edges and/or loops, we specifically speak of \emph{simple} graphs and whenever parallel edges are allowed but loops are not, we speak of \emph{loopless} graphs.
Furthermore, we use $N_G(v)$ (or $N(v)$ for short if there is no ambiguity) to denote the neighborhood of a vertex $v$ in $G$, i.e. the set of vertices adjacent to $v$ in $G$. Note that if $G$ is a simple graph without parallel edges and loops, we have $\vert N_G(v) \vert = deg(v)$.

Now, let $X$ denote a finite set (e.g. of taxa or species) with $\vert X \vert \geq 1$.
An \emph{unrooted phylogenetic network} $N^u$ (on $X$) is a connected, simple graph $G=(V,E)$ with $X \subseteq V$ and no vertices of degree 2, where the set of degree-1 vertices (referred to as the \emph{leaves} or \emph{taxa} of the network) is bijectively labeled by $X$. Such an unrooted network is called \emph{unrooted binary} if every inner vertex $u \in V \setminus X$ has degree 3. It is called a \emph{phylogenetic tree} if the underlying graph structure is a tree.
In the following, we denote by $\mathring{E}$ the set of inner edges of $N^u$, i.e. those edges that are {\em not} incident to a leaf. 
A phylogenetic network $N^u = (V,E)$ on $X$ is called \emph{treebased} if there is a spanning tree $T = (V,E')$ in $N^u$ (with $E' \subseteq E$) whose leaf set is equal to $X$. This spanning tree is then called a \emph{support tree} for $N^u$. Moreover, the tree $T'$ that can be obtained from $T$ by suppressing potential degree-2 vertices is called a \emph{base tree} for $N^u$. Note that the existence of a support tree $T$ for $N^u$ implies the existence of a base tree $T'$ for $N^u$.

When analyzing networks, or more generally connected graphs, it is often useful to decompose them into simpler pieces, which can then be analyzed individually. Therefore, let $G=(V,E)$ be a connected graph. A \emph{cut edge}, or \emph{bridge}, of $G$ is an edge $e$ whose removal disconnects the graph. Similarly, a vertex $v$ is a \emph{cut vertex} (sometimes also called an \emph{articulation}) if deleting $v$ and all its incident edges disconnects the graph. Moreover, a set $\mathcal{C}$ of vertices whose removal disconnects a graph is called a \emph{separating set} or \emph{vertex cut}.

If after the removal of a cut edge one of the induced connected components of the resulting graph is a single vertex, the corresponding cut edge is called \emph{trivial}.
We call $N^u$ a simple network if all of its cut edges are trivial.

A \emph{blob} in a connected graph (and more specifically in a network) is a maximal connected subgraph that has no cut edge. If a blob consists only of one vertex, we call the blob \emph{trivial}. Note, however, that a blob may contain cut vertices. An example for such a blob can be seen in Figure \ref{blob}. Moreover, note that we can consider a network as a \enquote{tree} with blobs as vertices (cf. \citet{Gusfield2005}). \label{tree_of_blobs}
In contrast, a \emph{block} in a connected graph $G$ is a maximal induced \emph{biconnected} subgraph of $G$, i.e. a maximal induced subgraph that remains connected if any one of its vertices is removed. In particular, a block does not contain cut vertices.

\begin{figure}[htbp] 
	\centering
	\includegraphics[scale=0.45]{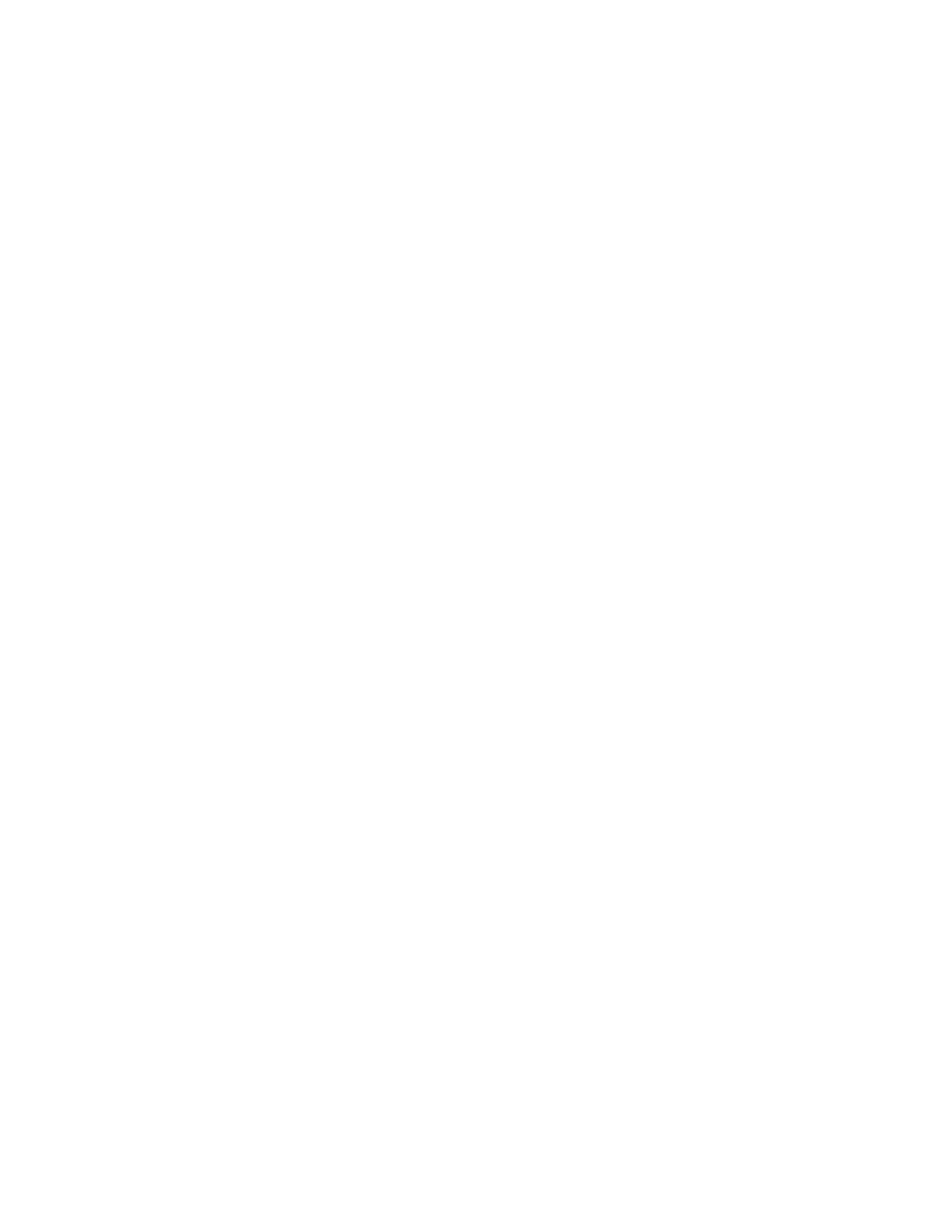}
	\caption{Unrooted non-binary phylogenetic network $N^u$ on leaves $1$,$2$,$3$ and $4$. The gray areas correspond to the blobs of $N^u$. Notice that the biggest blob contains a cut vertex (depicted as a square vertex). Moreover, notice that $N^u$ can be considered as a tree with blobs as vertices, as the cut edges and blobs of $N^u$ induce a \enquote{tree structure}.}
	\label{blob}
\end{figure}

Following \citet{paper2}, we call a graph $G$ (or a network $N^u$) \emph{proper} if the removal of any cut edge or cut vertex present in the graph (or the network) leads to connected components containing at least one leaf each.


Lastly, two important operations on graphs that will be used in the following are the concepts of \emph{subdividing an edge} and \emph{suppressing a vertex}.
Therefore, let $G$ be a graph with some edge $e=\{u,v\}$. Then, we say that we \emph{subdivide} $e$ by deleting $e$, adding a new vertex $w$ and adding the edges $\{u,w\}$ and $\{w,v\}$. The new degree-2 vertex $w$ is sometimes also called an \emph{attachment point}. Note that we often also refer to the vertex incident to a vertex $x$ of degree $1$, i.e. incident to a leaf $x$, as the attachment point of $x$, even if it is a vertex of degree higher than three. On the opposite, given a degree-2 vertex $w$ with adjacent vertices $u$ and $v$, by \emph{suppressing $w$} we mean deleting $w$ and its two incident edges $\{u,w\}$ and $\{w,v\}$ and adding a new edge $\{u,v\}$.

\subsection*{Further graph-theoretical concepts} \label{graphtheo}
Before we can introduce three procedures to reduce a phylogenetic network to related graphs, we need to recall some basic concepts from classical graph theory. Most importantly, we need to recall the notion of Hamiltonian paths and Hamiltonian cycles.

A \emph{Hamiltonian path} is a path in a graph that visits each vertex exactly once. 
If this path is a cycle, we call the path a \emph{Hamiltonian cycle}. Moreover, a graph that contains a Hamiltonian cycle is called a \emph{Hamiltonian graph}. 
A graph is called \emph{Hamilton connected} if for every two vertices $u$, $v$ there is a Hamiltonian path from $u$ to $v$. Note that in particular, every Hamilton connected graph is Hamiltonian, because the strong property of Hamilton connectedness also holds for adjacent vertices, so the edge $e=\{u,v\}$ together with the Hamilton path from $u$ to $v$ forms a Hamiltonian cycle. 
As has been noted in \citet{Francis2018}, there is a strong connection between Hamiltonian paths and treebasedness of phylogenetic networks. However, before we can elaborate this in more detail, we need to introduce a few more concepts.

First, recall that the \emph{toughness} $t(G)$ of a graph $G$ (or, analogously, of a phylogenetic network $N$) is defined as $$t(G) = \min \limits_{\mathcal{C}} \frac{|\mathcal{C}|}{c(G-\mathcal{C})}, $$ where the minimum is taken over all separating sets $\mathcal{C}$ of $G$, $G-\mathcal{C}$ denotes the (disconnected) graph that results from deleting all vertices of $\mathcal{C}$ from $G$ and all edges incident to $\mathcal{C}$ and where $c(G - \mathcal{C})$ denotes the number of connected components in $G - \mathcal{C}$. The concept of toughness plays an important role in the study of Hamiltonian graphs \citep{Chvatal1973,Kabela2017}, and thus, as we will show, also for treebasedness of a network.  

Next, another graph theoretic concept we will consider are \emph{chordal} graphs. Recall that a graph is called chordal if all cycles of length four or more have a chord, i.e. an edge that connects two vertices of the cycle which are not adjacent in the cycle \citep[p. 135]{Diestel2017}. We will call a phylogenetic network chordal if its underlying graph is chordal. 

Finally, recall that in graph theory, if a graph $G$ can be converted into another graph $G'$ by a sequence of vertex deletions, edge deletions and suppression of degree-2 vertices, $G'$ is called a \emph{topological subgraph} of $G$ (cf. \citet{Grohe2011}). In the present manuscript, we will consider a restricted version of topological subgraphs. In particular, we call a graph $G'$ a \emph{restricted topological subgraph} of a graph $G$, if $G$ can be converted into $G'$ by a sequence of the following operations:

\begin{enumerate} \label{restriction_operations}
\item Delete a leaf (and its incident edge).
\item Suppress a vertex of degree 2.
\item Delete one copy of a multiple edge, i.e. if $e_1=e_2 \in E(G)$, delete $e_2$.
\item Delete a loop, i.e. if $e=\{u,u\} \in E(G)$, delete $e$. 
\end{enumerate}

Note that $G'$ is in this case also a topological subgraph as the above operations are restricted versions of the respective operations which lead to topological subgraphs: deleting a leaf is a special kind of vertex deletion, and the deletion of a multiple edge or of a loop are special types of edge deletions. \\

Last but not least, a connected and loopless graph $G$ is called a \emph{generalized series parallel graph} ($GSP$ graph for short) if it can be reduced to a single edge, i.e. to the complete graph $K_2$, by only applying operations 1. -- 3., i.e. by only deleting leaves, suppressing degree-2 vertices or deleting parallel edges (see for example \citet{Ho1999}). \label{gsp}
Similarly, a connected and loopless graph $G$ is called a \emph{series parallel graph} ($SP$ graph for short) if it can be reduced to $K_2$ by operations 2 and 3, i.e. by suppressing degree-2 vertices or deleting parallel edges (see for example \citet{Ho1999}). \label{sp}

Both $GSP$ and $SP$ graphs belong to the class of $2$-terminal graphs as shown by the following definition:
\begin{definition}[adapted from \citet{Ho1999}] \label{Def_GSP}
\begin{enumerate}
\item The graph $K_2$ consisting of two vertices $u$ and $v$ (called \emph{terminals}) and a single edge $\{u,v\}$ is a \emph{primitive} $GSP$ graph. 
\item If $G_1$ and $G_2$ are two $GSP$ graphs with terminals $u_1, v_1$ and $u_2, v_2$, respectively, then the graph obtained by means of any of the following three operations is a $GSP$ graph:
	\begin{enumerate}
	\item Series composition of $G_1$ and $G_2$: identifying $v_1$ with $u_2$ and specifying $u_1$ and $v_2$ as the terminals of the resulting graph.
	\item The parallel composition of $G_1$ and $G_2$: identifying $u_1$ with $u_2$ and $v_1$ with $v_2$, and specifying $u_1$ and $v_1$ as the terminals of the resulting graph. 
	\item The generalized-series composition of $G_1$ and $G_2$: identifying $v_1$ with $u_2$ and specifying $u_2$ and $v_2$ as the terminals of the resulting graph.
	\end{enumerate}
\end{enumerate}
\end{definition}

Now, the family of $SP$ graphs consists of those $GSP$ graphs that are obtained by using only the series (a) and parallel (b) compositions of Definition \ref{Def_GSP}.

In fact, there is a close relationship between $GSP$ and $SP$ graphs, which is reflected in the following lemma:

\begin{lemma}[adapted from Lemma 3.2 in \citet{Ho1999}] \label{Lemma_blocks_are_SP}
A connected graph $G$ is a $GSP$ graph if and only if each block of $G$ (i.e. each maximal induced biconnected subgraph of $G$) is an $SP$ graph.
\end{lemma}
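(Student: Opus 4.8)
The plan is to prove both directions of the equivalence by induction and by a block decomposition argument, using the fact that the reduction operations 1--3 (deleting leaves, suppressing degree-2 vertices, deleting parallel edges) interact nicely with the block structure of a connected graph.

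\textbf{The ``only if'' direction.} Suppose $G$ is a $GSP$ graph, so there is a sequence of operations 1--3 reducing $G$ to $K_2$. I would argue that each such operation, when restricted to a single block $B$ of $G$, either does nothing to $B$ (if the operation happens in another block) or performs a legal $SP$-type operation on $B$. The key observation is that deleting a leaf is only ever applied to a degree-1 vertex, and a degree-1 vertex is never an interior vertex of a block with at least two edges; thus leaf deletions only ever destroy trivial blocks ($K_2$'s that are bridges) and never touch a nontrivial block except possibly by turning one of its vertices into a leaf at a later stage. Conversely, suppressions of degree-2 vertices and deletions of parallel edges, when they occur inside a block, are exactly the operations 2--3 allowed for $SP$ graphs. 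Making this precise requires tracking how a block of $G$ evolves under the reduction: I would show that if $G$ reduces to $K_2$, then restricting the reduction sequence to the edges that lie in a fixed block $B$ yields a reduction of $B$ to $K_2$ using only operations 2 and 3, hence $B$ is $SP$.

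\textbf{The ``if'' direction.} Conversely, suppose every block of $G$ is an $SP$ graph. I would induct on the number of blocks. If $G$ has one block, then $G$ is either $K_2$ or a biconnected $SP$ graph, and applying the $SP$ reduction (operations 2--3) reduces it to $K_2$; in particular $G$ is $GSP$. If $G$ has more than one block, pick a leaf block $B$ of the block-cut tree, i.e.\ a block containing exactly one cut vertex $c$ of $G$. Using that $B$ is $SP$, I would reduce $B$ to a single edge $\{c, w\}$ by operations 2--3 (this edge is now a bridge of the current graph, with $w$ a leaf), then delete the leaf $w$ by operation 1, thereby removing the block $B$ entirely. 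The resulting graph $G'$ has one fewer block, and every block of $G'$ is a block of $G$, hence $SP$; by the induction hypothesis $G'$ (and therefore $G$) is $GSP$. One technical point to handle carefully: after reducing $B$, the cut vertex $c$ may have had degree 2 in $G$ (degree 1 in $B$ plus degree 1 from the rest), in which case $c$ gets suppressed rather than $w$ being a bare leaf --- but this is still a legal operation and the block count still drops, so the induction goes through.

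\textbf{Main obstacle.} The bookkeeping in the ``only if'' direction is the delicate part: one has to verify that the restriction of a global reduction sequence to a single block is well-defined and still valid, in particular that no operation applied elsewhere in $G$ can force an ``illegal'' move inside $B$, and that a vertex of $B$ can only become a degree-1 or degree-2 vertex (triggering operations 1 or 2) after everything hanging off it outside $B$ has already been pruned. Alternatively, one can sidestep some of this by invoking the standard fact (e.g.\ from \citet{Ho1999}) that $GSP$ graphs are exactly the graphs with no $K_4$ topological minor, and that a graph has no $K_4$ topological minor if and only if each of its blocks has none, i.e.\ each block is $SP$; but since the excerpt only provides Definition \ref{Def_GSP}, I would prefer the self-contained block-decomposition argument sketched above and treat the restriction-of-reduction-sequence lemma as the heart of the proof.
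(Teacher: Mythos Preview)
The paper does not actually prove this lemma; it is stated as ``adapted from Lemma 3.2 in \citet{Ho1999}'' and used as a black box thereafter, so there is no in-paper proof to compare your proposal against.

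That said, your sketch is sound and follows the natural block-decomposition route one would expect. Two points are worth tightening. In the ``if'' direction, you assert that the leaf block $B$ can be reduced by operations 2--3 to an edge $\{c,w\}$ incident to the cut vertex $c$; this is true, but it needs the fact that an $SP$ graph can be reduced to \emph{any} prescribed edge (this is exactly Theorem~\ref{GSP_reduce_to_any_edge} in the paper, also taken from \citet{Ho1999}), and moreover that the reduction never needs to suppress $c$ itself, since $c$ may have large degree in $G$. In the ``only if'' direction, the restriction argument can be made precise by noting that (i) a leaf of the current graph never lies in a nontrivial block, so operation 1 never removes an edge of such a block; (ii) a degree-2 vertex lying in a nontrivial block cannot be a cut vertex (a degree-2 cut vertex is the meeting point of two bridges), so suppressing it is a legal $SP$ move inside that block; and (iii) parallel edges always lie in a common block. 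With these three observations the ``restriction of the reduction sequence to $B$'' is well-defined and yields an $SP$ reduction of $B$. Your $K_4$-minor alternative is shorter and equally valid, though it imports a characterization not stated in the paper.
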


\section{Results}

\subsection{Reducing phylogenetic networks to related graphs} \label{NetworkReduction}
In the following, we will introduce three ways to reduce phylogenetic networks to related simple graphs, which will play a crucial role in what follows.

\paragraph*{Leaf cutting}: Let $N^u$ be a phylogenetic network on taxon set $X$ with at least two vertices, at least two of which are leaves, i.e. $|V(N^u)|\geq 2$, $|X|\geq 2$. Let $G$ be the simple graph resulting from deleting all leaves labelled by $X$ from $V(N^u)$ and their incident edges. (Note that this may result in some vertices of degree 2 and -- e.g. if $N^u$ is a tree -- even new leaves not labelled by $X$, which we do \emph{not} remove). We call the simple graph resulting from this procedure the \emph{leaf cut graph of $N^u$} and denote it by $\mathcal{LCUT}(N^u)$. An illustration of the described procedure is depicted in Figure \ref{leafcutting}.

\begin{figure}[H]
\centering
\includegraphics[scale=0.4]{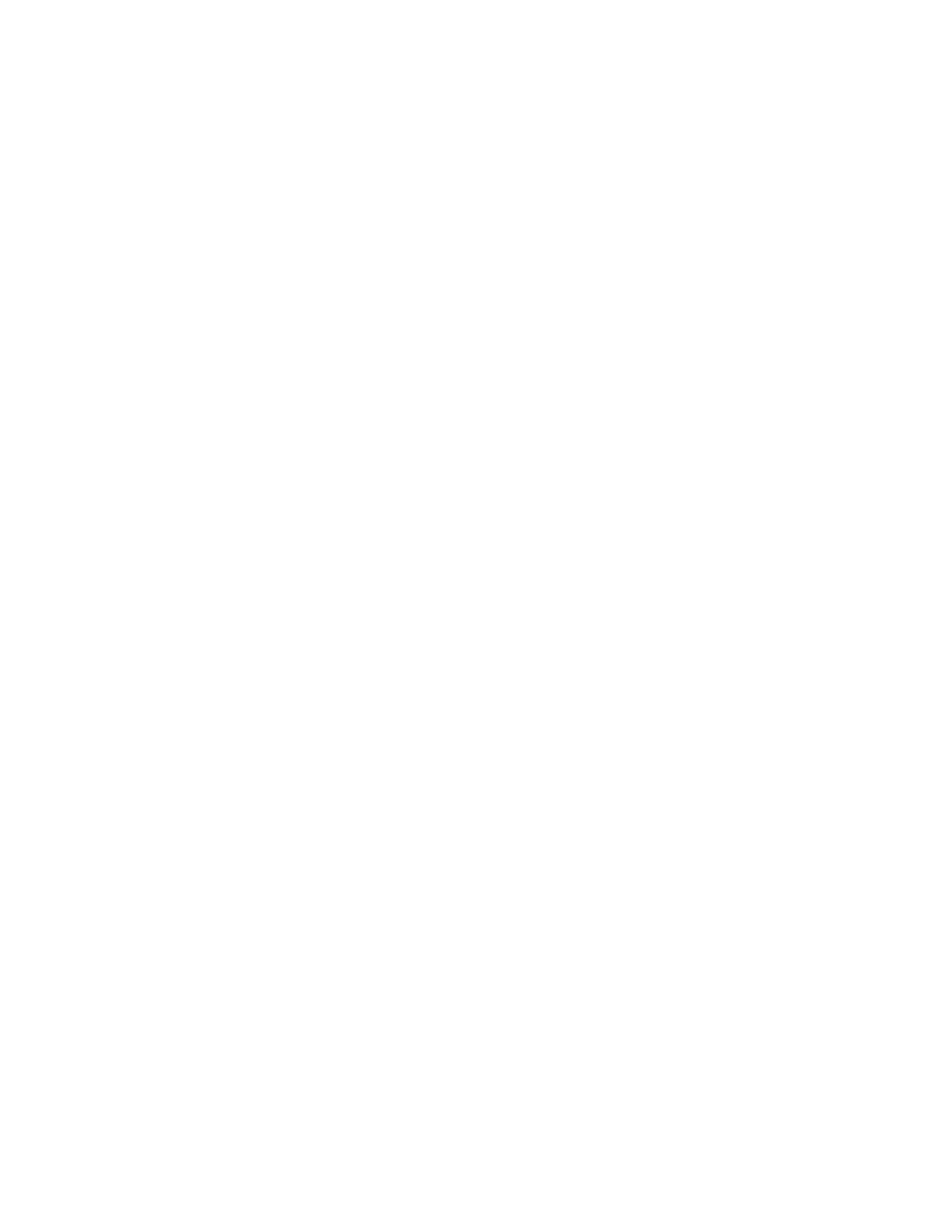}
\caption{Network $N^u$ on labelset $X=\{1,2,3,4\}$ and the simple graph resulting from the leaf cutting procedure. Note that this procedure results in one new leaf not labeled by $X$.}
\label{leafcutting}
\end{figure}

Based on the leaf cutting procedure, we can define a special class of phylogenetic networks, namely so-called \emph{$\mathcal{H}$-connected} networks, which will be of interest later on (e.g. in Section \ref{classes}).

\begin{definition} \label{def_Hconnected}
Let $N^u$ be a proper phylogenetic network on leaf set $X$ with $|X| \geq 2$ such that $\mathcal{LCUT}(N^u)$ is Hamilton connected. Then, $N^u$ is called an \emph{$\mathcal{H}$-connected} network.
\end{definition}

We can now turn to a second network reduction procedure, namely to the leaf shrinking procedure. We will apply this procedure not only to phylogenetic networks, but more generally to connected graphs, which is why we directly define it for general graphs.

\paragraph*{Leaf shrinking}: Let $G$ be a connected graph with at least two vertices, at least two of which are leaves, i.e. $|V(G)|\geq 2$, $|V_L(G)|\geq 2$. We shrink $G$ to a smaller simple and simple graph by constructing restricted topological subgraphs as described in Section \ref{graphtheo}, i.e. we delete vertices of degree 1, suppress vertices of degree 2 and delete a copy of parallel edges or loops. We do this as follows:

\begin{algorithm}[H] \label{alg_edgebased}
\SetKwInOut{Input}{input}\SetKwInOut{Output}{output}
\Input{Connected graph $G$ (e.g. a phylogenetic network $N^u$) with $\vert V(G) \vert \geq 2$ and $\vert V_L(G) \vert \geq 2$ }
\Output{Leaf shrink graph $\mathcal{LS}(G)$ of $G$}
$\mathcal{LS}(G) \coloneqq G$\;
	\While{$\vert V(\mathcal{LS}(G)) \vert > 2$}{
	Do one of the following (if applicable):
		\begin{itemize}
		\item Delete a leaf $x$ (and its incident edge) from $\mathcal{LS}(G)$.
		\item Suppress a vertex of degree 2 in $\mathcal{LS}(G)$.
		\item Delete one copy of a multiple edge, i.e. if $e' = e \in E(\mathcal{LS}(G))$, delete $e$.
		\item Delete a loop, i.e. if $e=\{u,u\} \in E(\mathcal{LS}(G))$, delete $e$.
		\end{itemize}
	If no operation is applicable, \Return $\mathcal{LS}(G)$
	}
	\If{$\vert V(\mathcal{LS}(G)) \vert = 2$}{
		\While{$\vert E(\mathcal{LS}(G)) \vert > 1$}{
			\If{$G$ contains a multiple edge (i.e. if $e' = e \in E(\mathcal{LS}(G))$) or if $\mathcal{LS}(G)$ contains a loop (i.e. if $e=\{u,u\} \in E(\mathcal{LS}(G))$)}{
		Delete $e$.
			}
		}
	}
	\Return $\mathcal{LS}(G)$
	\caption{Leaf shrinking}
\end{algorithm}

We call a simple graph resulting from this procedure the \emph{leaf shrink graph of} $G$ and denote it by $\mathcal{LS}(G)$. Note that due to steps $5-11$ in the algorithm, the smallest graph (in terms of the number of vertices and the number of edges) a graph $G$ may be reduced to is the complete graph on 2 vertices $K_2$, i.e. a single edge.

Based on the leaf shrinking procedure, we can again introduce a special class of phylogenetic networks, namely so-called edgebased phylogenetic networks (cf. Figure \ref{leafshrinking}). We will elaborate on edgebased phylogenetic networks in Section \ref{sec_edgebased}.

\begin{definition} \label{def_edgebased}
Let $G$ be a connected graph with $\vert V(G) \vert \geq 2$ and $\vert V_L(G) \vert \geq 2$. If the leaf shrink graph $\mathcal{LS}(G)$ of $G$ is a single edge, $G$ is called \emph{edgebased}. Else, $G$ is called \emph{non-edgebased}. 
If $G = N^u$ is a proper phylogenetic network with $|V(N^u)| \geq 2$ and $|X| \geq 2$ and $\mathcal{LS}(N^u)$ is a single edge, we call $N^u$ an \emph{edgebased} network. Else, $N^u$ is called \emph{non-edgebased}.
\end{definition}

\begin{remark} 
Note that the definition of edgebased graphs is very similar to the definition of $GSP$ graphs (cf. page \pageref{gsp}), the only difference being that a fourth operation -- the deletion of loops -- is allowed. In Section \ref{sec_edgebased} we will, however, show that there is a direct relationship between these two classes of graphs.  
\end{remark}

\begin{figure}[H]
\centering
\includegraphics[scale=0.4]{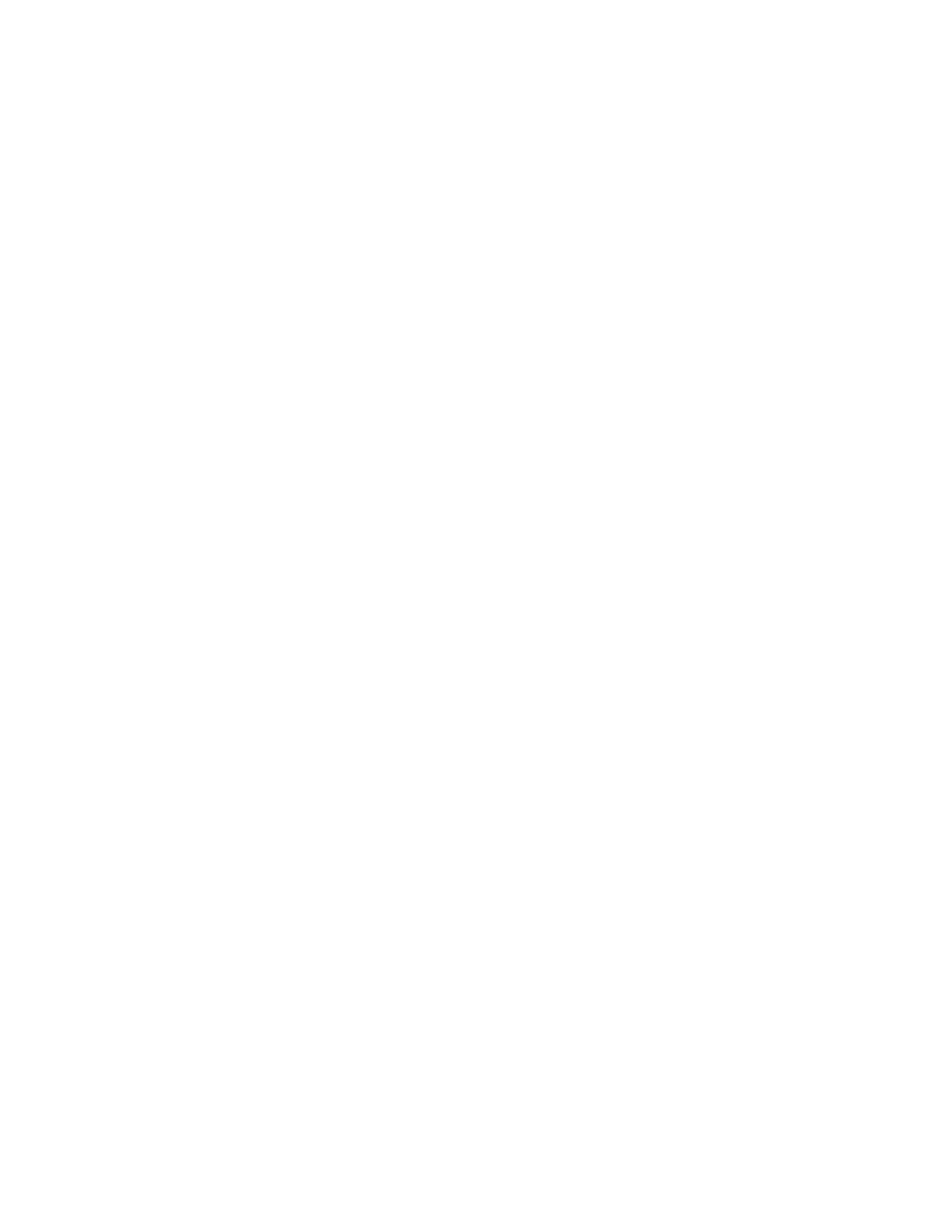}
\caption{Network $N^u$ on labelset $X=\{1,2,3,4\}$ and the simple graph resulting from the leaf shrinking procedure. At first, leaves 1, 2, 3 and 4 are deleted, resulting in a graph with one new leaf without label, which is subsequently removed. Afterwards, all resulting degree-2 vertices are suppressed.}
\label{leafshrinking2}
\end{figure}
\begin{figure}[H]
\centering
\includegraphics[scale=0.4]{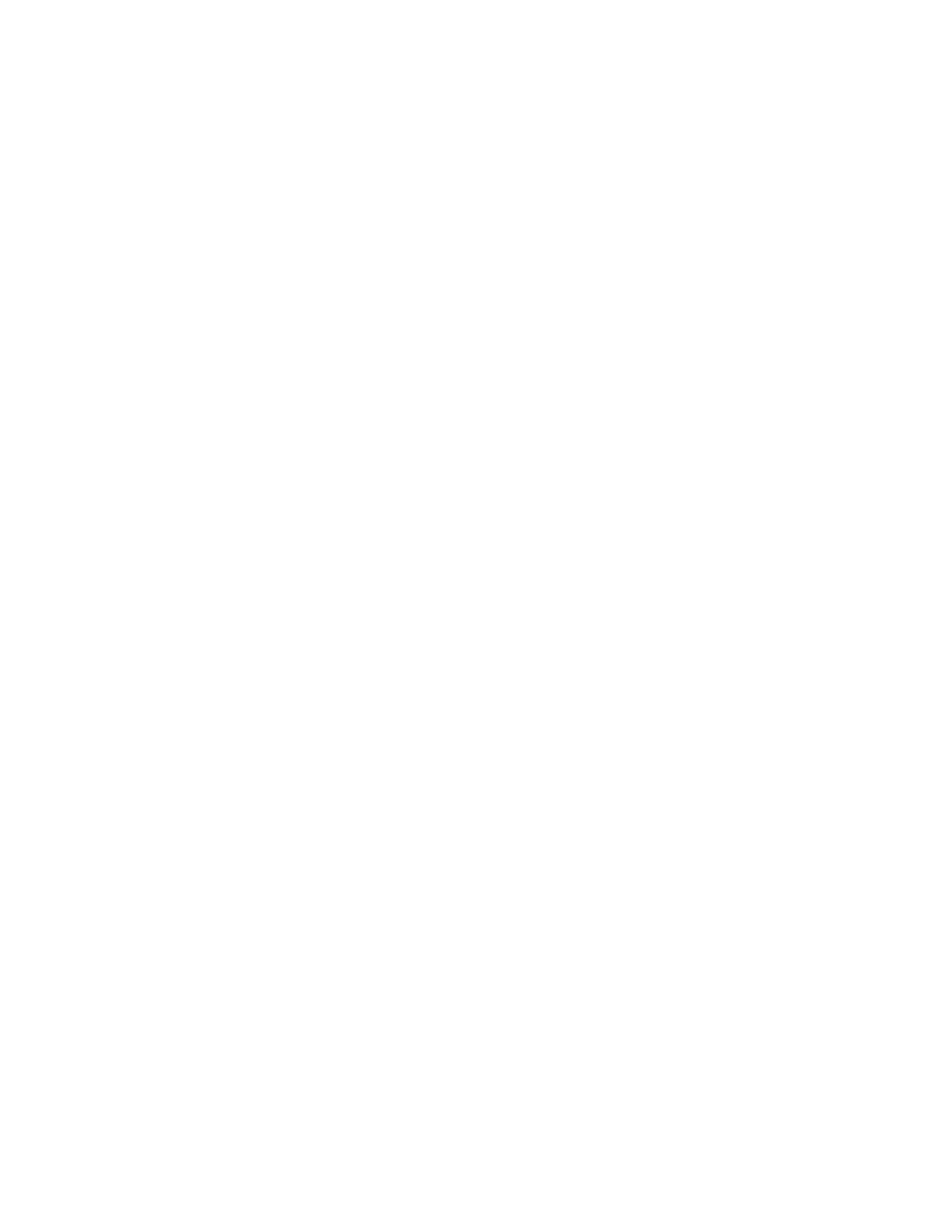}
\caption{Network $N^u$ on labelset $X=\{1,2,3,4\}$ and the simple graph resulting from the leaf shrinking procedure, which is an edge. At first, leaves 1, 2, 3 and 4 are deleted, resulting in a graph with one new leaf without label (cf. Figure \ref{leafcutting}). Then, this new leaf is removed as well, which results in a triangle. Now, one vertex of degree 2 is suppressed and the parallel edge is deleted resulting in one single edge. Thus, $N^u$ is called edgebased. Note that this graph resulting from the leaf shrinking procedure differs from the graph resulting from the leaf cutting procedure depicted in Figure \ref{leafcutting}.}
\label{leafshrinking}
\end{figure}

The last network reduction procedure that we want to introduce is the so-called \emph{leaf connecting} procedure.

\paragraph*{Leaf connecting}: Let $N^u$ be a phylogenetic network that is not a tree\footnote{Note that for a tree, the pre-processing step would always result in a single edge.} on taxon set $X$ with at least two leaves, i.e. $|X|\geq 2$. Then, we turn $N^u$ into a simple graph without vertices of degree 1 as follows: First of all, as a pre-processing step, as long as there exists an internal vertex $v$ of $N^u$ such that there is more than one leaf attached to $v$, delete all but one of the leaves adjacent to $v$.
If this results in $deg(v)=2$, suppress $v$. Note that this can only happen if $v$ is adjacent to only one inner vertex of $N^u$ and at least two leaves. In particular, this implies that suppressing $v$ cannot lead to parallel edges (cf. Figure \ref{Fig_LeafConnecting}, where in the pre-processing step vertex $x$ is suppressed).

Note that this pre-processing step may have to be repeated several times, but does not influence the property of a network being treebased or not. If a network is treebased, there exists a base tree that in particular covers all leaves attached to some vertex $v$. By deleting all but one of them and suppressing resulting degree-2 vertices, we obtain a base tree for the pre-processed network. Conversely, given a base tree for a pre-processed network, we can obtain a base tree for the original network by subdividing edges (if necessary) and adding leaves to these attachment points or to existing vertices of the base tree. 

After the pre-processing step, we continue as follows:
\begin{itemize}
\item Select two leaves $x_1$ and $x_2$ (if they exist). We call their respective attachment points $u_1$ and $u_2$, respectively. Delete $x_1$ and $x_2$ as well as edges $\{x_1,u_1\}$ and $\{x_2,u_2\}$ and add an edge $e:=\{u_1,u_2\}$. If this edge is a parallel edge, i.e. if there is another edge $\widetilde{e}$ connecting $u_1$ and $u_2$, add two more vertices $a$ and $b$ and replace $e$ by two new edges, namely $e_1:=\{u_1,a\}$ and $e_2:=\{a,u_2\}$. Similarly, replace $\widetilde{e}$ by two new edges, namely $\widetilde{e}_1:=\{u_1,b\}$ and $\widetilde{e}_2:=\{b,u_2\}$. Last, add a new edge $\{a,b\}$. \\
Repeat this procedure until no pair of leaves is left.
\item  If there is one more leaf $x$ left in the end, remove $x$ and, if its attachment point $u$ then has degree 2, suppress $u$. If this results in two parallel edges, say $e=\{y,z\}$ and $\widetilde{e} = \{y,z\}$, re-introduce $u$ on edge $e$ and add a new vertex $a$ to the graph, delete $\widetilde{e}$ and introduce two new edges $\widetilde{e}_1:=\{y,a\}$ and $\widetilde{e}_2:=\{a,z\}$. Last, add an edge $\{u,a\}$.
\end{itemize}
Note that the order in which the leaves are joined may alter the resulting graph. So if $|X|>2$, there might be more than one graph that can be achieved from $N^u$ in this manner. We refer to the set of such graphs as $\mathcal{LCON}(N^u)$.
Two illustrations of this concept are given in Figures \ref{Fig_LeafConnecting} and \ref{Fig_LCON}, respectively.

\begin{figure}[htbp]
	\centering
	\includegraphics[scale=0.35]{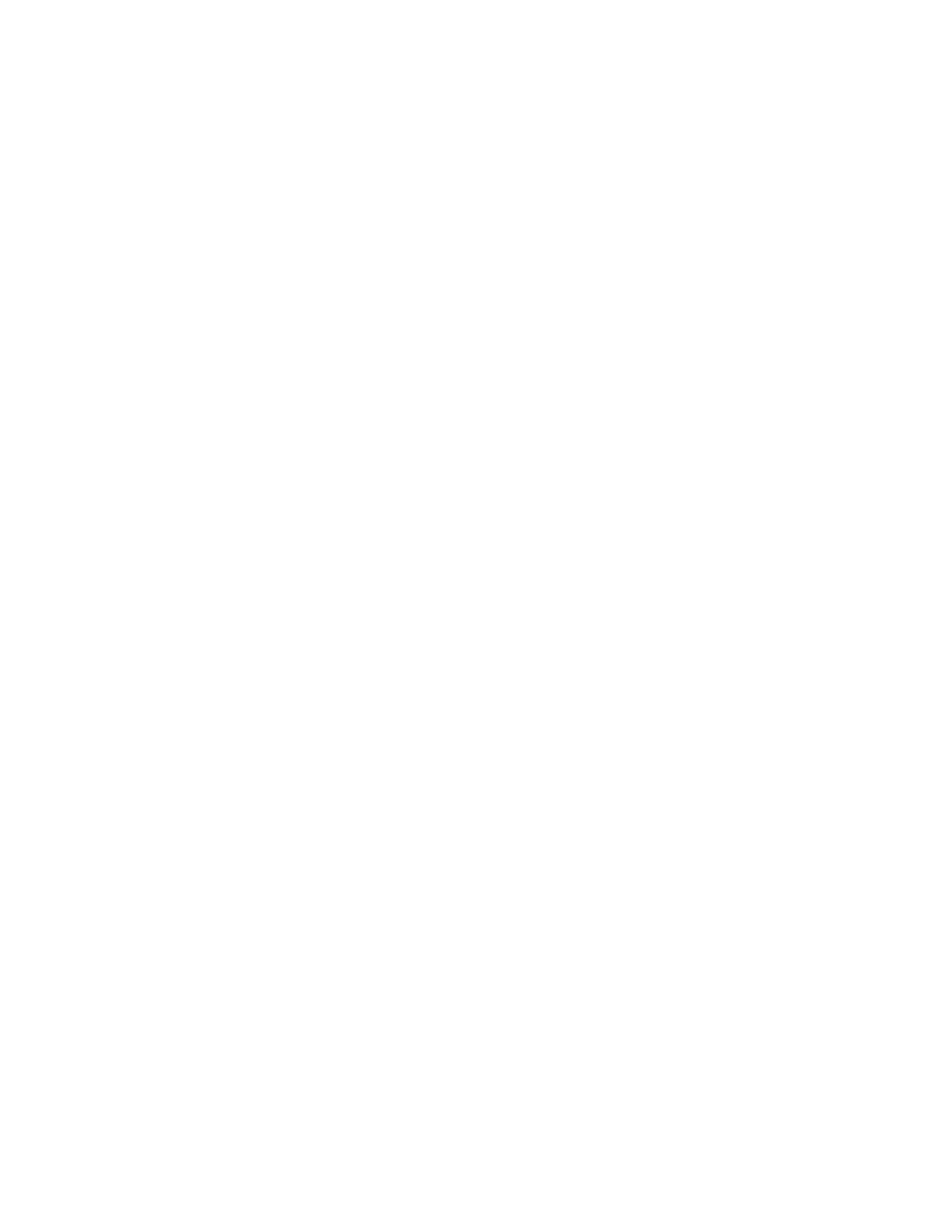}
	\caption{Network $N^u$ and the simple graphs resulting from the leaf connecting procedure. First, according to the pre-processing phase of the leaf connecting procedure, leaf 4 is deleted from the network because $x$ is adjacent to two leaves. Then, $x$ has degree 2 and thus needs to be suppressed. Then, first a pair of leaves is chosen and removed from the network, before the last leaf is removed (for (a), first leaves $1$ and $2$ were removed, followed by leaf $3$; for (b), first leaves $1$ and $3$ were removed, followed by leaf $2$ and for (c), first leaves $2$ and $3$ were removed, followed by leaf $1$). Note that the graphs depicted in (a), (b) and (c) are isomorphic. Thus, here $\mathcal{LCON}(N^u)$ consists of a single simple graph (in general, $\mathcal{LCON}(N^u)$ can consist of several simple graphs; as an example see Figure \ref{Fig_LCON}). Note, however, that the simple graph in $\mathcal{LCON}(N^u)$ differs from the simple graphs obtained from the leaf cutting and leaf shrinking procedure (cf. Figures \ref{leafcutting} and \ref{leafshrinking}).
	Moreover, note that even though new vertices ($a$ and $b$) were introduced, the total number of vertices of the simple graph in $\mathcal{LCON}(N^u)$ did not increase compared to $N^u$ or even compared to $N^u$ after the pre-processing step.}
	\label{Fig_LeafConnecting}
\end{figure}

\begin{figure}[htbp]
	\centering
	\includegraphics[scale=0.18]{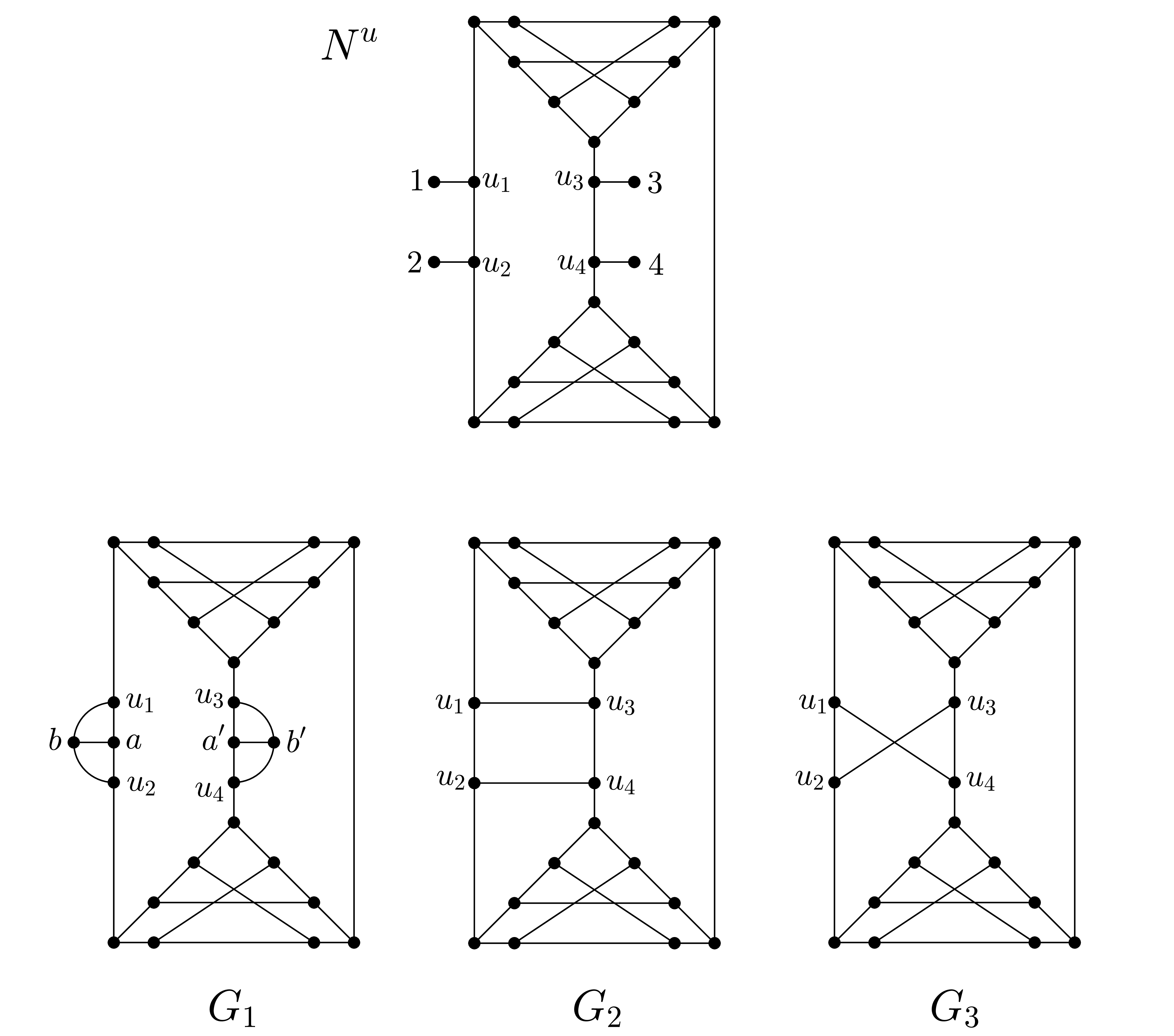}
	\caption{Network $N^u$ (adapted from \citet{paper2}) and the set $\mathcal{LCON}(N^u)$ resulting from the leaf connecting procedure. $G_1$ is obtained by deleting leaves $1,2$ and $3,4$ and connecting their attachment points respectively, while $G_2$ is obtained by connecting leaves $1,3$ and $2,4$ and $G_3$ is obtained by connecting leaves $1,4$ and $2,3$. Note that in case of $G_1$, 4 vertices ($a,b,a',b'$) have to be introduced in order to prevent the graph from becoming a multigraph. For $G_2$ and $G_3$ this step is not necessary. Note, however, that in any case the number of vertices of a graph in $\mathcal{LCON}(N^u)$ cannot increase compared to $N^u$, because in each step 2 leaves are deleted and at most 2 new vertices are created.}
	\label{Fig_LCON}
\end{figure}

To summarize, leaf cutting, leaf shrinking and leaf connecting are three different procedures to reduce a phylogenetic network to related simple graphs. In general, the resulting graphs differ from each other. However, all of them lead to sufficient criteria for treebasedness, which we will introduce in the following. We start with considering of edgebased phylogenetic networks in more detail.

\subsection{Classes of treebased networks} \label{sec_classes}
Deciding whether an unrooted phylogenetic network is treebased or not is generally NP-complete (cf. \citet{Francis2018}). So for practical purposes, it would be useful to know some sufficient properties that can be checked in polynomial time and which assure that a given network is indeed treebased (even if these criteria are not necessary). In this section, we will introduce a class of unrooted phylogenetic networks which are guaranteed to be treebased, namely edgebased networks. While edgebasedness can be checked in linear time, we will additionally mention other classes of networks which are also guaranteed to be treebased, but are based on properties like being Hamiltonian or Hamilton connected. However, while it is generally hard to check these properties (cf. \citet{Karp1972}), graphs with these properties have been extensively studied, which is why such properties link phylogenetic network theory to classical graph theory. Moreover, various graphs are already known to be Hamiltonian or Hamilton connected (cf. eg. \citet{Wilson1988,Rahman2005,Zhao2007} and \citet{Hu2005,Alspach2013}). Therefore, these properties might help to further enhance the understanding of phylogenetic networks. 

\subsubsection{Edgebased networks} \label{sec_edgebased}
In this section we thoroughly analyze the class of edgebased graphs and networks, where our overall aim is to show that edgebasedness guarantees treebasedness. However, we first show that there is a direct relationship between loopless edgebased graphs and $GSP$ graphs. We then show that the order of restriction operations does not matter for neither of them, in the following sense: If a graph $G$ is edgebased (or a $GSP$ graph), not only does there exist a sequence of restriction operations that reduces $G$ to $K_2$, but \emph{any} sequence of restriction operations will lead to a graph on two vertices that can then be further reduced to $K_2$ (cf. Algorithm \ref{alg_edgebased}). Last but not least, we return to the phylogenetic setting and finally show that edgebased networks are always treebased.

\paragraph{Relationship between edgebased graphs and $GSP$ graphs}
Comparing the definitions of $GSP$ graphs (cf. page \pageref{gsp}) and edgebased graphs (cf. page \pageref{def_edgebased}) there is a slight difference between the two classes of graphs. Both can be reduced to a single edge by certain restriction operations, but while in the case of edgebased graphs the deletion of loops is a valid restriction operation, it is not in the case of $GSP$ graphs. In the following we will show, however, that there is a direct relationship between both classes of graphs.

\begin{theorem} \label{thm_gsp_edgebased}
Let $G$ be a connected graph. Then $G$ is a $GSP$ graph if and only if 
	\begin{enumerate}
	\item [\rm (i)] $G$ is loopless and
	\item [\rm (ii)] $G$ can be reduced to $K_2$ by deleting leaves, suppressing vertices of degree 2, deleting copies of parallel edges and deleting loops, i.e. by applying restriction operations 1. to 4. (cf. page \pageref{restriction_operations}). 
	\end{enumerate}
\end{theorem}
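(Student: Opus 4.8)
The plan is to prove the two directions separately, exploiting the fact that the only discrepancy between the definition of a $GSP$ graph and condition (ii) is the admissibility of loop deletions. The forward direction ($G$ a $GSP$ graph $\Rightarrow$ (i) and (ii)) is the easy one: by definition a $GSP$ graph is loopless, giving (i) immediately; and the reduction to $K_2$ witnessing $GSP$-ness uses only operations 1.--3., which is a special case of using operations 1.--4., giving (ii). So essentially nothing needs to be done here beyond unwinding definitions.

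The substance is in the converse: assume $G$ is loopless and can be reduced to $K_2$ by operations 1.--4.; we must produce a reduction that uses only operations 1.--3. The key observation is that loops never help and, more importantly, that a loop can only ever be \emph{created} (not present initially, by (i)) through the suppression of a degree-2 vertex $w$ whose two neighbours coincide — that is, when $w$ lies on a multiple edge, or more precisely when $w$'s two incident edges go to the same vertex $u$, so suppressing $w$ creates a loop at $u$. The natural strategy is to show that any such loop-creating suppression followed by the loop deletion can be replaced by a deletion of a parallel edge; or, better, to argue via Lemma~\ref{Lemma_blocks_are_SP}: a connected loopless graph is $GSP$ iff each of its blocks is $SP$, and $SP$-ness of a block is a 2-connectivity-flavoured property insensitive to the loop-deletion augmentation. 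Concretely, I would take a reduction sequence of $G$ to $K_2$ using operations 1.--4. and massage it into one using only 1.--3. by the following local-replacement argument: consider the first time a loop is deleted (operation 4.); trace back the suppression (operation 2.) that created that loop at a vertex $u$ — at that moment $u$ had a neighbour $w$ of degree 2 whose other edge also went to $u$. Instead of suppressing $w$ and then deleting the resulting loop, one can equivalently delete the edge $\{w,u\}$ that is a parallel copy — wait, $w$ has degree 2 with both edges to $u$, so these two edges $\{u,w\}$ are parallel; delete one copy (operation 3.), leaving $w$ a leaf, then delete the leaf $w$ (operation 1.). This reaches the same graph without ever creating the loop. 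Iterating this replacement from the earliest loop-deletion onward eliminates all uses of operation 4. Since $G$ is loopless to begin with and each local replacement preserves the intermediate graph while removing one loop-creation/loop-deletion pair, after finitely many steps we obtain a reduction of $G$ to $K_2$ using only operations 1.--3., i.e., $G$ is a $GSP$ graph.

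The main obstacle is making the "trace back the suppression that created the loop" step fully rigorous: one must argue that \emph{every} loop appearing in any intermediate graph of a reduction starting from a loopless graph arises precisely from a degree-2-vertex suppression on a doubled edge (there is no other way operations 1.--4. can manufacture a loop), and that the two edges incident to the relevant degree-2 vertex are genuinely parallel so that operation 3. applies. A clean way to handle this is induction on the number of loop-deletions in the sequence, together with a small invariant: at the moment just before the earliest operation-4. step, the graph differs from a loopless graph obtained by an operations-1.--3. prefix only in the single loop about to be deleted. One should also double-check the degenerate cases where the reduction is already down to two vertices (the second \texttt{while}-loop in Algorithm~\ref{alg_edgebased}), but there loops and parallel edges are deleted interchangeably and the same replacement trivially applies. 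With this in hand, the theorem follows.
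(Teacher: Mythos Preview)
Your proposal is correct and follows essentially the same approach as the paper. The forward direction is identical, and for the converse the paper makes precisely your local-replacement observation: a loop can only arise when suppressing a degree-2 vertex $v$ whose two incident edges both go to the same vertex $u$ (so $\{u,v\}$ is a parallel edge), and the two-step sequence ``suppress $v$, then delete the loop at $u$'' can be replaced by ``delete one copy of the parallel edge $\{u,v\}$, then delete the leaf $v$'', reaching the same graph without ever creating a loop. The only cosmetic difference is that the paper anchors the argument at the first moment a loop is \emph{created} rather than the first moment one is \emph{deleted}, which slightly simplifies the bookkeeping you flag as the main obstacle (no need to worry about intermediate steps between creation and deletion); but the substance is the same.
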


\begin{proof}
First, suppose that $G$ is a $GSP$ graph. Then $G$ does by definition not contain loops, i.e. {\rm (i)} holds. Moreover, by definition $G$ can be reduced to $K_2$ by applying restriction operations 1. to 3. (cf. page \pageref{restriction_operations}), so {\rm (ii)} holds as well.

Now, suppose that $G$ is a connected graph without loops that can be reduced to $K_2$ by applying restriction operations 1. to 4. In order to show that $G$ is a $GSP$ graph, we need to show that $G$ can also be reduced to $K_2$ by \emph{only} applying operations 1. to 3., i.e. by deleting leaves, suppressing degree-2 vertices and deleting copies of parallel edges, but \emph{not} deleting loops.  Now, as $G$ is by assumption a graph without loops, loops can only arise during the reduction process. Let $\widetilde{G}$ be a restricted topological subgraph of $G$ that contains a loop and assume that $\widetilde{G}$ is the first graph with loops that arises when reducing $G$ to $K_2$. This implies that on the way from $G$ to $\widetilde{G}$ there must have been a restricted topological subgraph $G'$ of $G$ containing a parallel edge $e=\{u,v\}$ where one of $u$ and $v$, say $v$, was a degree-2 vertex, and the step from $G'$ to $\widetilde{G}$ was the suppression of $v$.
Now, deleting the loop $\{u,u\}$ from $\widetilde{G}$ yields some restricted topological subgraph $\widehat{G}$ of $G$. 
However, $\widehat{G}$ can alternatively be reached from $G'$ by first deleting a copy of the parallel edge $e=\{u,v\}$ (yielding a graph $G''$) and then deleting vertex $v$. Thus, $\widehat{G}$ can be reached from $G$ by only applying operations 1. to 3. (cf. Figure \ref{Fig_LoopDelete}). As the deletion of loops can always be circumvented in this way, $G$ can in particular be reduced to $K_2$ by only applying operations 1. to 3. Together with the fact that $G$ is loopless, this implies that $G$ is a $GSP$ graph. This completes the proof.
\end{proof}

\begin{figure}[htbp]
\centering
\includegraphics[scale=0.2]{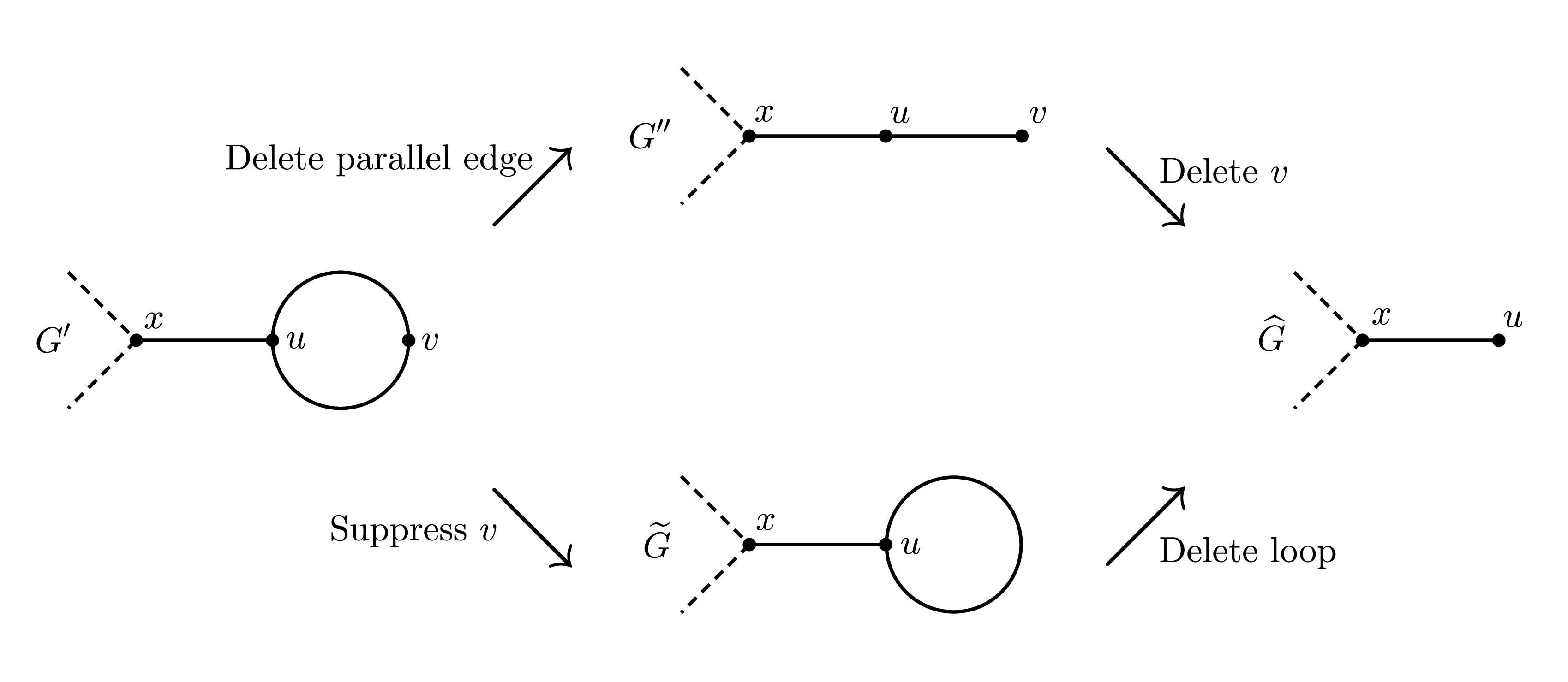}
\caption{Two alternative ways to reach graph $\widehat{G}$ from $G'$ (and thus $G$).}
\label{Fig_LoopDelete}
\end{figure}

As the following corollary shows, Theorem \ref{thm_gsp_edgebased} implies that there is a one-to-one relationship between loopless edgebased graphs and $GSP$ graphs.

\begin{corollary} \label{cor_gsp_edgebased}
Let $G$ be a connected graph. Then $G$ is a $GSP$ graph if and only if it is loopless and edgebased.
\end{corollary}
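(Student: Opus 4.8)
The plan is to derive Corollary \ref{cor_gsp_edgebased} directly from Theorem \ref{thm_gsp_edgebased} by unwinding the definition of edgebasedness. Recall that a connected graph $G$ (with $|V(G)| \geq 2$ and $|V_L(G)| \geq 2$) is edgebased precisely when the leaf shrink graph $\mathcal{LS}(G)$ produced by Algorithm \ref{alg_edgebased} is a single edge, i.e.\ when $G$ can be reduced to $K_2$ by some sequence of restriction operations 1.--4. (deleting leaves, suppressing degree-2 vertices, deleting copies of parallel edges, and deleting loops). So ``loopless and edgebased'' is exactly the conjunction of condition (i) and condition (ii) of Theorem \ref{thm_gsp_edgebased}, and the corollary is then immediate from that theorem.

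More concretely, I would argue both implications. For the forward direction, suppose $G$ is a $GSP$ graph. By Theorem \ref{thm_gsp_edgebased}, $G$ is loopless and satisfies (ii), i.e.\ $G$ reduces to $K_2$ under operations 1.--4.; but that is precisely the statement that $\mathcal{LS}(G)$ is a single edge, so $G$ is edgebased, and it is loopless by (i). For the converse, suppose $G$ is loopless and edgebased. Then (i) holds by hypothesis, and edgebasedness means $\mathcal{LS}(G) = K_2$, which is exactly condition (ii). Hence by Theorem \ref{thm_gsp_edgebased}, $G$ is a $GSP$ graph.

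The one point that deserves a sentence of care is the mild mismatch between the \emph{existential} phrasing in the definition of $GSP$ graphs / edgebasedness (``there exists a reduction sequence to $K_2$'') and the algorithmic phrasing of $\mathcal{LS}(G)$ (which carries out \emph{one particular} such sequence). Here one should note that Algorithm \ref{alg_edgebased} keeps applying operations 1.--4.\ until no further reduction is possible, and then cleans up any remaining parallel edges or loops on the two surviving vertices; so $\mathcal{LS}(G)$ is a single edge if and only if \emph{some} sequence of restriction operations reduces $G$ to $K_2$ --- a fact that the authors essentially announce will be proved in Section \ref{sec_edgebased} (``the order of restriction operations does not matter''), and which may legitimately be invoked here. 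This is really the only subtlety; there is no genuine combinatorial obstacle, since all the content sits in Theorem \ref{thm_gsp_edgebased}. The proof is therefore a short two-way chain of equivalences rather than an independent argument.
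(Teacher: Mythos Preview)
Your proof is correct and follows the same route as the paper: both implications are read directly off Theorem \ref{thm_gsp_edgebased}. The only difference is in the subtlety you flag: instead of forward-referencing the order-independence result (Theorem \ref{thm-edgebased}), the paper argues directly that any reduction sequence from $G$ to $K_2$ is already a valid run of Algorithm \ref{alg_edgebased}, because once the sequence reaches a two-vertex graph it can only use operations 3 and 4 thereafter (operations 1 or 2 would drop the vertex count below two, making $K_2$ unreachable), which is precisely what the algorithm's second phase permits --- so no appeal to Theorem \ref{thm-edgebased} is needed.
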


\begin{proof}
First, suppose that $G$ is a $GSP$ graph. Then due to Theorem \ref{thm_gsp_edgebased}, $G$ is loopless and can be reduced to $K_2$ by deleting leaves, suppressing degree-2 vertices, deleting copies of parallel edges and deleting loops. Let $\widehat{G}$ be a restricted topological subgraph of $G$ with $\vert V(\widehat{G}) \vert = 2$. Then, either $\widehat{G} = K_2$ or $\widehat{G}$ can be reduced to $K_2$. However, the latter reduction cannot require the deletion of leaves or suppression of degree-2 vertices (as this would reduce the number of vertices to less than 2, in which case $K_2$ could not be a restricted topological subgraph). This implies that $G$ can be reduced to $K_2$ by applying Algorithm \ref{alg_edgebased}, and thus $G$ is edgebased.

Now, suppose that $G$ is loopless and edgebased. The latter implies that $G$ can be reduced to $K_2$ by applying Algorithm \ref{alg_edgebased}. Together with Theorem \ref{thm_gsp_edgebased} and the fact that $G$ is loopless, this implies that $G$ is a $GSP$ graph, which completes the proof.
\end{proof}

Note that $GSP$ graphs can be recognized in linear time (cf. \citet{Wimer1988, Ho1999}). A naive approach would for example be to consider the maximal biconnected components (or blocks) of a graph $G$ (which can be computed in linear time (cf. \citet{Hopcroft1973}) and use the fact that a graph $G$ is a $GSP$ graph if and only if each block of $G$ is an $SP$ graph (cf. Lemma \ref{Lemma_blocks_are_SP}), where the latter can again be recognized in linear time (cf. \citet{Valdes1982}). Due to the one-to-one relationship between $GSP$ graphs and loopless edgebased graphs this implies that edgebasedness can also be tested in linear time. In particular, it can be decided in linear time whether an unrooted phylogenetic network is edgebased or not. As we will later on show that edgebasedness implies treebasedness (cf. Theorem \ref{thm-edgebased}), this is of great relevance, since, in general, the problem of deciding whether a network is treebased or not is an NP-complete problem (cf. \citet{Francis2018}).

However, before analyzing the relationship between edgebasedness and treebasedness, we first state another interesting property of edgebased graphs and $GSP$ graphs, namely that the order of restriction operations does not matter.

\paragraph{Order of restriction operations}

\begin{theorem} \label{thm-edgebased}
Let $G$ be an edgebased graph. Then \emph{all} sequences of restriction operations in concordance with Algorithm \ref{alg_edgebased} lead to $K_2$.
\end{theorem}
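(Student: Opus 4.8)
The plan is to prove a \emph{confluence} (or ``diamond'') property for the reduction relation: if $G$ is edgebased, then starting from $G$ and applying restriction operations (leaf deletion, degree-2 suppression, deletion of a copy of a parallel edge, loop deletion) greedily until none applies — equivalently, until we reach a graph on two vertices — we always terminate at $K_2$. The key observation is that Algorithm~\ref{alg_edgebased} does not get stuck until $|V| = 2$: any graph with $\geq 3$ vertices that is not yet a simple graph on $2$ vertices admits at least one of the four operations unless it is ``irreducible'' in a strong sense, and I will need to argue that an irreducible graph on $\geq 3$ vertices cannot be a restricted topological subgraph of an edgebased graph. So the heart of the argument is: (a) being edgebased is preserved under every restriction operation, and (b) an edgebased graph on $\geq 3$ vertices always admits at least one applicable operation, hence the process cannot halt prematurely, and when it halts at $2$ vertices the clean-up loop (steps 5--11) produces $K_2$.

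First I would establish the \emph{preservation lemma}: if $G$ is edgebased and $G \to G'$ by a single restriction operation, then $G'$ is edgebased. By Corollary~\ref{cor_gsp_edgebased} it is enough to handle the loopless case via $GSP$ graphs and then deal separately with loops, or alternatively argue directly. The cleanest route is to use Lemma~\ref{Lemma_blocks_are_SP}: $G$ is a $GSP$ graph iff every block of $G$ is an $SP$ graph, and $SP$ graphs are exactly the graphs with no $K_4$ topological minor / are closed under taking topological subgraphs within each block. I would check, operation by operation, that none of the four operations can create a block that fails to be $SP$: deleting a leaf removes a trivial ($K_2$) block or leaves the other blocks untouched; suppressing a degree-2 vertex is an $SP$-preserving reduction inside whatever block contains it; deleting a parallel-edge copy likewise; loop deletion only removes a loop (and loops, by Theorem~\ref{thm_gsp_edgebased}, never obstruct $GSP$-ness — they can always be circumvented). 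Thus edgebasedness (resp.\ after accounting for loops, $GSP$-ness of the loopless core) is an invariant of the whole reduction.

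Next, the \emph{progress lemma}: if $G$ is edgebased with $|V(G)| \geq 3$, then at least one of the four operations applies. Suppose not. No leaves, no degree-2 vertices, no parallel edges, no loops means $G$ is a \emph{simple} graph of minimum degree $\geq 3$ on $\geq 3$ vertices. But a simple graph of minimum degree $\geq 3$ contains a $K_4$ topological minor (a classical fact — e.g.\ it contains two vertex-disjoint cycles sharing structure, or more directly it is not series-parallel), so some block of $G$ is not an $SP$ graph, contradicting Lemma~\ref{Lemma_blocks_are_SP} and Corollary~\ref{cor_gsp_edgebased}. Hence the reduction never stalls above $2$ vertices; combined with termination (each operation strictly decreases $|V| + |E|$), every maximal reduction sequence reaches a graph on exactly $2$ vertices. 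Such a graph is $K_2$ possibly with parallel edges and/or loops, and steps 5--11 of Algorithm~\ref{alg_edgebased} delete those down to the single edge $K_2$.

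I expect the main obstacle to be the progress lemma's graph-theoretic core — precisely pinning down that a simple graph with $\delta(G) \geq 3$ on $\geq 3$ vertices cannot be a $GSP$ graph (equivalently has a block that is not $SP$). One must be slightly careful: ``minimum degree $\geq 3$'' is a statement about the whole graph, but $SP$-ness is a per-block condition, so I would first reduce to a leafless, 2-connected situation by noting that if $G$ has no leaves, no degree-2 vertices and no parallel edges then every block is 2-connected, simple, and still has minimum degree $\geq 3$ (a cut vertex could only lower degrees in a block, but a 2-connected simple block on $\geq 3$ vertices in which the ambient min degree is $\geq 3$ still has internal min degree $\geq 2$, and one then invokes that 2-connected simple $SP$ graphs always contain a degree-2 vertex — the standard ``ear decomposition'' fact). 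Making this last step fully rigorous, rather than hand-waving ``$SP$ graphs have a degree-2 vertex,'' is the delicate part; everything else is bookkeeping over the four operations.
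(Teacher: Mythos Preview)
Your approach is correct and takes a genuinely different route from the paper's.

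The paper argues by minimal counterexample together with a local-confluence (``diamond'') step: assuming a minimal edgebased $G$ admitting a one-step reduction $G\to G''$ to a non-edgebased graph, it compares this with the first step $G\to\widetilde G$ of some successful reduction and shows the two operations can be interchanged (via Lemma~\ref{addsth} and a short case analysis on leaf deletion versus degree-$2$ suppression), forcing $G''$ to be edgebased after all. Your argument instead maintains a global invariant --- every block is $SP$, i.e.\ the graph stays $GSP$ --- under each restriction operation (the preservation lemma) and separately shows the process cannot stall above two vertices (the progress lemma). Both routes establish the same key fact, namely that a single restriction operation cannot destroy edgebasedness; yours does so structurally through Lemma~\ref{Lemma_blocks_are_SP} and Corollary~\ref{cor_gsp_edgebased}, whereas the paper's is more self-contained, relying only on Lemmas~\ref{addsth}--\ref{paralleldelete} and Corollary~\ref{wlogsimple}. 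Your approach has the advantage of making the $GSP$ structure do the work and avoiding the operation-swapping case analysis; the paper's approach has the advantage of not needing the block decomposition at all.

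One remark on your ``delicate part'': the progress lemma is easier than you fear, and you do not need $K_4$-minors, ear decompositions, or the statement that $2$-connected simple $SP$ graphs contain a degree-$2$ vertex. Simply observe that if $G$ is simple with $|V(G)|\ge 3$ and $\delta(G)\ge 3$, and $G$ were $GSP$, then by definition some sequence of operations 1--3 reduces it to $K_2$; but the first step cannot be a parallel deletion (there are none) and cannot be a leaf deletion or degree-$2$ suppression ($\delta\ge 3$), so no first step exists --- contradiction. The only genuine care needed in your preservation step is when a degree-$2$ suppression produces a loop (the two neighbours coincide) or a parallel edge; there Corollary~\ref{wlogsimple} (equivalently Theorem~\ref{thm_gsp_edgebased}) lets you pass to the underlying simple graph, exactly as you indicate.
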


\begin{remark}
Theorem \ref{thm-edgebased} implies that the order of restriction operations does not matter as long as it follows the rules of Algorithm \ref{alg_edgebased}, i.e. whenever there are two or more operations possible, it does not matter which one is chosen. Recall, however, that as soon as $\vert V(\mathcal{LS}(G)) \vert = 2$, the choice of restriction operations is limited to deleting copies of parallel edges or deleting loops in order to prevent the number of vertices dropping to less than two.
\end{remark}

The proof of Theorem \ref{thm-edgebased} requires the following lemmas. 

\begin{lemma}\label{addsth} 
Let $G$ be a graph with vertex set $V(G)$ and edge set $E(G)$ such that $K_2$ is a restricted topological subgraph of $G$. Let $G'$ result from $G$ by precisely one of the following operations:

\begin{enumerate}
\item Choose a vertex $u \in V(G)$, introduce a new vertex $x$ and an edge $\{u,x\}$ (`Add leaf $x$').
\item Choose an edge $e \in E(G)$ and subdivide it into two edges by introducing a new degree-2 vertex (`Add a degree-2 vertex').
\item Choose an edge $e \in E(G)$ and add a copy $e'$ of $e$ to $E(G)$.
\item Choose a vertex $u \in V(G)$ and add a loop, i.e. add edge $e=\{u,u\}$ to $E(G)$.
\end{enumerate}
Then, $K_2$ is also a restricted topological subgraph of $G'$.
\end{lemma}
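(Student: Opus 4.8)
The plan is to show that each of the four operations can be ``undone'' within the reduction process, so that a reduction sequence witnessing $K_2 \leq G$ can be extended to one witnessing $K_2 \leq G'$. Concretely, I would argue that in each case there is a single restriction operation (one of operations 1.--4. on page~\pageref{restriction_operations}) that transforms $G'$ back into $G$ (or into a graph that trivially still has $K_2$ as a restricted topological subgraph). Since a restricted topological subgraph of a restricted topological subgraph is again one — the operations compose — appending this undo-step to the front of the given reduction sequence for $G$ yields a reduction of $G'$ to $K_2$, hence $K_2 \leq G'$.

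For the case analysis: in case~1, the new vertex $x$ has degree~1 in $G'$, so deleting the leaf $x$ (operation~1) returns $G'$ to $G$. In case~3, the new edge $e'$ is a parallel copy of $e$, so deleting one copy of the multiple edge (operation~3) returns $G'$ to $G$. In case~4, the new edge $\{u,u\}$ is a loop, so deleting the loop (operation~4) returns $G'$ to $G$. Case~2 is the one requiring a moment's care: the new vertex has degree~2 in $G'$, and suppressing it (operation~2) merges its two incident edges back into a single edge with the same endpoints $u,v$ as the original $e$, thereby recovering $G$ exactly. In all four cases, one restriction operation brings $G'$ to $G$, and then the assumed reduction sequence for $G$ finishes the job.

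The only subtlety worth flagging is that one must be slightly careful in case~2 when $e$ is a loop, say $e=\{u,u\}$: subdividing it produces two parallel edges between $u$ and the new vertex $w$, and suppressing $w$ then restores the loop $\{u,u\}$, so the argument still goes through. (A symmetric remark applies if $e$ already had a parallel copy.) I do not expect any genuine obstacle here; the statement is essentially a bookkeeping lemma asserting that the four ``inverse'' moves are themselves legal restriction operations, and the proof is a short four-way case check together with the observation that the ``restricted topological subgraph'' relation is transitive.
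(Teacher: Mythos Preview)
Your proposal is correct and follows essentially the same approach as the paper: undo the respective operation with a single restriction step to recover $G$ from $G'$, then append the given reduction of $G$ to $K_2$. The paper's proof is in fact even terser than yours, omitting the explicit case analysis and the subtlety you flag in case~2.
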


\begin{proof} We can convert $G'$ into $G$ by undoing the respective operation. Then, because $G$ can be reduced to $K_2$, so can $G'$ (using the conversion to $G$ as a first step and adding the sequence that converts $G$ to $K_2$). 
This completes the proof.
\end{proof}

The proofs of the following two lemmas can be found in the appendix.

\begin{lemma} \label{loopdelete} 
Let $G$ be a connected graph with vertex set $V(G)$ and edge set $E(G)$. Let $G'$ result from $G$ by deleting one loop. Then, $K_2$ is a restricted topological subgraph of $G$ if and only if $K_2$ is a restricted topological subgraph of $G'$.
\end{lemma}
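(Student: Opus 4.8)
The plan is to prove Lemma \ref{loopdelete} by analyzing what role a loop can play in a reduction sequence and showing it is always inessential. Observe first that a loop $e = \{u,u\}$ is incident only to $u$, contributes $2$ to $\deg(u)$, and is never a parallel edge in the sense of operation 3 unless another loop at $u$ exists (in which case operation 3 removes it just as operation 4 would). Crucially, a loop can never be ``used up'' productively: the only operations that interact with a loop are deleting it (operation 4), deleting a copy of it if it is doubled (operation 3), and --- indirectly --- suppressing $u$ if $u$ ever reaches degree $2$, but a vertex carrying a loop has degree at least $2$ from the loop alone, so $u$ has degree $2$ only if the loop is its sole incident edge, and then $u$ is an isolated-plus-loop component, which a connected graph on more than one vertex cannot contain. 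So in any connected $G$ with $|V(G)|\ge 2$, a loop at $u$ is simply ``dead weight'' sitting at $u$.

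For the forward direction, suppose $K_2$ is a restricted topological subgraph of $G$, witnessed by a reduction sequence $\sigma$. I want to produce a reduction sequence for $G'$ (which is $G$ with one loop at $u$ removed). The idea is to run $\sigma$ on $G'$, tracking the correspondence between the two runs. Every operation in $\sigma$ other than ``delete the loop $\{u,u\}$'' can be mirrored verbatim on $G'$, since $G'$ differs from $G$ only by the absence of that loop, and none of operations 1--3, nor suppression, nor deletion of a \emph{different} loop, depends on its presence (in particular, suppression of a degree-2 vertex $v\ne u$ is unaffected, and $u$ itself, as argued above, is never suppressed while still carrying the loop, so if $\sigma$ ever suppresses $u$ it must first have deleted the loop). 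When $\sigma$ reaches the step that deletes $\{u,u\}$, the $G'$-run simply skips it; thereafter the two runs are in lockstep on identical graphs, and $\sigma$ ends at $K_2$, so the $G'$-run does too. Hence $K_2$ is a restricted topological subgraph of $G'$.

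For the converse direction, suppose $K_2$ is a restricted topological subgraph of $G'$. Since $G'$ is obtained from $G$ by deleting a loop, equivalently $G$ is obtained from $G'$ by adding the loop $\{u,u\}$ at the vertex $u$ --- which is exactly operation 4 of Lemma \ref{addsth}. Thus Lemma \ref{addsth} immediately gives that $K_2$ is a restricted topological subgraph of $G$, completing the equivalence.

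I expect the main obstacle to be the bookkeeping in the forward direction: one must argue carefully that the loop's presence never enables an operation in $\sigma$ that would be unavailable in $G'$, and in particular that $\sigma$ never suppresses $u$ while the loop is still attached. The key observation making this clean is the degree count --- a loop contributes $2$ to its vertex's degree, so in a connected graph on at least two vertices the loop-bearing vertex always has degree at least $3$ (it has the loop plus at least one edge to the rest of the graph) and therefore is never a candidate for degree-2 suppression until the loop is gone. Once that point is nailed down, the mirroring argument is routine, and the converse is a one-line appeal to Lemma \ref{addsth}.
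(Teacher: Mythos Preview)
Your proof is correct. For the converse you invoke Lemma~\ref{addsth}, exactly as the paper does. For the forward direction your argument is sound: the key observation that a loop-bearing vertex in a connected graph on at least two vertices has degree at least three (the loop contributes $2$, connectivity contributes at least $1$) really does guarantee that $\sigma$ never suppresses $u$ or deletes it as a leaf before the loop is removed, so every step of $\sigma$ prior to that removal can be mirrored verbatim on $G'$, and once the loop-deletion step is skipped the two runs coincide. One small point worth making explicit is that every intermediate graph in $\sigma$ remains connected with at least two vertices (both properties are preserved by operations 1--4 and the terminal graph is $K_2$), so the degree bound on $u$ holds throughout the run, not just initially.

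Your forward direction is, however, methodologically different from the paper's. The paper argues by minimal counterexample: take an edge-minimal $G$ for which $K_2$ is a restricted topological subgraph of $G$ but not of the loop-deleted $G'$; the first step of $\sigma$ (not the deletion of our loop) yields a smaller $G''$ still reducible to $K_2$; by minimality the loop can be deleted from $G''$ without losing this property; and Lemma~\ref{addsth} lets one undo that first step to recover $G'$ with $K_2$ as a restricted topological subgraph, a contradiction. Your constructive mirroring has the advantage of actually exhibiting a reduction sequence for $G'$. The paper's extremal argument is less explicit but transfers verbatim to Lemma~\ref{paralleldelete} on parallel edges, where a direct mirroring argument would need extra care (a suppression step can absorb one copy of a parallel edge into a loop, so the ``skip one step'' bookkeeping is messier there).
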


\begin{lemma} \label{paralleldelete} 
Let $G$ be a connected graph with vertex set $V(G)$ and edge set $E(G)$. Let $G'$ result from $G$ by deleting one copy of a parallel edge. Then, $K_2$ is a restricted topological subgraph of $G$ if and only if $K_2$ is a restricted topological subgraph of $G'$.
\end{lemma}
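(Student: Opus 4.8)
The plan is to prove Lemma~\ref{paralleldelete} by establishing both implications separately, using the structural machinery already in place.

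\medskip
\noindent\textbf{The ``only if'' direction.}
Suppose $K_2$ is a restricted topological subgraph of $G$, and let $G'$ arise from $G$ by deleting one copy $e'$ of a parallel edge $e=\{u,v\}$ (so $e$ remains in $G'$). I would fix a sequence $\sigma$ of restriction operations that reduces $G$ to $K_2$ and track what happens to the two parallel copies $e$ and $e'$ along $\sigma$. The key observation is that the only operation in the list 1.--4. that can delete $e'$ is operation~3 (deleting a copy of a multiple edge), and at the moment this happens $e$ and $e'$ are a multiple edge; conversely $e'$ may also get ``merged'' with $e$ into a loop if $u$ (or $v$) is suppressed while both copies are still present (suppression of $u$ in $G$ turns $\{u,v\},\{u,v\}$ into $\{w,v\},\{w,v\}$ or, if the other neighbour of $u$ is $v$ itself, into a loop at $v$). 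In every such case the effect of $\sigma$ on $G$ from that point on can be mimicked on $G'$: if $\sigma$ deletes $e'$ as a parallel copy, then in $G'$ we simply skip that step; if $\sigma$ suppresses an endpoint of $e$ while both copies are present, then in $G'$ the same suppression is still a legal operation (it produces one fewer parallel copy or loop), and we continue. Making this precise is really a small induction on the length of~$\sigma$, showing that after each prefix the graph obtained from $G'$ is either equal to, or obtained by one parallel-copy / loop deletion from, the graph obtained from $G$. Alternatively, and more slickly, one can argue: $G'$ is obtained from $K_2$ (a restricted topological subgraph of $G$) by first running the $G\to K_2$ sequence in reverse up to the point where $e'$ is created and then... — but this reverse argument is delicate, so I prefer the forward tracking argument, or simply the appeal to Lemma~\ref{addsth}(3) after observing that $G$ is obtained from $G'$ by adding a copy of the edge $e$, which is exactly operation~3 of that lemma; hence $K_2$ a restricted topological subgraph of $G$ does \emph{not} immediately follow from that, so the forward tracking on $\sigma$ is the route to take here.

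\medskip
\noindent\textbf{The ``if'' direction.}
This is the easy direction: $G$ is obtained from $G'$ by choosing the edge $e\in E(G')$ and adding a copy $e'$ of $e$, which is precisely operation~3 in the statement of Lemma~\ref{addsth}. Hence if $K_2$ is a restricted topological subgraph of $G'$, then by Lemma~\ref{addsth} it is also a restricted topological subgraph of $G$. (One should double-check the hypothesis of Lemma~\ref{addsth} that $G'$ is a graph of the allowed type and that $K_2$ being a restricted topological subgraph of $G'$ is exactly what we assume — both hold.)

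\medskip
\noindent\textbf{Expected main obstacle.}
The routine part is the ``if'' direction and the bookkeeping in the ``only if'' direction; the subtle point is handling the case in the ``only if'' direction where the reduction sequence $\sigma$ for $G$ suppresses an endpoint of the parallel pair $\{e,e'\}$ before ever deleting $e'$, thereby converting the parallel pair into \emph{two} parallel edges on a new vertex set, or into a \emph{loop}. One must check that the corresponding step on $G'$ is still legal and that the invariant ``the $G'$-graph differs from the $G$-graph by at most one parallel-copy-or-loop deletion'' is preserved — in particular using Lemma~\ref{loopdelete} to absorb the loop case. Once this invariant is set up correctly, the conclusion is immediate: when $\sigma$ has reduced $G$ to $K_2$, the parallel-tracked copy has reduced $G'$ either to $K_2$ as well or to a graph obtained from $K_2$ by one parallel-copy-or-loop deletion, which is impossible on two vertices unless that graph is also reducible to $K_2$ via Algorithm~\ref{alg_edgebased}; in all cases $K_2$ is a restricted topological subgraph of $G'$. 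I would close the proof by noting that combining both directions with Lemma~\ref{loopdelete} and Lemma~\ref{addsth} yields exactly the asserted equivalence.
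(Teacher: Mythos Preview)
Your ``if'' direction is fine and matches the paper exactly (Lemma~\ref{addsth}, operation~3).

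For the ``only if'' direction you propose a forward-tracking argument along a fixed reduction sequence $\sigma$, whereas the paper argues by a minimal counterexample on $|E(G)|$: take $G$ edge-minimal with the property that deleting the parallel copy kills $K_2$; the first step of $\sigma$ is not the deletion of that copy, so it produces a smaller $G''$; by minimality one may delete the copy in $G''$ and keep $K_2$, then undo the first step via Lemma~\ref{addsth} to reach $G'$ --- contradiction. The one subtle case (the first step suppresses an endpoint of the parallel pair, turning it into a loop) is absorbed by Lemma~\ref{loopdelete}. Your approach can be made to work and is a legitimate alternative, but as written it contains two concrete errors you would need to fix.

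First, the claim ``in $G'$ the same suppression is still a legal operation'' is false. If $\sigma$ suppresses an endpoint $u$ of the parallel pair $e,e'=\{u,v\}$, then $\deg_G(u)=2$ forces \emph{both} edges at $u$ to be $e$ and $e'$; hence in $G'=G-e'$ the vertex $u$ has degree~$1$. You cannot suppress it --- you must delete it as a leaf. The correct invariant step is therefore ``suppress in $H_i$ / delete the corresponding leaf in $H_i'$'', after which the extra edge becomes a loop at $v$, and the invariant reads $H_{i+1}'=H_{i+1}-\{v,v\}$. Second, the parenthetical ``suppression of $u$ turns $\{u,v\},\{u,v\}$ into $\{w,v\},\{w,v\}$'' cannot occur: a degree-$2$ vertex incident to two parallel edges has no ``other neighbour $w$'', and its suppression yields exactly one loop at $v$, never a new parallel pair.

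With these corrections your invariant (``$H_i'=H_i$ or $H_i'=H_i-f$ for a parallel copy or loop $f$'') does persist through every step of $\sigma$, and since $K_2$ has neither parallels nor loops the terminal case forces $H_n'=K_2$. That is a valid proof. The paper's minimal-counterexample route is a bit tidier because it analyses only a single step rather than maintaining an invariant along the whole sequence, but both hinge on the same two ingredients: Lemma~\ref{addsth} to ``undo'' a step, and Lemma~\ref{loopdelete} to handle the loop created when an endpoint of the parallel pair is suppressed.
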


The last two lemmas immediately lead to the following corollary, which plays a fundamental role in the proof of Theorem \ref{thm-edgebased}.

\begin{corollary}\label{wlogsimple}
Let $G$ be a graph and let $G'$ be its underlying simple graph. Then, $K_2$ is a restricted topological subgraph of $G$ if and only if $K_2$ is a restricted topological subgraph of $G'$.
\end{corollary}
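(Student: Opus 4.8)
The plan is to deduce Corollary~\ref{wlogsimple} directly from Lemmas~\ref{loopdelete} and~\ref{paralleldelete} by an induction on the number of ``redundant'' edges of $G$, i.e.\ on the quantity $k(G) := \vert E(G) \vert - \vert E(G') \vert$, where $G'$ denotes the underlying simple graph of $G$. Recall that $G'$ is obtained from $G$ by repeatedly deleting loops (operation~4) and deleting one copy of a parallel edge (operation~3) until neither is possible; each such step decreases $k$ by exactly one, and $k(G) = 0$ precisely when $G$ is already simple.

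First I would handle the base case $k(G) = 0$: then $G = G'$ and there is nothing to prove. For the inductive step, assume the statement holds for all connected graphs $H$ with $k(H) < k(G)$, and suppose $k(G) \geq 1$. Then $G$ contains either a loop or a pair of parallel edges. In the first case, let $G_1$ result from $G$ by deleting one loop; in the second case, let $G_1$ result from $G$ by deleting one copy of a parallel edge. In either case $G_1$ is still connected (deleting a loop never disconnects a graph, and deleting one copy of a parallel edge leaves the two endpoints joined by the remaining copy, hence does not disconnect), and $G_1$ has the same underlying simple graph $G'$ as $G$, so $k(G_1) = k(G) - 1 < k(G)$. By the induction hypothesis, $K_2$ is a restricted topological subgraph of $G_1$ if and only if $K_2$ is a restricted topological subgraph of $G'$. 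On the other hand, Lemma~\ref{loopdelete} (in the loop case) or Lemma~\ref{paralleldelete} (in the parallel-edge case) gives that $K_2$ is a restricted topological subgraph of $G$ if and only if $K_2$ is a restricted topological subgraph of $G_1$. Chaining these two equivalences yields that $K_2$ is a restricted topological subgraph of $G$ if and only if $K_2$ is a restricted topological subgraph of $G'$, completing the induction.

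I do not expect any real obstacle here, since the two lemmas do all the heavy lifting; the only points that need a word of care are (a) checking that connectedness is preserved at each deletion step, so that Lemmas~\ref{loopdelete} and~\ref{paralleldelete} actually apply along the whole chain, and (b) making sure the induction parameter $k(G)$ is well-defined, i.e.\ that the underlying simple graph $G'$ does not depend on the order in which loops and redundant parallel-edge copies are removed (which is clear, as $G'$ is just the graph on $V(G)$ with an edge $\{u,v\}$ for each pair of adjacent distinct vertices of $G$). With these remarks, the corollary follows immediately.
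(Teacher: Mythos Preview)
Your proof is correct and takes essentially the same approach as the paper: both derive the corollary by repeatedly peeling off loops and parallel-edge copies and invoking Lemmas~\ref{loopdelete} and~\ref{paralleldelete} at each step. The only cosmetic difference is that the paper handles the two implications separately (using Lemma~\ref{addsth} for the direction $G' \Rightarrow G$ and Lemmas~\ref{loopdelete}, \ref{paralleldelete} for $G \Rightarrow G'$), whereas you exploit that the latter two lemmas are already biconditional and run a single clean induction on $k(G)=\lvert E(G)\rvert-\lvert E(G')\rvert$; your explicit check that connectedness is preserved along the chain is a nice touch the paper leaves implicit.
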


\begin{proof} $G'$ is like $G$ but without parallel edges and without loops. Now if $G'$ has $K_2$ as a restricted topological subgraph, by repeatedly applying Operations 3 and 4 of Lemma \ref{addsth}, so does $G$. On the other hand, if $G$ has $K_2$ as a restricted topological subgraph, by repeatedly applying Lemmas \ref{loopdelete} and \ref{paralleldelete}, so does $G'$. This completes the proof.
\end{proof}

We are finally in a position to prove Theorem \ref{thm-edgebased}. 

\begin{proof}[Theorem \ref{thm-edgebased}]
Let $G$ be an edgebased graph and assume that there exists a sequence $\sigma$ of restriction operations in concordance with Algorithm \ref{alg_edgebased} that does not lead to $K_2$. This implies that $G$ has $K_2$ as a restricted topological subgraph (since it is edgebased), but it also has some restricted topological subgraph that does \emph{not} have $K_2$ as a restricted topological subgraph (since $\sigma$ does not lead to $K_2$).

We take a minimal graph with this property in terms of the number of vertices. So we assume that $G$ has $K_2$ as a restricted topological subgraph, but that there exists a restricted topological subgraph $G'$ of $G$ which does not have $K_2$ as a restricted topological subgraph, and that there is no other graph with this property with fewer vertices than $G$. Due to Corollary \ref{wlogsimple}, we may assume that $G$ has no loops and no parallel edges.

We now consider the reduction of $G$ to $G'$. As $G$ has no parallel edges and no loops, the first step on the way from $G$ to $G'$ must be the deletion of a leaf or the suppression of a degree-2 vertex. Moreover, the resulting graph $G''$ after one step must already be such that $K_2$ is not a restricted topological subgraph: Otherwise, $G''$ would also have $G'$ as a restricted topological subgraph (as it is on the path from $G$ to $G'$), it would have $K_2$ as a restricted topological subgraph and it would have strictly fewer vertices than $G$, which would contradict the minimality of $G$. 

Let us now consider $G''$. So $G''$ can be reached from $G$ by deleting a leaf $x$ or suppressing a vertex $u$ of degree 2, and while $K_2$ is a restricted topological subgraph of $G$, it is not of $G''$. Moreover, we consider $\widetilde{G}$, which shall be a graph that can be reached from $G$ within one step (i.e. after performing one restriction operation) on the way from $G$ to $K_2$. As $\widetilde{G}$ has $K_2$ as a restricted topological subgraph, and as $\widetilde{G}$ has strictly fewer vertices than $G$, we know that \emph{all} restricted topological subgraphs of $\widetilde{G}$ have $K_2$ as a restricted topological subgraph. 

We now consider the case that a leaf $x$ has been deleted on the way from $G$ to $G''$. Note that $x$ is also present in $\widetilde{G}$, as by any other restriction operation than the deletion of $x$, $x$ cannot be affected (as $G''$ and $\widetilde{G}$ cannot be equal and both differ from $G$ by the removal of precisely one vertex). So now we delete $x$ from $\widetilde{G}$ to obtain a graph $\widehat{G}$, which has $K_2$ as a restricted topological subgraph. By Lemma \ref{addsth}, we can undo the step that has been done on the way from $G$ to $\widetilde{G}$, i.e. we can re-add the leaf that has been deleted or the suppressed degree-2 vertex to $\widehat{G}$, and the resulting graph -- which, by the way, is precisely $G''$ -- has $K_2$ as a restricted topological subgraph. This is a contradiction to the construction of $G''$. 

If, on the other hand, a degree-2 vertex $u$ has been suppressed on the way from $G$ to $G''$, then either $u$ is still present as a degree-2 vertex in $\widetilde{G}$, or $u$ is a leaf in $\widetilde{G}$ (if a leaf adjacent to $u$ has been deleted). In the first case, i.e. if $u$ still has degree 2 in $\widetilde{G}$, we can suppress $u$ in order to obtain a graph $\widehat{G}$, which has $K_2$ as a restricted topological subgraph. By Lemma \ref{addsth}, we can undo the step that has been done on the way from $G$ to $\widetilde{G}$, i.e. we can re-add the leaf that has been deleted or the suppressed degree-2 vertex to $\widehat{G}$, and the resulting graph -- which, by the way, is precisely $G''$ -- has $K_2$ as a restricted topological subgraph. This is a contradiction to the construction of $G''$. 

So the only remaining case is the case that a degree-2 vertex $u$ has been suppressed on the way from $G$ to $G''$, and $u$ is a leaf in $\widetilde{G}$. However, this can only be the case if a leaf $x$ adjacent to $u$ has been deleted on the way from $G$ to $\widetilde{G}$. But if $u$ is a degree-2 vertex adjacent to a leaf, then deleting the leaf and its incident edge is equivalent to suppressing $u$, i.e. the resulting graphs $G''$ and $\widetilde{G}$ are isomorphic. This is illustrated by Figure \ref{Fig_SuppressDelete}. Thus, as $K_2$ is a restricted topological subgraph of $\widetilde{G}$, it is also of $G''$, but this contradicts the construction of $G''$.  

\begin{figure}[htbp]
\centering
\includegraphics[scale=0.2]{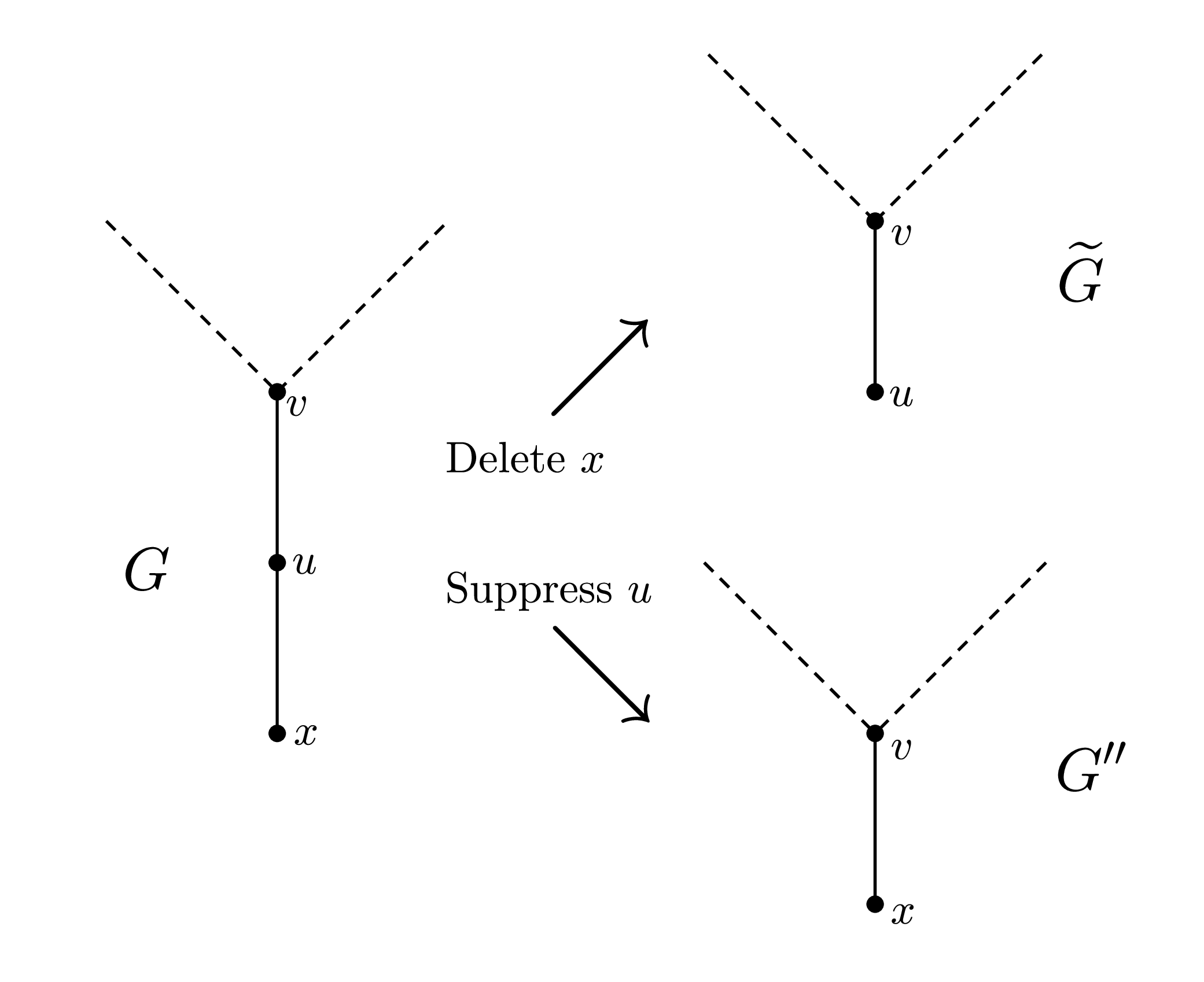} 
\caption{Two isomorphic graphs $\widetilde{G}$ and $G''$ that are constructed from $G$ by either deleting leaf $x$ or suppressing vertex $u$. }
\label{Fig_SuppressDelete}
\end{figure}

Therefore, all cases lead to a contradiction, which shows that the assumption was wrong. In particular, all sequences of restriction operations in concordance with Algorithm \ref{alg_edgebased} eventually lead to $K_2$. This completes the proof.
\end{proof}

\paragraph{Edgebasedness implies treebasedness}
We now state the last main theorem of this section, which shows that all edgebased networks (cf. Definition \ref{def_edgebased}) are also treebased.

\begin{theorem} \label{edgebased_implies_treebased}
Let $N^u$ be a proper phylogenetic network on leaf set $X$ with $\vert X \vert \geq 2$. If $N^u$ is edgebased, it is also treebased.
\end{theorem}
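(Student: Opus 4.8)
The plan is to reduce the statement to a fact about the blocks of $N^u$. Since $N^u$ is a phylogenetic network it is a simple, hence loopless, graph, so Corollary~\ref{cor_gsp_edgebased} gives that $N^u$ is a $GSP$ graph, and Lemma~\ref{Lemma_blocks_are_SP} then says that every block of $N^u$ is an $SP$ graph. I want to exhibit a support tree, i.e.\ a spanning tree $T$ of $N^u$ whose set of degree-$1$ vertices is exactly $X$, by choosing a spanning tree $T_B$ inside each block $B$ of $N^u$ and letting $T=\bigcup_B T_B$: because the block--cut tree of $N^u$ is a tree, any such union is automatically a spanning tree of $N^u$, so the only thing that needs to be arranged is the degree condition at the inner vertices.

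The key observation is that most vertices are taken care of automatically. If $v$ is a cut vertex of $N^u$ it lies in at least two blocks, and since each block has at least two vertices and each $T_B$ is connected, $v$ has degree $\geq 1$ in every such $T_B$; hence $\deg_T(v)\geq 2$ no matter how the $T_B$ are chosen. Every leaf $x\in X$ has degree $1$ in $N^u$, lies in a single (one-edge) block, and has $\deg_T(x)=1$. So the only vertices that could wrongly end up with tree-degree $1$ are inner, \emph{non}-cut vertices $w$; such a $w$ lies in a unique block $B$, which has at least three vertices, and since $N^u$ has no degree-$2$ vertex we get $\deg_B(w)=\deg_{N^u}(w)\geq 3$. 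Moreover, properness forces $\vert Z_B\vert\geq 2$ for every block $B$ with at least three vertices, where $Z_B$ denotes the set of cut vertices of $N^u$ lying in $B$: a block with no cut vertex would be all of $N^u$, making $N^u$ $2$-connected and hence leaf-free (contradicting $\vert X\vert\geq 2$), while a block $B$ with at least three vertices having a single cut vertex $v$ would be a pendant node of the block--cut tree, so $V(B)\setminus\{v\}$ would be a connected component of $N^u-v$ containing only inner, non-cut (hence degree-$\geq 3$) vertices, i.e.\ no leaf -- contradicting properness. Note finally that every degree-$2$ vertex of $B$ lies in $Z_B$, again because $N^u$ has no degree-$2$ vertices. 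Hence it suffices to find, for every block $B$ with at least three vertices, a spanning tree of $B$ whose degree-$1$ vertices all lie in $Z_B$.

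The remaining step, which I expect to be the technical heart of the proof, is the following lemma on series--parallel graphs: \emph{if $B$ is a $2$-connected $SP$ graph and $Z\subseteq V(B)$ with $\vert Z\vert\geq 2$ is such that every vertex of $V(B)\setminus Z$ has degree at least $3$ in $B$, then $B$ has a spanning tree all of whose degree-$1$ vertices lie in $Z$.} I would prove this by induction on $\vert E(B)\vert$ using the recursive structure of $2$-connected $SP$ graphs (equivalently, the series and parallel compositions of Definition~\ref{Def_GSP}). The base case is a cycle, where any spanning path works because all vertices have degree $2$ and hence belong to $Z$. In the series case one contracts a degree-$2$ vertex (which necessarily lies in $Z$), applies the induction hypothesis, and re-expands that vertex as an interior vertex of a path or as a pendant. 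In the parallel case $B$ splits along a separating pair $\{s,t\}$ into $SP$ pieces $B_1,\dots,B_m$; one applies the induction hypothesis to each $B_i$ with the enlarged set $(Z\cap V(B_i))\cup\{s,t\}$ and then assembles a spanning tree of $B$ by keeping a full spanning tree inside one piece and splitting each of the other trees along its $s$--$t$ path. The difficulty -- and the main obstacle -- is to carry out these re-expansions and recombinations so that no \emph{forbidden} vertex (one of degree $\geq 3$ not in $Z$) ever acquires tree-degree $1$; this is precisely where the degree-$\geq 3$ hypothesis and the absence of a $K_4$-minor in $SP$ graphs enter (for general $2$-connected graphs the lemma already fails, e.g.\ $K_{3,3}$ with $Z$ a pair of vertices on one side), and it may help to strengthen the inductive statement by also prescribing one vertex of $Z$ to be a tree-leaf, so that in the parallel step the $s$--$t$ path can be broken at a controlled edge.

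Granting the lemma, applying it with $Z=Z_B$ to every block with at least three vertices, taking $T_B=B$ for every one-edge block, and gluing all the $T_B$ along the cut vertices of $N^u$ yields a spanning tree $T$ in which every inner vertex has degree $\geq 2$ and every $x\in X$ has degree $1$; thus the leaf set of $T$ is exactly $X$, so $T$ is a support tree and $N^u$ is treebased. (A more hands-on alternative is to induct on $\vert V(N^u)\vert$: delete a leaf, if necessary suppress its attachment point and clean up any resulting multiedges or loops to obtain a smaller proper network that is still edgebased because it is a restricted topological subgraph of $N^u$, invoke the induction hypothesis, and lift the support tree back through the reduction using the reinstated leaf edge to repair the attachment point's tree-degree; the obstacle there is essentially the same, namely controlling tree-degrees while un-suppressing vertices through a cascade of cleanup steps, together with the base case $\vert X\vert=2$.)
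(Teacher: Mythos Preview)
Your global strategy is exactly the paper's: decompose $N^u$ into blocks via Corollary~\ref{cor_gsp_edgebased} and Lemma~\ref{Lemma_blocks_are_SP}, pick a good spanning tree in each non-trivial block, and glue along cut vertices. The paper makes the same observations you do about cut vertices automatically acquiring tree-degree $\geq 2$ and about the only dangerous vertices being inner non-cut vertices of a single block. The paper's key lemma (Lemma~\ref{Lemma_SP_SpanningTrees}) is precisely the special case $Z=\{\text{degree-}2\text{ vertices of }B\}$ of the lemma you formulate; since, as you note, every degree-$2$ vertex of $B$ is a cut vertex of $N^u$, this special case already suffices for the application, and your more general version follows from it immediately.

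Where you diverge is in the proof of that lemma. You propose an induction along the series/parallel decomposition of a $2$-connected $SP$ graph, which is natural but, as you correctly flag, runs into a real technical obstacle in the parallel step: after splitting along a $2$-cut $\{s,t\}$ and solving each piece with $s,t$ added to $Z$, recombining the trees so that \emph{neither} $s$ nor $t$ becomes a forbidden leaf, while breaking exactly the right $s$--$t$ paths to avoid cycles, requires a strengthened inductive hypothesis (something like: one can additionally prescribe which of the two terminals is a leaf, or that both are non-leaves), and working this out carefully is not short. The paper sidesteps this by taking a completely different route to the lemma: it invokes a structural classification of simple biconnected $SP$ graphs due to Song (2007), which says that any such graph on $\geq 5$ vertices contains one of ten specific local configurations; the paper then handles each configuration by reducing to a smaller $SP$ graph and extending the spanning tree back, with the base cases $n\leq 6$ checked by an explicit catalog. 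So your framework is correct and matches the paper, but your sketch of the lemma's proof is genuinely a different (and as yet incomplete) argument; if you want a fully worked proof, either strengthen your inductive statement as you suggest, or appeal to the paper's Lemma~\ref{Lemma_SP_SpanningTrees} directly, whose case analysis is long but mechanical.

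One small remark: you invoke properness to ensure $\lvert Z_B\rvert\geq 2$ for every non-trivial block. This is fine, but note that the paper does not actually need properness at this step, because Lemma~\ref{Lemma_SP_SpanningTrees} already guarantees a spanning tree with leaves among the degree-$2$ vertices of $B$, and any simple biconnected $SP$ graph on $\geq 3$ vertices automatically has at least two such vertices.
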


Note that the converse does not hold: Figure \ref{leafshrinking2} for example shows a treebased network $N^u$ that is not edgebased. \\

In order to prove Theorem \ref{edgebased_implies_treebased}, we will exploit the one-to-one relationship between loopless edgebased graphs and $GSP$ graphs (cf. Corollary \ref{cor_gsp_edgebased}). Moreover, we will use the fact that a graph is a $GSP$ graph if and only if its blocks are $SP$ graphs (cf. Lemma \ref{Lemma_blocks_are_SP}).

The strategy for the proof of Theorem \ref{edgebased_implies_treebased} is thus to decompose an edgebased network $N^u$ into its blocks (which are $SP$ graphs due to Lemma \ref{Lemma_blocks_are_SP}, since $N^u$ is loopless by definition and thus a GSP graph by Corollary \ref{cor_gsp_edgebased}), infer a certain spanning tree for each block and use these spanning trees to build a support tree for $N^u$. This requires the following additional technical lemma, the proof of which is given in the Appendix.

\begin{lemma} \label{Lemma_SP_SpanningTrees}
Let $G=(V,E)$ be a simple and biconnected $SP$ graph with at least three vertices. Then, there exists a spanning tree $T$ in $G$ whose leaves correspond to degree-2 vertices of $G$. In particular, no vertex $v \in V$ of $G$ with $deg(v) > 2$ is a leaf in $T$. 
\end{lemma}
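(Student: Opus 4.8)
The plan is to proceed by induction on the structure of $G$ as an $SP$ graph, using the recursive definition of $SP$ graphs via series and parallel compositions (Definition~\ref{Def_GSP}, restricted to operations (a) and (b)). Throughout, I will keep track not just of the spanning tree but of which of the two terminals (if any) the tree needs as a leaf; the natural inductive claim is therefore a strengthening of the statement, roughly: \emph{for a simple biconnected $SP$ graph $G$ with terminals $u,v$ and at least three vertices, there is a spanning tree $T$ whose leaves are all degree-$2$ vertices of $G$, and moreover one can additionally prescribe that at least one of $u,v$ is \textbf{not} a leaf of $T$} (this extra freedom is what makes the series step go through). Since $G$ is biconnected with $\geq 3$ vertices, it is not primitive, so it arises as a series or parallel composition of two smaller $SP$ graphs $G_1,G_2$; the subtlety is that the $G_i$ need not themselves be biconnected or have $\geq 3$ vertices, so I will treat the small/degenerate cases ($G_i=K_2$, i.e. a single edge between its terminals) separately as base cases rather than appealing to the inductive hypothesis on them.

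The key steps, in order: (1) Reduce to the recursive decomposition: write $G$ as a series or parallel composition of $SP$ graphs $G_1,G_2$ with terminals $u_i,v_i$, where by biconnectedness we may assume neither $G_i$ is a single edge in the \emph{parallel} case only after checking that two parallel single edges would create a multi-edge (excluded since $G$ is simple) — so in the parallel case at least one $G_i$ has $\geq 3$ vertices, and in the series case the identified vertex $v_1=u_2$ has degree $\geq 2$ and lies in the interior of $G$. (2) Handle $G_i=K_2$: a single edge contributes its edge to $T$ and its non-terminal-side endpoint, if any, as a degree-$2$ vertex of $G$; this is where one uses that in a series composition an endpoint internal to $G$ is fine to be a leaf of $T$ precisely when it has degree $2$ in $G$. (3) Parallel composition: take spanning trees $T_1,T_2$ of $G_1,G_2$ from the inductive hypothesis, chosen so that $T_2$ does \emph{not} make $u_2(=u_1)$ a leaf (possible by the strengthened hypothesis, or trivially if $G_2=K_2$); glue them along the identified terminals. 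The vertices $u_1=u_2$ and $v_1=v_2$ then have degree $\geq 3$ in $G$ (they receive edges from both sides — this is exactly where simplicity/biconnectedness is used to ensure they are not degree-$2$ vertices we must worry about), and $T_1\cup T_2$ is a tree (no cycle is created because we only share the two endpoints and $T_2$ attaches $u_2$ via a single edge path into $G_2$... here one must argue acyclicity carefully). (4) Series composition: glue $T_1,T_2$ at $w:=v_1=u_2$; the union is automatically a tree since the two pieces share only the cut vertex $w$. A leaf of $T_1\cup T_2$ is a leaf of some $T_i$, hence a degree-$2$ vertex of $G_i$ — and since $w$ is the \emph{only} vertex identified, every vertex other than $w$ keeps its $G_i$-degree in $G$, while $w$ has degree $\geq 2$ in $G$ and we ensure (via the strengthened hypothesis applied to $G_1$ with terminal $v_1$, or via $G_1=K_2$) that $w$ is not a leaf of the glued tree. (5) Verify the strengthened hypothesis is preserved: in the parallel case $u_1=u_2$ is non-leaf by the choice in (3); in the series case $u_1$ (a terminal of $G_1$) inherits its status from $T_1$.

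The main obstacle I anticipate is the bookkeeping in the parallel-composition step: I must simultaneously guarantee (i) the glued graph is a tree — i.e. no cycle forms — which forces the two trees to meet only at the two shared terminals \emph{and} at least one of them to contribute no "extra" path between them, and (ii) the two shared terminals $u,v$ genuinely have degree $>2$ in $G$ so they are exempt from the "leaves are degree-$2$" requirement. Point (ii) can fail if, say, $v$ has degree $1$ in $G_1$ and degree $1$ in $G_2$, giving degree $2$ in $G$ — but then $v$ would be a degree-$2$ vertex and allowed to be a leaf, so this is actually consistent, though it shows the case analysis must be done with care rather than glossed over. A secondary subtlety is the reduction itself: a biconnected $SP$ graph can decompose in several ways, and one must pick a decomposition whose two parts are "large enough" or else fall into an explicitly enumerated small-$G$ base case (e.g. $G$ a triangle, or $G$ a cycle $C_n$, or two vertices joined by internally disjoint paths); I would dispatch the cycle case $C_n$ ($n\geq 3$) directly — delete one edge to get a path, whose two endpoints are among the two degree-$2$ vertices created, all internal vertices having degree $2$ in $C_n$ — as the cleanest base case, and let the recursion handle everything that is not a single cycle.
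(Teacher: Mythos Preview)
Your approach via the recursive series/parallel decomposition is natural but, as written, has two genuine gaps.

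First, the parallel-composition step does not work. If $T_1$ and $T_2$ are spanning trees of $G_1$ and $G_2$ respectively, then each $T_i$ is connected and hence contains a $u$--$v$ path, where $u,v$ are the two identified terminals. Gluing $T_1$ and $T_2$ at \emph{both} terminals therefore always produces a cycle; your remark that ``$T_2$ attaches $u_2$ via a single edge path into $G_2$'' does not prevent this, and your own hedge (``here one must argue acyclicity carefully'') is precisely the point of failure. To repair it you would need, on the $G_2$ side, a two-component spanning forest separating $u$ from $v$ rather than a spanning tree --- but then you must control which vertices of that forest are leaves, and your strengthened hypothesis gives no handle on that.

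Second, even setting acyclicity aside, the pieces $G_i$ in a parallel decomposition of a biconnected $SP$ graph need not themselves be biconnected: $G_i$ can be a path of any length, or more generally any $SP$ graph whose outermost operation is a series composition. Your inductive hypothesis is stated only for \emph{biconnected} $SP$ graphs with at least three vertices, so it does not apply to such $G_i$; treating the single-edge case $G_i=K_2$ separately is not enough. (Relatedly, since a series composition always creates a cut vertex, a biconnected $G$ with $\geq 3$ vertices can only arise from a parallel composition at the top level --- so your step (4) never fires at the outermost layer, and the series bookkeeping must instead be pushed into a stronger hypothesis on non-biconnected two-terminal $SP$ graphs.)

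For comparison, the paper sidesteps the composition machinery entirely. It proves the lemma by induction on $|V|$, invoking a structural result of Song (2007) which says that every simple biconnected $SP$ graph on at least five vertices contains one of ten explicit local configurations built from degree-$2$ and degree-$3$ vertices. In each of the ten cases one suppresses one or two well-chosen degree-$2$ vertices (and deletes resulting parallel edges) to obtain a smaller simple biconnected $SP$ graph, applies the inductive hypothesis to get a valid spanning tree $T'$, and then extends $T'$ back to $G$ by reattaching the suppressed vertices as leaves (doing a small case split on which of the neighbouring vertices happen to be leaves of $T'$). The base cases $3\leq |V|\leq 6$ are handled by an exhaustive catalogue.
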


\begin{remark} \label{remark_valid}
In the following, given a simple and biconnected $SP$ graph $G$ with at least three vertices, we call a spanning tree $T$ having only degree-2 vertices of $G$ as leaves a \emph{valid} spanning tree. Additionally, given the trivial $SP$ graph $K_2$, we also call a spanning tree for $K_2$ (which is $K_2$ itself) a valid spanning tree.
\end{remark}

With this we are now in a position to prove Theorem \ref{edgebased_implies_treebased}.

\begin{proof}[Proof of Theorem \ref{edgebased_implies_treebased}]
Let $N^u$ be a proper phylogenetic network on leaf set $X$ with $\vert X \vert \geq 2$. If $\vert V(N^u) \vert = \vert X \vert = 2$ and $N^u$ consists of a single edge, $N^u$ is trivially treebased. Thus, assume that $\vert V(N^u) \vert \geq 3$.

As $N^u$ is edgebased and loopless it is a $GSP$ graph by Corollary \ref{cor_gsp_edgebased} and we can decompose it into its blocks, i.e. into its maximal biconnected subgraphs (cf. Figure \ref{Fig_Blocks}). Now, due to Lemma \ref{Lemma_blocks_are_SP} these blocks are all $SP$ graphs. To be precise, each block of $N^u$ is either a trivial $SP$ graph, i.e. a single edge corresponding to a cut edge of $N^u$, or it is a simple and biconnected $SP$ graph with at least three vertices.

We now consider all blocks $\mathcal{B}$ of $N^u$ and construct a support tree $T$ for $N^u$ as follows:

If $\mathcal{B}=\{u,v\}$ is a single edge (i.e. $\mathcal{B}$ is a cut edge of $N^u$), we add this edge to $T$. Else, if $\mathcal{B}$ is a simple and biconnected $SP$ graph with at least 3 vertices we add all edges of a valid spanning tree $T_{\mathcal{B}}$ of $\mathcal{B}$, i.e. of a spanning tree of $\mathcal{B}$ having only degree-2 vertices of $\mathcal{B}$, to $T$ (which must exist due to Lemma \ref{Lemma_SP_SpanningTrees}).

Then, $T$ is a support tree for $N^u$, since:
	\begin{itemize}
	\item $T$ covers all vertices of $N^u$ (since it covers all vertices of each block $\mathcal{B}$ of $N^u$, respectively).
	\item $T$ is a tree, i.e. $T$ is connected and acyclic. To see this, note that any two blocks $\mathcal{B}_1$ and $\mathcal{B}_2$ of $N^u$ share at most one common vertex $v$ and if they share such a vertex $v$, $v$ is a cut vertex of $N^u$. Let $T_{\mathcal{B}_1}$ be a valid spanning tree of $\mathcal{B}_1$ and let $T_{\mathcal{B}_2}$ be a valid spanning tree for $\mathcal{B}_2$ (where both $T_{\mathcal{B}_1}$ and $T_{\mathcal{B}_2}$ are potentially single edges) and suppose that $\mathcal{B}_1$ and $\mathcal{B}_2$ share a common vertex $v$. Then identifying the copy of $v$ in $T_{\mathcal{B}_1}$ with the copy of $v$ in $T_{\mathcal{B}_2}$ yields a spanning tree for $\mathcal{B}_1 \cup \mathcal{B}_2$ (since identifying the two copies of $v$ cannot induce cycles since $\mathcal{B}_1$ and $\mathcal{B}_2$ and thus $T_{\mathcal{B}_1}$ and $T_{\mathcal{B}_2}$ do not share any other vertices than $v$).
	As every block of $N^u$ contains at least one cut vertex of $N^u$ and as $T$ covers all cut vertices of $N^u$, it iteratively follows that $T$ is connected and acyclic.
	\item The leaf set of $T$ corresponds to $X$. To see this, consider the leaves of the induced spanning trees $T_{\mathcal{B}}$ for each block $\mathcal{B}$ of $N^u$. 
		\begin{itemize}
		\item If $\mathcal{B}$ is a non-trivial $SP$ graph, its valid spanning tree $T_{\mathcal{B}}$ has only degree-2 vertices of $\mathcal{B}$ as leaves. Let $v$ be such a leaf. As $N^u$ does not contain degree-2 vertices (since it is a phylogenetic network), $v$ must be a cut vertex of $N^u$. However, with the argument from above $v$ is then contained in at least one other spanning tree $T_{\mathcal{B}'}$ for some other block $\mathcal{B}'$ of $N^u$ and thus cannot be a leaf in $T$ (since in $T$ the two copies of $v$ contained in $T_{\mathcal{B}}$ and $T_{\mathcal{B}'}$, respectively, are identified and thus $deg(v) \geq 2$ in $T$).
		\item Similarly, if $\mathcal{B}$ is a trivial $SP$ graph $\{u,v\}$ and if $\{u,v\}$ is an internal cut edge of $N^u$, neither $u$ nor $v$ can be leaves in $T$ (since again, both $u$ and $v$ are contained in at least one other spanning tree and after identifying all copies of $u$ and all copies of $v$, respectively, $deg(u), deg(v) \geq 2$ in $T$).
		\item Lastly, if $\mathcal{B} = \{x,v\}$ is a trivial $SP$ graph corresponding to an external cut edge of $N^u$, where $x \in X$ and $v$ is an internal vertex of $N^u$, $x$ is a leaf in $T$ and $v$ is an internal vertex in $T$. This is due to the fact that each leaf $x$ of $N^u$ is contained in exactly one block of $N^u$ (and thus will be a leaf in $T$ as there is only one copy of $x$), while there exists at least one other block $\mathcal{B}'$ containing a copy of $v$ and the two copies of $v$ will be identified in $T$.		
		\end{itemize}
	\end{itemize}
To summarize, $T$ is a spanning tree of $N^u$ which contains all leaves $x \in X$, but does not induce any additional leaves. Thus, $T$ is a support tree for $N^u$ and $N^u$ is treebased. This completes the proof.
\end{proof}

\begin{figure}[htbp]
\centering
\includegraphics[scale=0.3]{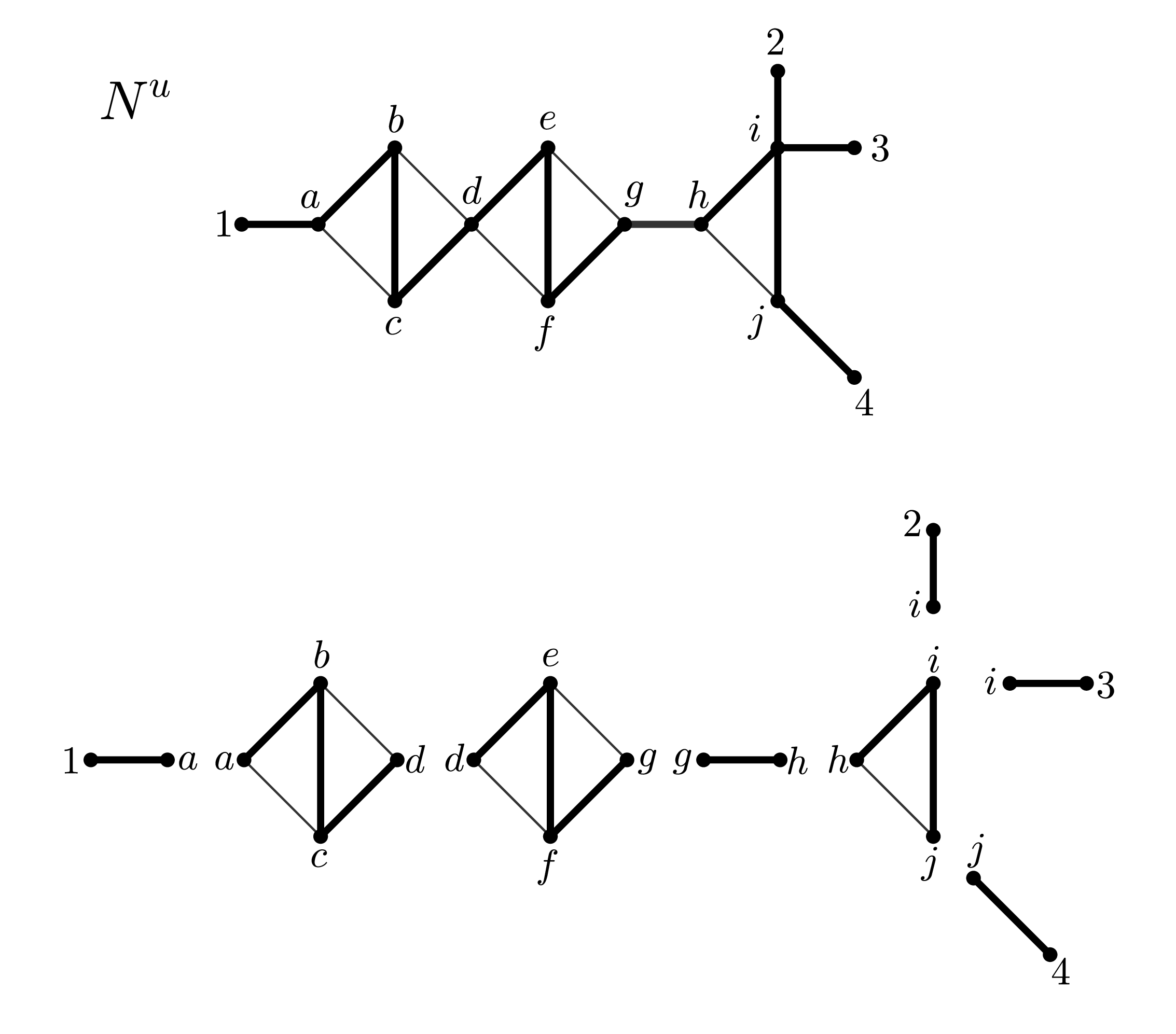}
\caption{Decomposition of the edgebased network $N^u$ into its maximal biconnected components that are either trivial $SP$ graphs, i.e. single edges (corresponding to cut edges of $N^u$), or simple and biconnected $SP$ graphs with at least three vertices. For the latter, valid spanning trees are depicted in bold, respectively. The edges of these valid spanning trees together with all cut edges (also depicted in bold) yield a support tree for $N^u$ and thus $N^u$ is treebased.}
\label{Fig_Blocks}
\end{figure}

To conclude, edgebased networks are always treebased and, more importantly, it can be checked in linear time whether a network is edgebased or not.

Additionally note that in order to check the edgebasedness of a network, we can make use of the fact that a network can be seen as a \enquote{blobbed} tree (cf. \citet{Gusfield2005}), i.e. as a tree with blobs as vertices (cf. p. \pageref{tree_of_blobs}). In particular, we have the following decomposition statement, which is the final statement of this section.

\begin{proposition} \label{blobreduce} Let $N^u$ be a proper unrooted phylogenetic network with at least two leaves. Then, $N^u$ is edgebased if and only if every non-trivial blob of $N^u$ is edgebased.
\end{proposition}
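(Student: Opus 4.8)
The plan is to prove both directions of the equivalence by exploiting the ``tree of blobs'' structure of $N^u$ together with the characterization of edgebasedness via $GSP$ graphs (Corollary \ref{cor_gsp_edgebased}) and the block decomposition (Lemma \ref{Lemma_blocks_are_SP}). The key observation is that, since $N^u$ is a phylogenetic network and hence loopless, edgebasedness of $N^u$ (and of each of its blobs, each of which is connected and loopless) is equivalent to being a $GSP$ graph; and a connected graph is $GSP$ if and only if each of its blocks is $SP$. So the statement will follow once I relate the blocks of $N^u$ to the blocks of its blobs.

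First I would make precise the relationship between blobs and blocks. Recall that a blob is a maximal connected subgraph with no cut edge, while a block is a maximal induced biconnected subgraph. Every block of $N^u$ is either a single cut edge (a trivial $SP$ graph) or a biconnected subgraph containing no cut edge of $N^u$; in the latter case it is entirely contained in a single blob. Conversely, a non-trivial blob of $N^u$ is a connected subgraph whose only blocks are the blocks of $N^u$ that it contains (it contains no cut edge of $N^u$ by definition, but it may still contain cut \emph{vertices}, hence may decompose into several blocks). Trivial blobs are single vertices and contribute only trivial (single-edge) blocks to $N^u$, corresponding to cut edges. Hence the multiset of non-trivial (i.e., not single-edge) blocks of $N^u$ is exactly the union, over all non-trivial blobs $B$ of $N^u$, of the non-trivial blocks of $B$.

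With this dictionary in hand, both directions are short. For the forward direction, suppose $N^u$ is edgebased; then it is $GSP$, so by Lemma \ref{Lemma_blocks_are_SP} every block of $N^u$ is $SP$. Fix a non-trivial blob $B$. Its blocks are among the blocks of $N^u$, hence all $SP$; since $B$ is connected and loopless, Lemma \ref{Lemma_blocks_are_SP} applied to $B$ gives that $B$ is $GSP$, and being loopless it is edgebased by Corollary \ref{cor_gsp_edgebased}. For the converse, suppose every non-trivial blob of $N^u$ is edgebased, hence (loopless) $GSP$, hence all its blocks are $SP$. Every block of $N^u$ is either a single edge (trivially $SP$) or a non-trivial block lying inside some non-trivial blob, hence $SP$ by the previous sentence. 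So all blocks of $N^u$ are $SP$, and by Lemma \ref{Lemma_blocks_are_SP} together with the fact that $N^u$ is connected and loopless, $N^u$ is $GSP$, hence edgebased by Corollary \ref{cor_gsp_edgebased}.

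I expect the main obstacle to be the first step: pinning down rigorously that ``every block of $N^u$ that is not a single cut edge lies inside exactly one blob, and blobs and $N^u$ share the same non-trivial blocks.'' This requires a small amount of care with the definitions — that a biconnected subgraph cannot straddle a cut edge, and that cut vertices of $N^u$ can be interior to a blob — but it is elementary graph theory. One should also handle degenerate cases explicitly (e.g.\ $|X| = 2$ with $N^u$ a single edge, which has no non-trivial blob and is edgebased, making the statement vacuously true, and the case where $N^u$ is a tree, where all blobs are trivial). Once the blob--block correspondence is stated cleanly, the rest is a direct application of Corollary \ref{cor_gsp_edgebased} and Lemma \ref{Lemma_blocks_are_SP} as above.
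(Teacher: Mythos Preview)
Your proposal is correct and takes a genuinely different route from the paper's proof. The paper argues the forward direction via Theorem~\ref{thm-edgebased} (order-independence of the reduction operations): a blob that is not edgebased would yield a restricted topological subgraph of $N^u$ that gets stuck, contradicting edgebasedness of $N^u$. For the converse, the paper proceeds by induction on the number of non-trivial blobs, splitting $N^u$ at a cut edge into two smaller networks and invoking Theorem~\ref{GSP_reduce_to_any_edge} (a $GSP$ graph can be reduced to any prescribed edge) to glue the reductions together. Your argument sidesteps both of these operational tools: you translate edgebasedness into the $GSP$ property via Corollary~\ref{cor_gsp_edgebased}, and then reduce the whole statement to Lemma~\ref{Lemma_blocks_are_SP} together with the purely graph-theoretic observation that the non-trivial blocks of $N^u$ are exactly the blocks of its non-trivial blobs (since any $2$-connected subgraph, being $2$-edge-connected, lies in a single blob, and cycles through an edge of a blob stay inside that blob). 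This is shorter and more structural; in particular it never touches the reduction sequences directly and does not need the induction or the ``reduce to a specified edge'' result. The only point to make fully precise is the block--blob correspondence you flag yourself, and the minor definitional wrinkle that Definition~\ref{def_edgebased} formally asks for two leaves while a blob has none; but the paper already uses ``edgebased'' for blobs in the sense of ``reducible to $K_2$'', which for loopless graphs is exactly ``$GSP$'' by Corollary~\ref{cor_gsp_edgebased}, so your interpretation is consistent with the paper's.
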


The proof of this proposition again exploits the one-to-one relationship between loopless edgebased graphs and $GSP$ graphs and uses the following theorem which implies that a $GSP$ graph can be reduced to any of its edges\footnote{See proof of Theorem 4.1 in \citet{Ho1999}.}.
\begin{theorem}[Theorem 4.1 in \citet{Ho1999}] \label{GSP_reduce_to_any_edge}
Let $G$ be a $GSP$ graph. Then, for any edge $e=\{u,v\}$ of $G$, $G$ is a $GSP$ graph with terminals $u$ and $v$.
\end{theorem}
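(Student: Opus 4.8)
The plan is to prove the statement in the \emph{reduction} language that the paper has already set up. Invoking the correspondence between the compositional definition of $GSP$ graphs (Definition~\ref{Def_GSP}) and their characterisation via the restriction operations (cf.\ page~\pageref{restriction_operations}), it suffices to show that $G$ can be reduced to the \emph{single edge} $e=\{u,v\}$ by operations 1.--3.\ in such a way that $u$, $v$ and one copy of $e$ are never removed. Indeed, the series, parallel and generalised-series compositions are exactly the reverses of suppressing a degree-2 vertex, deleting a parallel copy and deleting a leaf, and under this duality the two vertices that survive as the endpoints of the final $K_2$ are precisely the terminals of the resulting $GSP$ construction. Since $G$ is a $GSP$ graph it is loopless and edgebased by Corollary~\ref{cor_gsp_edgebased}, so such reductions exist; the point is to steer one of them towards the prescribed edge $e$.

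First I would localise the problem to a single block. The edge $e$ lies in a unique block $B$ of $G$, and by Lemma~\ref{Lemma_blocks_are_SP} every block of $G$ is an $SP$ graph; thus $B$ is either a single cut edge or a simple, biconnected $SP$ graph with at least three vertices. In a first phase I peel off all other blocks: processing the block tree from its leaves inward, a pendant block $B'\neq B$ is attached to the rest of $G$ at a single cut vertex $c$, and all of its remaining vertices are private to $B'$. If $B'$ is a single edge, its far endpoint is a leaf of the current graph and can be deleted; if $B'$ is biconnected I reduce it to an edge incident to $c$ (using the core claim below with a chosen edge $\{c,c'\}$), after which $c'$ is a leaf and is deleted. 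Each such step removes one block without touching $B$, $u$, $v$ or $e$, so after finitely many steps the graph is exactly $B$.

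The heart of the argument is therefore the following \emph{core claim}: a simple, biconnected $SP$ graph $B$ with a distinguished edge $e=\{u,v\}$ can be reduced to the single edge $\{u,v\}$ by suppressing degree-2 vertices and deleting parallel copies, while keeping $u$, $v$ and a copy of $e$ intact. I would prove this by induction on $\vert V(B)\vert$, the inductive step being a single protected suppression. For this to go through I need to know that a suppression avoiding $u,v$ is always available, which is the real obstacle and which I would isolate as a \textbf{supporting lemma}: every simple, biconnected $SP$ graph on at least three vertices contains a degree-2 vertex \emph{outside} $\{u,v\}$. Granting this, I pick such a vertex $w\notin\{u,v\}$ and suppress it; if this creates an edge parallel to $e$, I delete the \emph{new} copy, so $e$ survives. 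The result is again simple, biconnected and $SP$ (suppression is a subdivision in reverse and hence preserves biconnectivity, and it is an allowed $SP$ operation), it still contains $e$, and it has one fewer vertex, so induction applies. The recursion bottoms out when only $u$ and $v$ remain, joined by copies of $e$, which collapse to the single edge $\{u,v\}$.

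The supporting lemma is where the work lies, and I would prove it by strong induction on $\vert V(B)\vert$ using the same suppression move. The base case $\vert V(B)\vert=3$ is the triangle, whose third vertex has degree $2$ and lies outside $\{u,v\}$. For the inductive step, if all degree-2 vertices of $B$ were contained in the adjacent pair $\{u,v\}$, then suppressing the (at most two) vertices $u,v$ along the path through their low-degree neighbours yields a smaller simple biconnected $SP$ graph $B^{\ast}$ all of whose degree-2 vertices lie among the two affected neighbours; applying the induction hypothesis to $B^{\ast}$ forces $B$ to be a triangle or a four-cycle, both of which already have a degree-2 vertex outside any prescribed edge, a contradiction. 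Combining the two phases then gives a reduction of $G$ to the single edge $e=\{u,v\}$ that never disturbs $u$, $v$ or $e$, and by the duality recalled above this exhibits $G$ as a $GSP$ graph with terminals $u$ and $v$. As a consistency check, the whole reduction is a legal run of Algorithm~\ref{alg_edgebased}, so Theorem~\ref{thm-edgebased} confirms that it terminates at $K_2$ --- here the $K_2$ on $\{u,v\}$. I expect the supporting lemma to be the main hurdle: the greedy \enquote{reduce to the chosen edge} strategy is clearly correct \emph{if} a protected suppression is always possible, and all the genuine graph theory is in ruling out the degenerate configuration in which the only low-degree vertices are the two endpoints we must preserve.
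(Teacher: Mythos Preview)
The paper does not prove this theorem at all; it is imported verbatim from \citet{Ho1999} (the footnote directs the reader to the proof there). So there is no in-paper argument to compare against, and your proposal is an independent proof attempt.

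Your overall plan---reduce to the chosen edge $e=\{u,v\}$ and read off a $GSP$ construction with terminals $u,v$---is sound, and the block-by-block peeling together with the ``protected suppression'' inside the block containing $e$ is a clean way to organise it. The supporting lemma (a simple biconnected $SP$ graph on at least three vertices has a degree-$2$ vertex outside the endpoints of any prescribed edge) is true, and your inductive idea is the right one, though the write-up is elliptical: you should separate the subcases where only one of $u,v$ has degree~$2$ versus both, and note that the residual graph $B^\ast$ has at least three vertices unless $B$ is a triangle or a $4$-cycle, which you then handle directly.

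The one place where the argument is thinner than it looks is the opening ``duality'' claim. The paper (and the definition on page~\pageref{gsp}) only establishes the \emph{untracked} equivalence: $G$ is $GSP$ iff it reduces to \emph{some} $K_2$. You assert the terminal-tracking refinement---reducing to the \emph{specific} edge $\{u,v\}$ is equivalent to being $GSP$ with terminals $u,v$---as if it were already available. The direction you need (reduction to $\{u,v\}$ $\Rightarrow$ $GSP$ with terminals $u,v$) amounts to showing that each reverse operation (add a leaf at an arbitrary vertex, subdivide an arbitrary edge, duplicate an arbitrary edge) preserves ``$GSP$ with terminals $s,t$''. For subdivision and duplication this is immediate by a local replacement in the parse tree, but for adding a leaf at a vertex $w$ that is \emph{not} a current terminal you need to locate an occurrence of $w$ as a terminal of some $K_2$-leaf in the construction tree and splice in a generalised-series node there. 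This is not hard, but it is exactly the content that makes the theorem non-trivial, and it should be stated and proved rather than absorbed into the word ``duality''.
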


We now use this theorem to prove Proposition \ref{blobreduce}.

\begin{proof}[Proof of Proposition \ref{blobreduce}]
First of all note that if $N^u$ contains only trivial blobs, it is a tree and is therefore trivially edgebased. Thus, we now consider the case that $N^u$ contains at least one non-trivial blob.
If $N^u$ is edgebased, then all non-trivial blobs of $N^u$ are also necessarily edgebased. If there was a non-trivial blob of $N^u$ with a restricted topological subgraph that cannot be reduced to an edge, this subgraph would also be contained as a restricted topological subgraph in $N^u$, which means that $N^u$ would have a restricted topological subgraph that cannot be reduced to a single edge. However, due to Theorem \ref{thm-edgebased} \emph{all} restricted topological subgraphs of $N^u$ must have a single edge as a restricted topological subgraph, so this is a contradiction. 

If, on the other hand, all non-trivial blobs of $N^u$ are edgebased, then we can show inductively that $N^u$ is edgebased. If $N^u$ contains only one non-trivial blob, there is nothing to show. Now assume that we know the statement is true for all networks with at most $m$ non-trivial blobs, and let $N^u$ contain $m+1$ non-trivial blobs. We now make use of the fact that $N^u$ must contain a cut edge $e=\{a,b\}$ whose removal results in two connected components, each containing at least one non-trivial blob. We denote these components by $N^u_a$ and $N^u_b$ and assume that $a$ is contained in $N^u_a$ and $b$ is contained in $N^u_b$. We now re-introduce the cut edge $\{a,b\}$ to both components by attaching a new leaf $a$ to $N_b^u$ and $b$ to $N_a^u$, respectively. Now, without loss of generality, we first consider $N^u_a$. As $N^u_a$ contains at most $m$ non-trivial blobs it is edgebased by the inductive hypothesis. Moreover, by Theorem \ref{GSP_reduce_to_any_edge}, we can reduce it to any of its edges, in particular to its leaf edge $e=\{a,b\}$. Similarly, as $N^u_b$ contains at most $m$ blobs it is also edgebased and can be reduced to its leaf edge $e=\{a,b\}$. In total, this implies that $N^u$ can be reduced to edge $e=\{a,b\}$. In particular, $N^u$ is edgebased. This completes the proof.
\end{proof}

\subsubsection{Other networks that are guaranteed to be treebased} \label{classes}
After having thoroughly analyzed edgebased networks, we will now turn to other classes of networks that are guaranteed to be treebased by using some classic graph theoretical arguments.

\begin{theorem} \label{thm-suff}Let $N^u$ be a proper phylogenetic network on leaf set $X$ with $|X|\geq 2$, and consider the graph $\mathcal{LCUT}(N^u)$ as well as the set $\mathcal{LCON}(N^u)$ as defined in Section \ref{NetworkReduction}. Then, the following statements hold:
\begin{enumerate}
\item If $N^u$ contains two leaves $x$ and $y$ with attachment points $u$ and $v$, respectively, such that the edge $\{u,v\}$ is contained in the edge set of $N^u$ and such that there is a path in $N^u$ from $u$ to $v$ visiting all inner vertices of $N^u$, then $N^u$ is treebased.
\item If $N^u$ is an $\mathcal{H}$-connected network (i.e. if $\mathcal{LCUT}(N^u)$ is Hamilton connected), then $N^u$ is treebased.
\item If there is a graph $G$ in $\mathcal{LCON}(N^u)$ such that $G$ is Hamiltonian and contains a Hamiltonian cycle which uses an edge of $G$ which is not contained in $N^u$ and which did not result from deleting the last leaf in case $|X^r|$ is odd (where $X^r$ denotes the reduced leaf set of $N^u$ after a potential pre-processing step), then $N^u$ is treebased.
\item  If there is a graph $G$ in $\mathcal{LCON}(N^u)$ such that $G$ is Hamiltonian and such that at least two new vertices, say $a$ and $b$, had to be added when connecting the attachment points $u$ and $v$ of two leaves $x$ and $y$ during the construction of $G$ in order to prevent parallel edges, then $N^u$ is treebased.
\end{enumerate}
\end{theorem}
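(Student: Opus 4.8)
The plan is to prove all four statements by constructing, in each case, a spanning tree of $N^u$ whose leaf set is exactly $X$, i.e. a support tree, which by definition witnesses treebasedness. The unifying idea is that a Hamiltonian-type path through the inner vertices can be turned into the ``spine'' of a caterpillar-like spanning tree, to which every leaf of $N^u$ is attached as a pendant vertex via its (unique) incident edge. The only subtlety is making sure that (a) the edges we use actually belong to $E(N^u)$, and (b) the attachment of leaves does not create additional degree-1 vertices beyond those in $X$.

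For statement (1), I would take the path $P$ in $N^u$ from $u$ to $v$ that visits all inner vertices of $N^u$, together with the edge $\{u,v\}\in E(N^u)$; note $P$ already spans all non-leaf vertices. Since every leaf of $N^u$ is attached to some inner vertex (its attachment point), and all inner vertices lie on $P$, I can add each leaf edge $\{x, p(x)\}$ to $P$. The result is a subtree of $N^u$ (it is acyclic because $P$ is a path and the added edges are pendant) that spans $V(N^u)$ and whose leaves are precisely the elements of $X$ — the endpoints $u$ and $v$ of $P$ now have degree $\geq 2$ because the leaves $x$ and $y$ hang off them, and every internal vertex of $P$ has degree $\geq 2$ within $P$ itself. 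Hence this is a support tree. Statement (2) reduces to statement (1): if $\mathcal{LCUT}(N^u)$ is Hamilton connected, pick any two leaves $x,y$ of $N^u$ with attachment points $u,v$; since $N^u$ is proper, after leaf-cutting $u$ and $v$ are vertices of $\mathcal{LCUT}(N^u)$, and Hamilton-connectedness gives a Hamiltonian path of $\mathcal{LCUT}(N^u)$ from $u$ to $v$, which is exactly a path in $N^u$ visiting all inner vertices. One then has to be slightly careful if $x$ and $y$ share an attachment point or if the edge $\{u,v\}$ is not present; in those degenerate cases a direct argument (the path itself already plays the role of the spine, or a parallel-edge/loop bookkeeping step) finishes it, but the generic case is exactly (1).

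For statements (3) and (4), I would unwind the leaf connecting construction. Recall that a graph $G\in\mathcal{LCON}(N^u)$ is obtained from (the pre-processed) $N^u$ by repeatedly deleting pairs of leaves $x_i, y_i$ and joining their attachment points $u_i, v_i$ by an edge (inserting a subdivision vertex pair $a_i,b_i$ only when this would create a parallel edge). A Hamiltonian cycle $C$ of $G$ visits every vertex of $G$ exactly once. The strategy is: starting from $C$, \emph{reverse} the leaf-connecting steps. Whenever $C$ uses one of the ``new'' connecting edges $\{u_i,v_i\}$ (or one of the paths $u_i\!-\!a_i\!-\!v_i$, $u_i\!-\!b_i\!-\!v_i$ through an inserted subdivision gadget), delete that edge/path-portion from $C$ and instead re-attach the two pendant leaves $x_i$ at $u_i$ and $y_i$ at $v_i$; this breaks the cycle into a Hamiltonian path of the ``de-connected'' graph with $x_i, y_i$ as new pendant leaves. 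Doing this for \emph{one} suitable connecting edge suffices to break the single cycle into a spanning tree; the remaining connecting edges of $N^u$ that $C$ happens to use can simply be kept (they are genuine edges of $N^u$, by construction of the non-gadget case) or, for the gadget case, re-expanded. The hypotheses are precisely what guarantee there is a connecting edge available to ``cut'': in (3) it is the existence of a Hamiltonian-cycle edge not in $N^u$ (and not the artifact of an odd-leftover leaf), and in (4) it is the presence of an inserted vertex pair $a,b$, where the edge $\{a,b\}$ (or the gadget it sits in) can be opened to restore leaves $x$ and $y$. After this surgery one checks, as in (1), that all original leaves of $N^u$ re-appear as pendant vertices, all subdivision vertices $a_i,b_i$ either disappear (when their gadget is opened) or retain degree $2$ (harmless in a spanning tree — they get suppressed when passing to the base tree), and no spurious leaves are created.

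The main obstacle I expect is the careful case analysis around the subdivision gadgets $\{a_i,b_i\}$ and the leftover odd leaf: one must verify that a Hamiltonian cycle through such a gadget really does pass through it in a way that can be ``opened'' to re-attach exactly the two leaves $x_i,y_i$ (there are only a few ways $C$ can traverse the $4$-vertex gadget on $u_i,v_i,a_i,b_i$, and each must be checked), and that the condition in (3) excluding the edge created by deleting the last leaf is exactly what prevents an attempted cut that would leave the lone leftover leaf stranded. The combinatorics here is finite and local, so while it is the fiddly part, it is not deep; the conceptual content is entirely in the ``break one cycle edge, re-attach two leaves'' move, which is the same caterpillar-spine idea as in (1).
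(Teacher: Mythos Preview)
Your overall strategy matches the paper's in all four parts: build a caterpillar support tree whose spine comes from a Hamiltonian path through the inner vertices and then hang every leaf off its attachment point. Parts (1) and (2) are essentially the paper's argument; one small slip is that in (1) you write ``together with the edge $\{u,v\}$'', but that edge must \emph{not} go into the spanning tree (it would close a cycle with $P$)---the paper uses only $P$ plus the pendant edges, and your own next sentence about acyclicity shows you intend the same thing.

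There is, however, a concrete error in your handling of (3). You assert that the remaining non-gadget connecting edges used by the cycle ``are genuine edges of $N^u$, by construction of the non-gadget case.'' This is exactly backwards: a connecting edge $\{u_i,v_i\}$ is inserted \emph{without} an $a,b$-gadget precisely when $\{u_i,v_i\}\notin E(N^u)$, since otherwise adding it would create a parallel edge and the gadget would fire. Consequently, if the Hamiltonian cycle of $G$ traverses two or more non-gadget connecting edges, breaking just one of them still leaves a path containing edges not present in $N^u$, so the object you build is not a subgraph of $N^u$ and hence not a support tree. (By contrast, gadget segments \emph{can} be collapsed back to the genuine edge $\{u',v'\}\in E(N^u)$, as both you and the paper observe.) The paper's own proof is terse at this very point---it simply speaks of considering ``the same cycle in $N^u$'' and breaks one edge---so you are not alone in glossing over it; but your parenthetical is a false statement that must be removed, and the case where several non-gadget connecting edges lie on the cycle needs more argument than either your sketch or the paper's proof supplies.
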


\begin{figure}[htbp]
	\centering
	\includegraphics[scale=0.45]{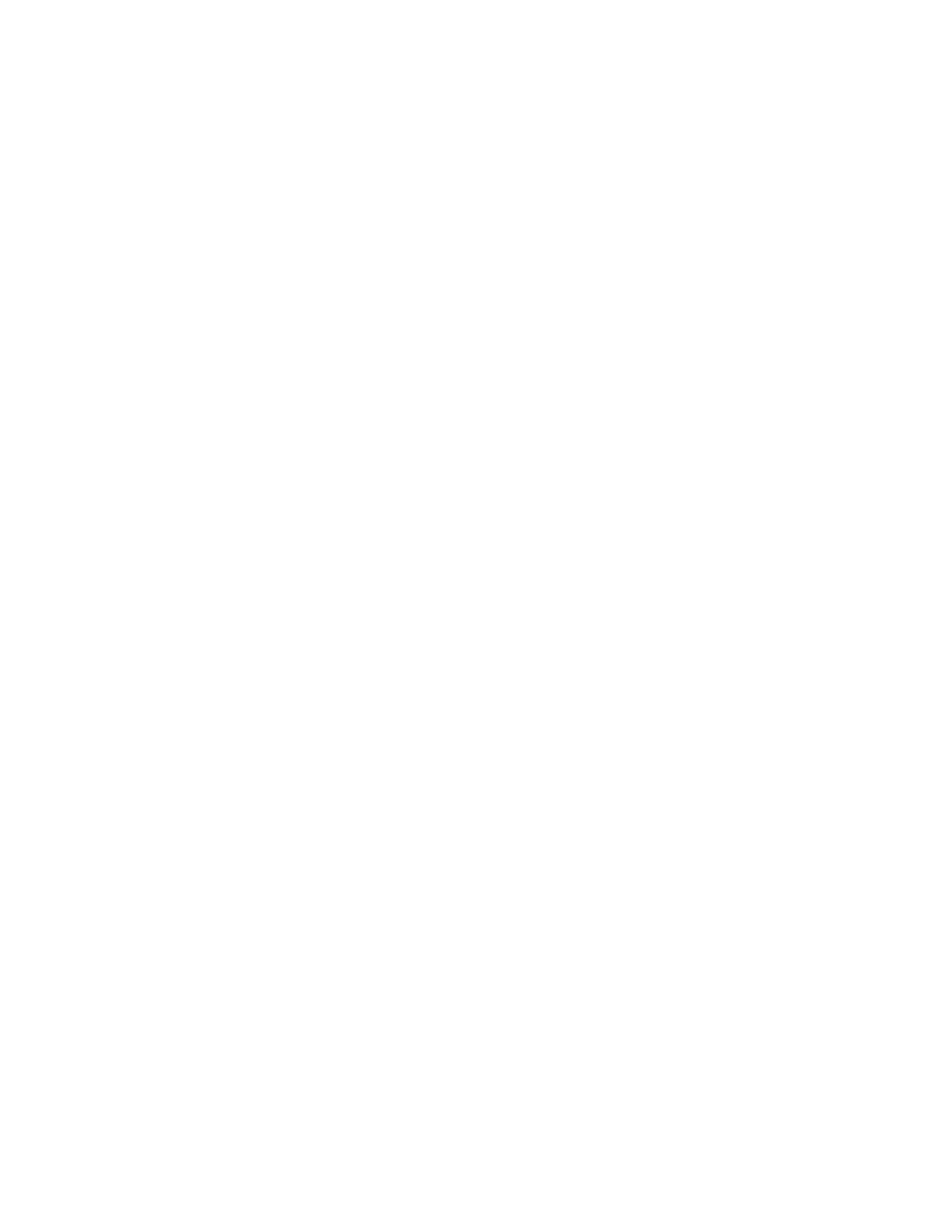}
	\caption{Binary treebased unrooted phylogenetic network $N^u$ on $X=\{x_1,x_2,x_3,x_4\}$. The corresponding support tree is highlighted in bold. $N^u-x_i$ is not treebased for $i=1,\ldots,4$, because there is no spanning tree in $N^u-x_i$ whose leaf set is equal to $X\setminus \{x_i\}$ (Figure taken from \citet{paper2}). }
	\label{Fig_notTreebased}
\end{figure}

Note that the converse of this theorem does not hold: Figure \ref{Fig_notTreebased}
shows that the converse of the first part of Theorem \ref{thm-suff} does not hold, because it depicts a treebased network that does not contain a path from one attachment point of a leaf to any other one which visits all inner vertices. Note that such a path would imply a Hamiltonian path from one leaf to another one (when disregarding the remaining leaves), which does not exist. 

Moreover, Figure \ref{leafcutting} shows an example of a treebased network for which $\mathcal{LCUT}(N^u)$ is not Hamilton connected. This implies that the second part of Theorem \ref{thm-suff} cannot be converted. 

Figure \ref{Fig_LCON} shows an example of a treebased network for which there is no $G$ in $\mathcal{LCON}(N^u)$ such that $\mathcal{LCON}(N^u)$ is Hamiltonian. Neither $G_1$, $G_2$ or $G_3$ in $\mathcal{LCON}(N^u)$ contain a Hamiltonian cycle.
So conditions three and four stated by Theorem \ref{thm-suff} are also sufficient, but not necessary. \\

Moreover, before we turn our attention to the proof of the theorem, we want to mention that concerning $\mathcal{LCON}(N^u)$, the exact order in which the leaves are connected can play a fundamental role. Figure \ref{Fig_OrderLcon} shows a phylogenetic network (based on the famous Petersen graph) which is treebased, and two different graphs in $\mathcal{LCON}(N^u)$. However, only one of them is Hamiltonian, while the other one is not, because the Petersen graph is non-Hamiltonian (see for example properties of the Petersen graph in the \enquote{House of graphs} database (graph ID 660); \citet{Brinkmann2013}). 

\begin{figure}[htbp]
	\centering
	\includegraphics[scale=0.2]{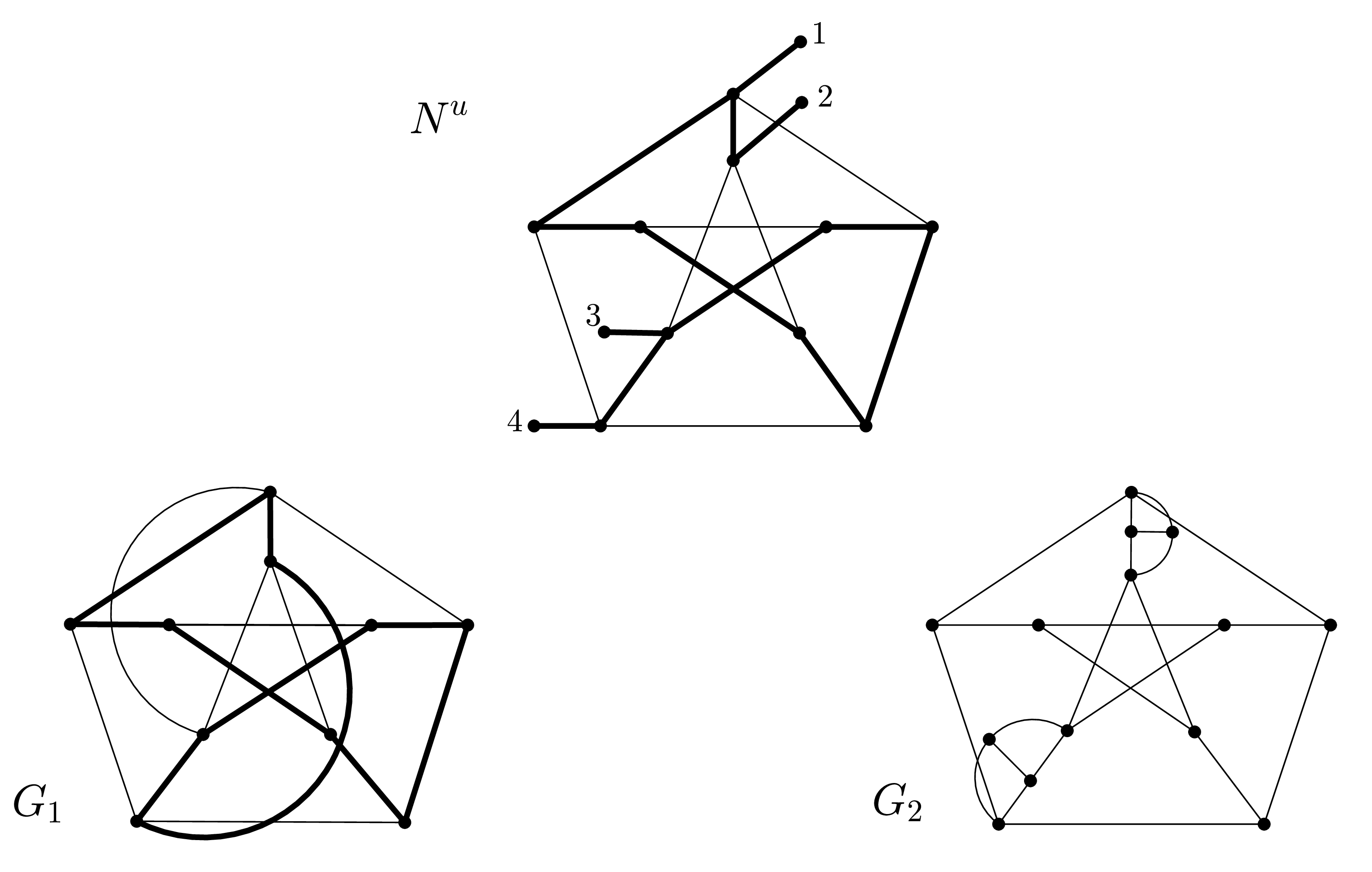}
	\caption{Treebased network $N^u$ (support tree depicted in bold) that is based on the Petersen graph. $G_1$ and $G_2$ are both in $\mathcal{LCON}(N^u)$, but only $G_1$ is Hamiltonian (a Hamiltonian cycle is depicted in bold).}
	\label{Fig_OrderLcon}
\end{figure}

\par \vspace{0.5cm}
We now prove Theorem \ref{thm-suff}.

\begin{proof}[Theorem \ref{thm-suff}] \mbox{}
\begin{enumerate}
\item If $N^u$ contains two leaves $x$ and $y$ with attachment points $u$ and $v$, respectively, such that the edge $\{u,v\}$ is contained in the edge set of $N^u$ and such that there is a path in $N^u$ from $u$ to $v$ visiting all inner vertices of $N^u$, then we can construct a support tree $T$ for $N^u$ as follows: take the path  from $u$ to $v$ visiting all inner vertices of $N^u$ and add all leaves of $N^u$ together with their pending edges to it. As all attachment points of leaves are contained in the path already (as this path visits all inner vertices), now that we re-introduced all leaves, $T$ indeed covers all vertices of $N^u$. As we did not add the edge $\{u,v\}$, there is no cycle. In total, $T$ is a spanning tree of $N^u$. Moreover, its leaf set must coincide with $N^u$: All leaves of $N^u$ are at the same time leaves of $T$ (as a degree-1 vertex of $N^u$ naturally also has degree 1 in $T$). Moreover, all vertices on the path from $u$ to $v$ have degree at least 2, except for $u$ and $v$. But as $u$ and $v$ were attachment points of leaves, after re-attaching those, they now also have degree at least 2 in $T$. So $T$ cannot have any leaves that are not leaves of $N^u$. So $T$ is a support tree of $N^u$ and thus $N^u$ is treebased. 
\item Let $N^u$ be a $\mathcal{H}$-connected network, i.e. let $\mathcal{LCUT}(N^u)$ be Hamilton connected. Consider any two leaves $x$ and $y$ of $N^u$ and their respective attachment points, say $u$ and $v$. As $\mathcal{LCUT}(N^u)$ is Hamilton connected, there is a Hamiltonian path from $u$ to $v$ in $\mathcal{LCUT}(N^u)$. We now consider this path in  $N^u$ and extend it by all pending edges of all leaves. This leads to a tree $T$ that covers all inner vertices on the original path from $u$ to $v$ and all leaves as they were re-attached. There indeed cannot be any cycles as the Hamiltonian path itself has no cycle and as adding leaves, which are of degree 1, cannot create cycles. Thus, $T$ is a spanning tree of $N^u$. Moreover, the leaf set of $T$ coincides with that of $N^u$: All vertices on the path from $u$ to $v$ except for $u$ and $v$ have degree 2 before re-attaching their leaves. $u$ and $v$ have degree 1 in the path, but their leaves $x$ and $y$ also got re-attached, so in the final tree, they have degree 2. Therefore, the only degree-1 vertices in $T$ are the leaves of $N^u$. Thus, $T$ is a support tree and $N^u$ is treebased. 
\item Now let us assume that there is a $G$ in $\mathcal{LCON}(N^u)$ such that $G$ contains a Hamiltonian cycle that uses at least one of the edges that $N^u$ does not contain, i.e. that were introduced when turning $N^u$ into $G$. Consider such a graph $G$ and such a Hamiltonian cycle. Note that as this cycle covers all vertices of $G$, it in particular covers all vertices to which the leaves of $N^u$ are attached. Moreover, it also covers all vertices of $G$ that are not in $N^u$, namely precisely the vertices of type $a$ and $b$ that may have been added during the construction of $G$ to prevent parallel edges. We will now turn this cycle into a support tree of $N^u$ as follows. 
\begin{itemize}
\item If no new vertices were added when $G$ was constructed, this particularly implies that no connection of leaves led to parallel edges. However, as $N^u$ has at least two leaves, at least one edge of $G$ is not an edge of $N^u$. By assumption, such an edge $\{u,v\}$ is covered by the Hamiltonian cycle of $G$ which we are considering. Then we consider the same cycle in $N^u$, but break the edge $\{u,v\}$ in order to get an acyclic tree. This path tree has only two vertices of degree 1, namely $u$ and $v$. But as the edge $\{u,v\}$ was added during the construction of $G$, it must imply that both $u$ and $v$ are leaf attachment points in $N^u$. Now we re-attach all leaves in order to turn this path tree into a tree $T$ whose only leaves are the leaves of $N^u$ (because now the degrees of both $u$ and $v$ are at least 2) and which by construction covers all vertices of $N^u$. So $T$ is a support tree of $N^u$ and therefore $N^u$ is treebased.

\item If there is a pair of vertices $a$ and $b$ that had to be added to $G$ when it was constructed in order to prevent parallel edges between, say, $u$ and $v$, we continue with constructing a support tree $T$ as follows: First, all edges of the cycle in $G$ that were already present in $N^u$ are considered. Moreover, except for one fixed pair $a$ and $b$ that was added to prevent parallel edges, for all other such pairs $a'$, $b'$ between vertices $u'$ and $v'$, say, as we do not have edges $\{u',a'\}$, $\{a',b'\}$ and $\{b',v'\}$ in $N^u$, we remove them. (Note that up to permuting the names of $u'$ and $v'$, these edges must be contained in the Hamiltonian cycle as otherwise it could not cover $a'$ and $b'$). Instead add to $T$ the corresponding edge $\{u',v'\}$, which must be contained in $N^u$ as otherwise, $a'$ and $b'$ would not have been added during the construction of $G$. Also, if the number of leaves of $N^u$ is odd (after a potential pre-processing step), this means that during the construction of $G$, there may have been another vertex added, say $a''$, for the last leaf $x$ with attachment point $w$, again to prevent parallel edges between, say, $u''$ and $v''$. If this is the case, we must have edges $\{u'',v''\}$, $\{x,w\}$, $\{u'',w\}$ and $\{w,v''\}$ in $N^u$. Note that $G$ does not contain $\{x,w\}$ and $\{u'',v''\}$, but instead $\{w,a''\}$, $\{u'',a''\}$, $\{a'',v''\}$. In order to cover $a''$ and $w$, the Hamiltonian cycle we consider must contain the edge $\{w,a''\}$ and either the pair $\{u'',a''\}$ and $\{w,v''\}$, or the pair $\{v'',a''\}$ and $\{w,u''\}$. In either case, $u''$ and $v''$ are covered by the Hamiltonian cycle in $G$ such that one path between them visits only $a''$ and $b''$, while the other one covers all other vertices of $G$. So for $T$, we keep edge $\{u'',v''\}$ as a replacement for the path containing $a''$ and $b''$, and additionally add edges $\{x,w\}$ and $\{u'',w\}$ in order to re-attach leaf $x$. 
Subsequently, we re-attach all other leaves of $N^u$.

Last, we have to deal with our fixed pair $a$ and $b$. As before, these two vertices can only be covered by the Hamiltonian cycle of $G$ if $u$ and $v$ are connected via one path visiting all vertices of $G$ except $u$ and $v$, and by one path only using $a$ and $b$. However, the existence of $a$ and $b$ implies that there is an edge $\{u,v\}$ in $N^u$. For $T$, we do \emph{not} consider this edge, i.e. we do not translate it from the Hamiltonian cycle of $G$ into $N^u$. This way, when we delete $a$ and $b$ (which we have to as they are not present in $N^u$), $u$ and $v$ will be connected via one path visiting all inner vertices of $N^u$, but as the edge $\{u,v\}$ is not contained in $T$, $T$ is acyclic. Morever, by construction, $T$ covers all vertices of $N^u$. As it was created from a Hamiltonian cycle, it is clear that all vertices along this cycle have degree at least 2 in $T$, except for $u$ and $v$, which is where we broke the cycle. However, as $u$ and $v$ are attachment points of leaves, they have degree at least 2 in $T$ as well. So in total, all inner vertices of $N^u$ are inner vertices of $T$, too, and so $T$ is a support tree of $N^u$ and $N^u$ is treebased. 
\end{itemize}
\item Now assume $G \in \mathcal{LCON}(N^u)$ is Hamiltonian and that $G$ contains two vertices $a$ and $b$ which were added when joining two leaf attachment points $u$ and $v$ when constructing $G$ from $N^u$. As we have seen before, in order to cover $a$ and $b$, the Hamiltonian cycle must contain a path from $u$ to $v$ only visiting $a$ and $b$ (and another path from $u$ to $v$ visiting all other vertices of $G$). So the edge $\{a,b\}$ must be used. That $N^u$ is treebased now follows from Part 3 of this theorem.
\end{enumerate}
This completes the proof.
\end{proof}

We are now in the position to show that some classes of phylogenetic networks are treebased due to well-known graph theoretical properties.

\begin{corollary} \label{10tough} Let $N^u$ be a proper unrooted phylogenetic network with at least two leaves and such that $\mathcal{LCUT}(N^u)$ is not Hamiltonian and such that there is a graph $G$ in $\mathcal{LCON}(N^u)$ which is a 10-tough chordal graph. Then, $N^u$ is treebased.
\end{corollary}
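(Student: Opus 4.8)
The plan is to verify the hypotheses of Theorem \ref{thm-suff}(3) for the given graph $G \in \mathcal{LCON}(N^u)$. Hamiltonicity of $G$ is immediate from a known toughness result: every $10$-tough chordal graph is Hamiltonian (cf. \citet{Kabela2017}), so $G$ has a Hamiltonian cycle $C$. Note that Theorem \ref{thm-suff}(4) is \emph{not} available here, since $10$-toughness prevents the leaf-connecting construction from ever introducing the auxiliary vertices used to avoid parallel edges: if a pair $a,b$ were inserted between two attachment points $u_1,u_2$, then $a$ and $b$ are adjacent only to $u_1$, $u_2$ and to each other, so $\{u_1,u_2\}$ is a vertex cut of $G$ with $\{a,b\}$ forming a whole component of $G-\{u_1,u_2\}$, whence $t(G)\le 2/2=1$; and in the boundary case $V(G)=\{u_1,u_2,a,b\}$ one gets $t(G)\le 1$ via the cut $\{a,b\}$ instead. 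The single auxiliary vertex of the odd-leaf step is excluded the same way (its three neighbours again give a $2$-cut).

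Consequently every edge of $G$ that is not an edge of $N^u$ is either a \emph{connecting edge}, obtained by directly joining the attachment points of two deleted leaves, or, when $|X^r|$ is odd, the single \emph{terminal edge} created by suppressing the attachment point of the last remaining leaf. Since $N^u$ has at least two leaves, at least one connecting edge exists. The crux of the proof is therefore to show that the Hamiltonian cycle $C$ must traverse a connecting edge $e=\{u,v\}$: such an $e$ lies in $E(G)\setminus E(N^u)$, is not the terminal edge, and its endpoints $u,v$ are attachment points of leaves of $N^u$, so Theorem \ref{thm-suff}(3) applies and $N^u$ is treebased.

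To show that $C$ uses a connecting edge I would argue by contradiction, using the hypothesis that $\mathcal{LCUT}(N^u)$ is not Hamiltonian. Assuming without loss of generality (the leaf-connecting pre-processing step does not affect treebasedness) that $N^u$ is already pre-processed, $G$ and $\mathcal{LCUT}(N^u)$ have the same vertex set except for the attachment point of the last leaf in the odd case, and $E(G)$ is $E(\mathcal{LCUT}(N^u))$ with the connecting edges added and, in the odd case, the length-two path through the terminal attachment point replaced by the terminal edge. If $C$ used no connecting edge, then $C$ would consist only of edges of $\mathcal{LCUT}(N^u)$ together with, possibly, the terminal edge; re-subdividing the terminal edge to restore the terminal attachment point (when $C$ uses it) turns $C$ into a spanning cycle of $\mathcal{LCUT}(N^u)$ — contradicting the hypothesis. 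Hence $C$ uses a connecting edge and we are done.

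I expect the main obstacle to be making this last step fully rigorous, i.e. the precise comparison of $V(G),E(G)$ with $V(\mathcal{LCUT}(N^u)),E(\mathcal{LCUT}(N^u))$ and, above all, the subcase in which $|X^r|$ is odd and $C$ avoids both every connecting edge \emph{and} the terminal edge. In that subcase the terminal attachment point has no slot on the candidate lifted cycle, so one cannot simply read off a Hamiltonian cycle of $\mathcal{LCUT}(N^u)$; one must instead pass to a different Hamiltonian cycle of $G$ or argue treebasedness of $N^u$ directly from the cycle $C$ together with the structure forced by $10$-toughness. This is precisely the point at which the assumption that $\mathcal{LCUT}(N^u)$ is non-Hamiltonian carries the argument; everything else is a routine application of Theorem \ref{thm-suff} and the toughness--Hamiltonicity result of \citet{Kabela2017}.
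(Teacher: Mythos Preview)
Your approach is exactly the paper's: invoke \citet{Kabela2017} to get that $G$ is Hamiltonian, then appeal to Theorem~\ref{thm-suff}(3). The paper's entire proof is two sentences: ``$G$ is Hamiltonian; since $\mathcal{LCUT}(N^u)$ is not Hamiltonian, the cycle in $G$ must use an edge not in $N^u$; apply Part~3.'' It does not discuss the auxiliary $a,b$ vertices, the terminal edge, or the odd-leaf case at all.

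You are being more careful than the paper. Your observation that $10$-toughness rules out the auxiliary $a,b$ vertices (and hence makes Part~4 unavailable) is correct and is a point the paper simply omits. The difficulty you isolate in the final paragraph --- the odd-$|X^r|$ subcase where the last attachment point $u$ is suppressed, so that $V(G)=V(\mathcal{LCUT}(N^u))\setminus\{u\}$ and a Hamiltonian cycle of $G$ using only edges of $\mathcal{LCUT}(N^u)$ would \emph{not} yield a Hamiltonian cycle of $\mathcal{LCUT}(N^u)$ --- is a genuine gap, and the paper's proof does not address it either: the paper's assertion that ``the cycle in $G$ must use edges not contained in $N^u$'' is exactly the step you correctly flag as not fully justified, and the paper also never checks that the non-$N^u$ edge produced is not the terminal edge required to be excluded by Part~3.

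So you have not missed anything the paper does; rather, you have identified a subtlety the paper's short proof glosses over. If you want to close the gap, the natural route is the one you sketch: in that residual subcase $C$ is a spanning cycle of $\mathcal{LCUT}(N^u)-u$, and one can try to build a support tree of $N^u$ directly from $C$ together with the pendant edge at $u$, rather than via Theorem~\ref{thm-suff}.
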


\begin{proof} We know from \citet{Kabela2017} that every 10-tough chordal graph is Hamiltonian. Thus, $G$ is Hamiltonian. However, as $\mathcal{LCUT}(N^u)$ is not Hamiltonian, the cycle in $G$ must use edges that are not contained in $N^u$. Thus, $N^u$ is treebased by Theorem \ref{thm-suff}, Part 3. This completes the proof.
\end{proof}

Note that while Corollary \ref{10tough} implies a connection between chordal graphs and treebasedness, not all chordal graphs are treebased. This can be seen in Figure \ref{Fig_Chordal_not_treebased}. However, we will now prove that such a scenario cannot happen when $N^u$ is binary.

\begin{figure}[htbp]
	\centering
	\includegraphics[scale=0.2]{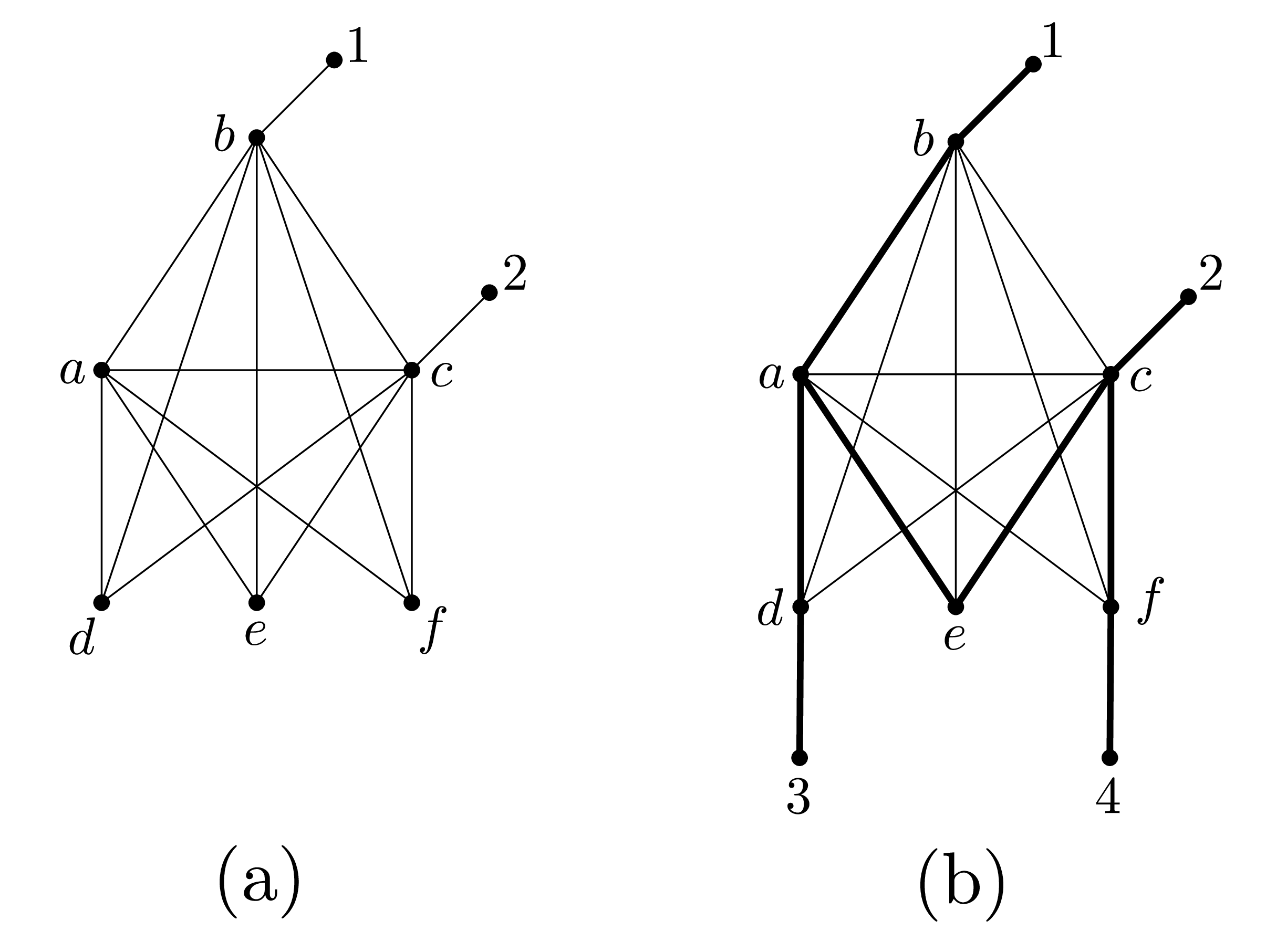}
	\caption{(a) Chordal graph that -- considered as an unrooted non-binary phylogenetic network -- is not treebased, because there is no Hamiltonian path between $1$ and $2$. (b) Attaching at least two more leaves to either $d$, $e$ or $f$ produces a treebased network (the support tree is depicted in bold).}
	\label{Fig_Chordal_not_treebased}
\end{figure}

 \begin{theorem} \label{binarychordal} 
Let $N^u$ be a proper unrooted phylogenetic network with at least two leaves. Then, if $N^u$ is binary and chordal, $N^u$ is edgebased (and thus, by Theorem \ref{edgebased_implies_treebased}, also treebased).
 \end{theorem}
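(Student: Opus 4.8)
The plan is to reduce the statement to a purely graph-theoretic fact about the blocks of $N^u$ and to exploit the key consequence of binariness that $N^u$ cannot contain a $K_4$. Recall from Corollary~\ref{cor_gsp_edgebased} that a loopless graph is edgebased if and only if it is a $GSP$ graph, and from Lemma~\ref{Lemma_blocks_are_SP} that a connected graph is $GSP$ if and only if each of its blocks (maximal induced biconnected subgraphs) is an $SP$ graph. Since $N^u$ is a simple graph it is loopless, so it suffices to show that \emph{every block $B$ of $N^u$ is an $SP$ graph}; the conclusion — including treebasedness — then follows from Corollary~\ref{cor_gsp_edgebased} and Theorem~\ref{edgebased_implies_treebased}, using that $N^u$ is a proper network with $\vert X\vert\geq 2$. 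Blocks with at most two vertices are just $K_2$ and hence trivially $SP$, so only blocks $B$ with $\vert V(B)\vert\geq 3$ require work; such a $B$ is $2$-connected, and it is chordal since it is an induced subgraph of the chordal graph $N^u$.

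First I would record that $N^u$ contains no $K_4$ as a subgraph: if $w_1,\dots,w_4$ were pairwise adjacent, each $w_i$ would have degree at least $3$, hence (being non-leaves) exactly $3$ as $N^u$ is binary, so no edge would leave $\{w_1,\dots,w_4\}$; connectedness would then force $N^u=K_4$, contradicting $\vert X\vert\geq 2$. In particular no block of $N^u$ contains a $K_4$ subgraph.

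The heart of the proof is the following claim, which I would prove by induction on $\vert V(B)\vert$: \emph{a $2$-connected chordal graph $B$ with $\vert V(B)\vert\geq 3$ and no $K_4$ subgraph is an $SP$ graph.} If $\vert V(B)\vert=3$ then $B=K_3$ (it is simple and $2$-connected), which is $SP$. For the inductive step ($\vert V(B)\vert\geq 4$), use that a chordal graph has a simplicial vertex $v$; since $B$ is $2$-connected on at least three vertices we have $\deg_B(v)\geq 2$, and since $N_B(v)$ is a clique while $B$ has no $K_4$ we get $\deg_B(v)\leq 2$, hence $\deg_B(v)=2$. Writing $N_B(v)=\{a,b\}$, the edge $\{a,b\}$ lies in $B$, so $\{v,a,b\}$ is a triangle. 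Let $B':=B-v$ (delete $v$ and its two incident edges, keeping $\{a,b\}$). Then $B'$ has at least three vertices, is chordal (induced subgraph), contains no $K_4$, and is again $2$-connected; by the induction hypothesis $B'$ is $SP$. Finally, $B$ reduces to $B'$ by suppressing the degree-$2$ vertex $v$ — which creates a parallel copy of $\{a,b\}$ — and then deleting that parallel edge, and these are exactly the $SP$-reductions (operations $2$ and $3$ on page~\pageref{restriction_operations}); hence $B$ is $SP$.

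Putting the pieces together: every block of $N^u$ is $SP$, so $N^u$ is $GSP$ by Lemma~\ref{Lemma_blocks_are_SP}; being loopless, $N^u$ is edgebased by Corollary~\ref{cor_gsp_edgebased}, and therefore treebased by Theorem~\ref{edgebased_implies_treebased}. I expect the main obstacle to be the inductive claim, and within it the only genuinely non-routine point is checking that removing the simplicial degree-$2$ vertex $v$ preserves $2$-connectedness: here one uses that the two neighbours $a,b$ of $v$ are adjacent, so any $a$–$b$ path previously routed through $v$ can instead use the edge $\{a,b\}$ and no new cut vertex arises. Everything else is bookkeeping with the cited definitions and results.
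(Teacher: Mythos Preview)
Your proof is correct. The overall strategy---locating a degree-$2$ vertex lying in a triangle, suppressing it, deleting the resulting parallel edge, and iterating---coincides with the paper's, but you reach that configuration via different ingredients. The paper decomposes $N^u$ into \emph{blobs} (Proposition~\ref{blobreduce}) and, within a blob, obtains a degree-$2$ vertex from the attachment point of a leaf; it then invokes a bespoke lemma (Lemma~\ref{chordaltriangle}) showing that in a chordal graph without cut edges and with minimum degree $2$ every vertex lies in a triangle. You instead decompose into \emph{blocks} (Lemma~\ref{Lemma_blocks_are_SP}), observe that binariness together with $\vert X\vert\geq 2$ forbids $K_4$, and then appeal to the classical fact that chordal graphs possess a simplicial vertex; the $K_4$-freeness forces this simplicial vertex to have degree exactly $2$ and to sit in a triangle. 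Your route is arguably cleaner because the simplicial-vertex property is standard and immediately supplies both the degree bound and the triangle, whereas the paper needs its custom lemma and a separate argument for why a degree-$2$ vertex persists after each reduction step. One point worth making explicit in your write-up: your justification that $B'=B-v$ remains $2$-connected should also treat the case where the putative cut vertex $c$ equals $a$ or $b$; there $v$ becomes a leaf in $B-c$, so removing it cannot disconnect anything, and the argument goes through.
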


\begin{proof}
Let $N^u$ be a proper unrooted phylogenetic network with at least two leaves, such that $N^u$ is binary and chordal. 
If $N^u$ is a tree, there is nothing to show, because $N^u$ is trivially edgebased and treebased. 
Thus, now assume that $N^u$ is not a tree. This implies that $N^u$ has to contain at least one non-trivial blob (if it only contained trivial blobs, $N^u$ would be a tree).

By Lemma \ref{blobreduce}, it now suffices to consider one such non-trivial blob of $N^u$, which we denote by $G$. As $G$ is a non-trivial blob, this means that $G$ has no cut edges and no leaves; in particular, $G$ only has vertices of degree 2 and 3, and as $N^u$ has leaves, the existence of a degree-2 vertex $u$ in $G$ is guaranteed. Moreover, $G$ is still chordal (as the deletion of leaves cannot destroy chordality). Now note that in the given chordal graph, \emph{every} vertex belongs to a triangle by Lemma \ref{chordaltriangle} in the Appendix. Therefore, this applies also to $u$, so $u$ and its neighbors $v$ and $w$ form a triangle. 

So we have a chordal graph, in which all vertices have degree at least 2 and at most 3, and we have one vertex $u$ of degree 2, which belongs to a triangle $uvw$. We now repeat the following procedure: 

First we suppress $u$. As $v$ and $w$ are adjacent (they belong to the triangle $uvw$), this causes a parallel edge $e=\{v,w\}$. Deleting this parallel edge will strictly decrease the degrees of $v$ and $w$. 
Thus, if the degrees of $v$ and $w$ were both 2 before the deletion of the parallel edge, we now get two new leaves. However, in this case, the edge $e=\{v,w\}$ is the only edge left, and thus, $N^u$ is edgebased.
Else, if either $v$ or $w$ (or both) have degree 2 after the deletion of the parallel edge, we re-name this vertex (or one of them, respectively) to $u$. Again, as the current graph is still chordal (we did not increase the cycle length of any cycle), the new vertex $u$ of degree 2 belongs to a triangle, whose suppression causes a parallel edge and so forth. So we can repeat this procedure, which is depicted by Figure \ref{Fig_BinaryChordal}, until only one edge remains. This completes the proof.

\begin{figure}
\centering
\includegraphics[scale=0.25]{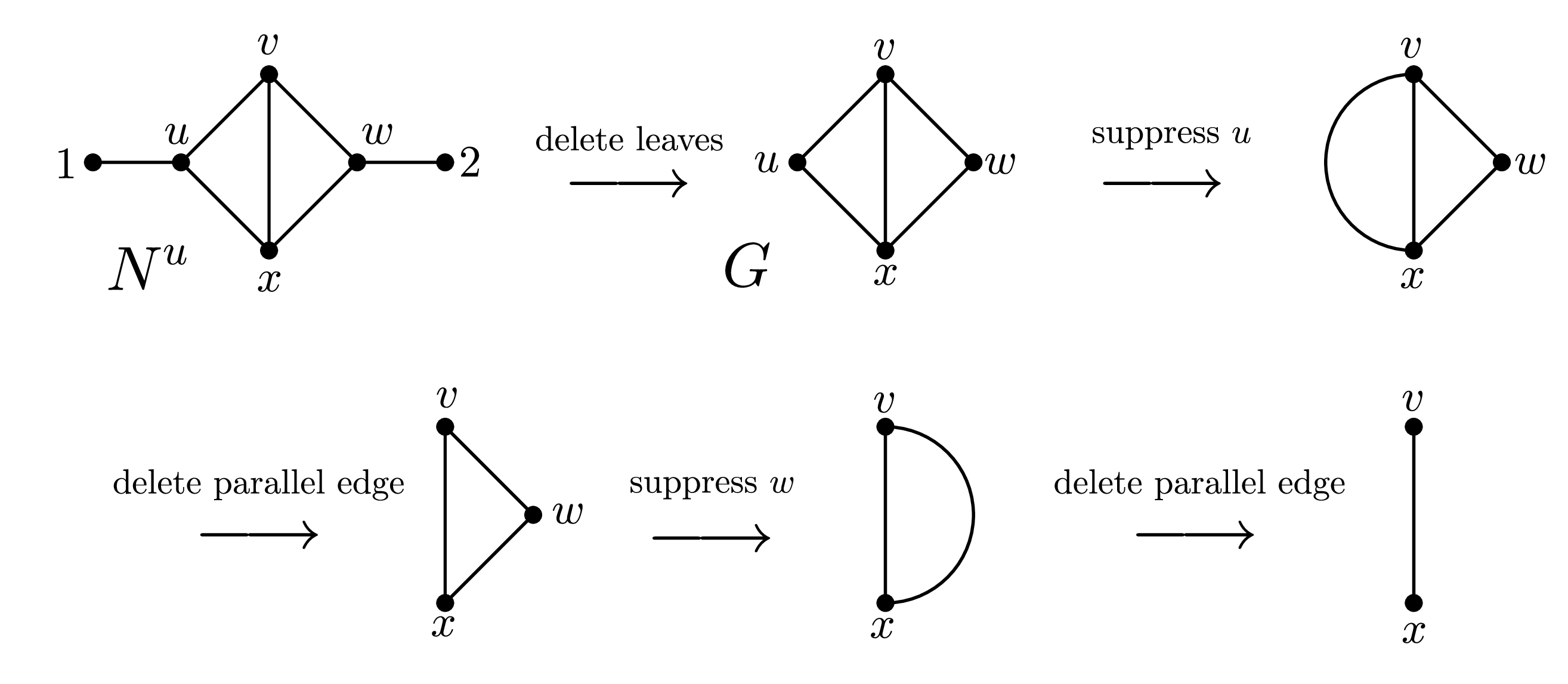}
\caption{Proper unrooted phylogenetic network $N^u$ (consisting of one non-trivial blob and two trivial blobs (leaves)) that is binary and chordal. After deleting its leaves, it can be reduced to a single edge by a sequence of vertex suppression and edge deletion operations. First, we consider the triangle $uvx$ and suppress $u$. This results in a parallel edge between $v$ and $x$, which gets deleted. Then, the triangle $vwx$ is considered and $w$ is suppressed. Deleting the resulting parallel edge between $v$ and $x$ leads to a single edge. This implies that $N^u$ is edgebased.}
\label{Fig_BinaryChordal}
\end{figure}
\end{proof}

\begin{remark} \label{remark_perfect}
A generalization of chordal graphs are so-called \emph{perfect} graphs (also known as \emph{Berge} graphs), where a perfect graph is a graph $G$ such that neither $G$ nor its complement $\bar{G}$ contain an odd cycle of length $\geq 5$. An interesting question is whether the result of all binary chordal networks being edgebased (Theorem \ref{binarychordal}) generalizes to binary perfect networks. If we only consider $\mathcal{LCUT}(N^u)$ this is not necessarily the case, as there are networks $N^u$ such that $\mathcal{LCUT}(N^u)$ is binary and perfect, but not edgebased (cf. Figure \ref{Fig_perfectbinary}).
\end{remark}

\begin{figure}[htbp]
	\centering
	\includegraphics[scale=0.5]{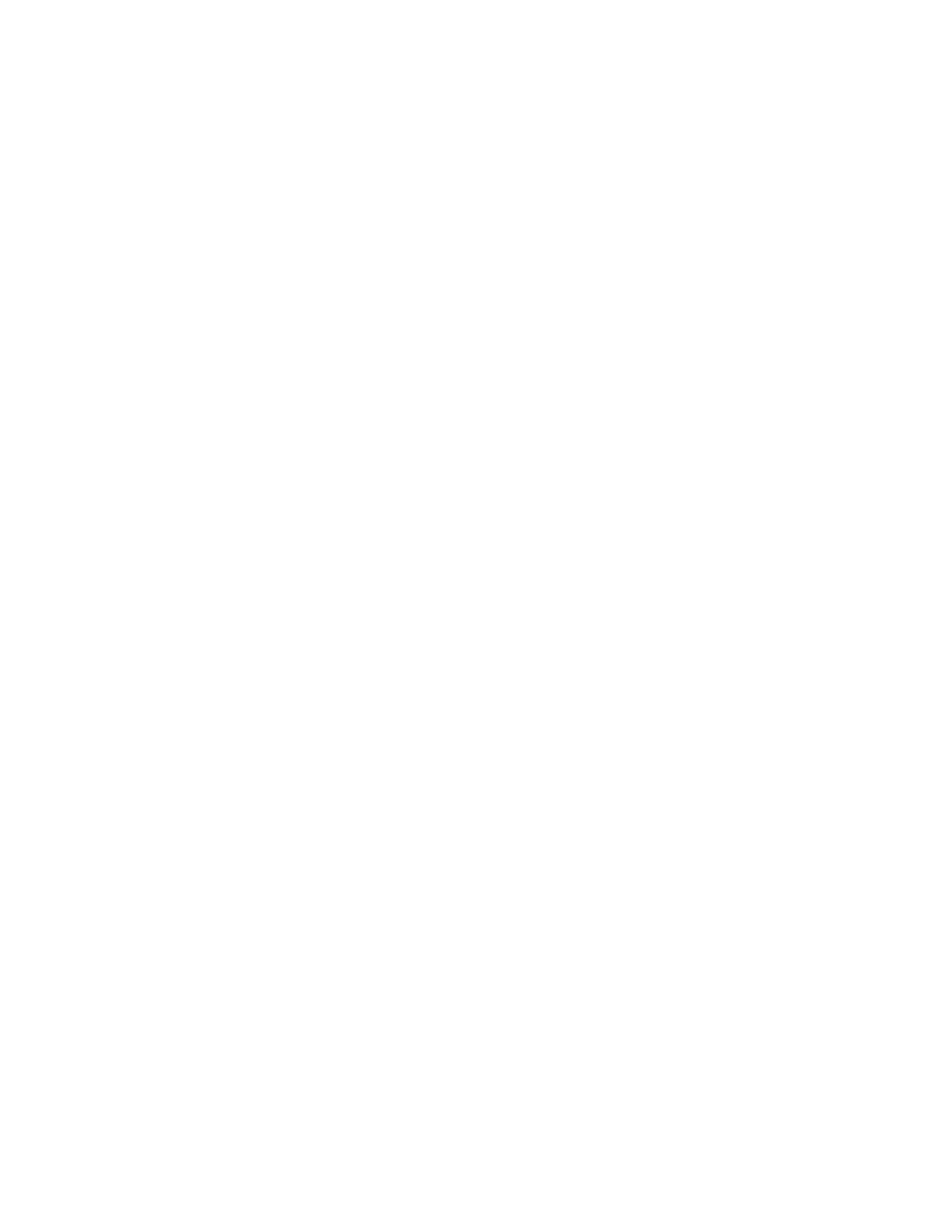}
	\caption{Proper phylogenetic network such that $\mathcal{LCUT}(N^u)$ is a binary and perfect graph. $N^u$ is treebased (the support tree is highlighted in bold), but not edgebased.}
	\label{Fig_perfectbinary}
\end{figure}

\subsubsection{Relationships between different classes of treebased networks}
In the previous sections we have introduced a variety of networks that are guaranteed to be treebased, ranging from edgebased networks to $\mathcal{H}$-connected ones. We conclude this section by analyzing the relationships between these classes.

Figure \ref{venndiagram} shows a Venn diagram of different classes of proper phylogenetic networks in connection with treebasedness. 

If the intersection of different classes of such networks is non-empty, Figure \ref{venndiagram} contains representative examples. 
To summarize, we have
\begin{itemize}
\item There exist proper phylogenetic networks that are treebased (e.g. Figure 6 in \citet{paper2}).
\item Not all proper phylogenetic networks are treebased (e.g. Figure 7 in \citet{paper2}).
\item All proper phylogenetic networks that are edgebased are treebased (cf. Theorem \ref{edgebased_implies_treebased}).
\item All proper phylogenetic networks that are binary and chordal, are edgebased and thus also treebased (cf. Theorem \ref{binarychordal}).
\item Proper phylogenetic networks that are chordal, are not necessarily treebased (cf. Figure \ref{Fig_Chordal_not_treebased}).
\item Proper phylogenetic networks that are $\mathcal{H}$-connected, are treebased (cf. Theorem \ref{thm-suff}, Part 2). 
\end{itemize}

Note, however, that the intersection of networks that are at the same time edgebased, $\mathcal{H}$-connected, and non-chordal is empty. This is due to the fact that such networks do not exist. We will explain this subsequently (cf. Remark \ref{remark_emptyset}).
Moreover, even if the network \emph{is} chordal, the classes of $\mathcal{H}$-connected and edgebased networks have only a very small overlap, as we will show in the following (cf. Theorem \ref{emptyset}).

So they are indeed very different types of networks. We will subsequently fully characterize their overlap, i.e. we will describe which phylogenetic networks are $\mathcal{H}$-connected and edgebased. In particular, we will show that they are all chordal. We start with the following theorem. 

\begin{figure}[H]
\centering
\includegraphics[scale=0.65]{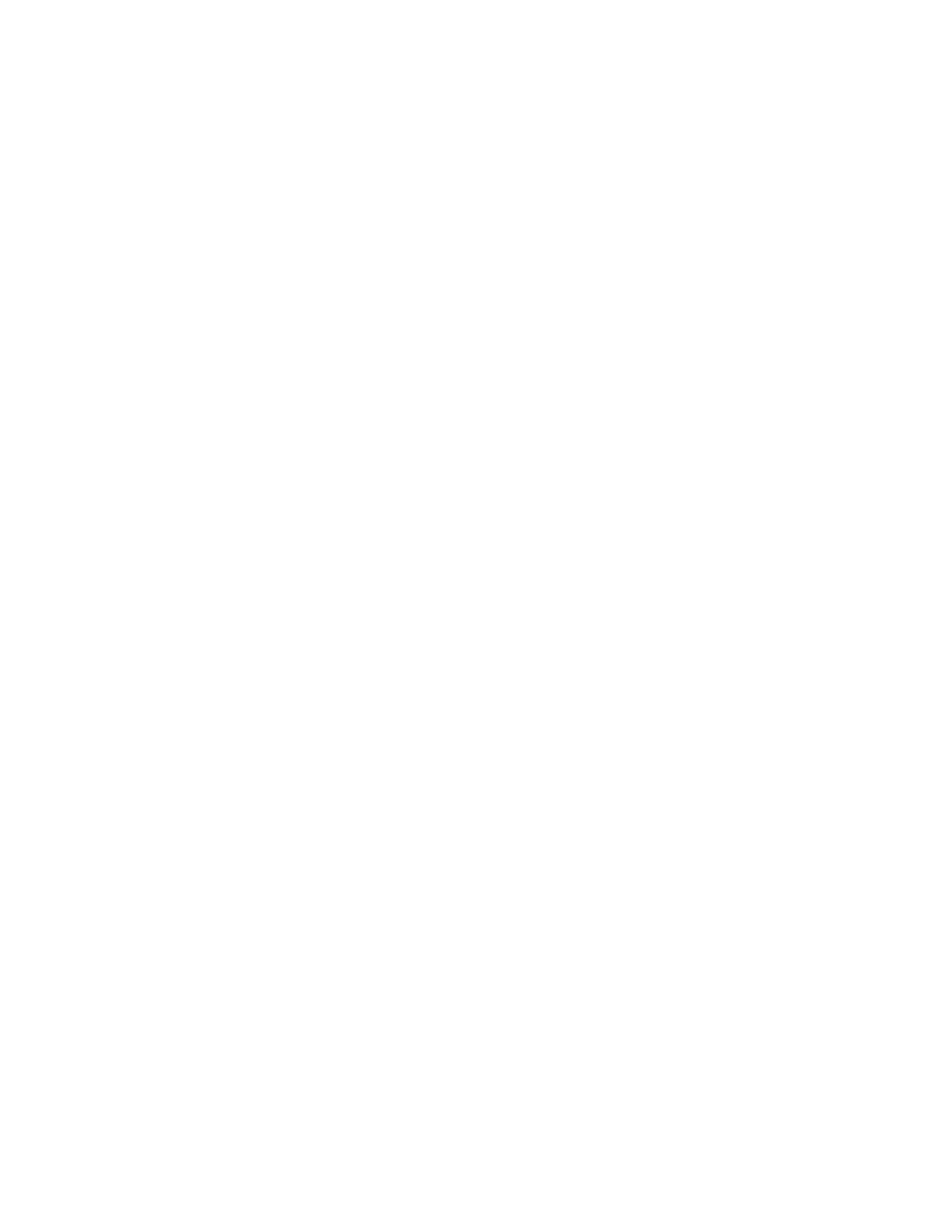}
\caption{Venn diagram of different classes of proper phylogenetic networks and their connection to treebasedness.}
\label{venndiagram}
\end{figure}

\begin{theorem} \label{emptyset}Let $N^u$ be an edgebased and $\mathcal{H}$-connected phylogenetic network. Then, $\mathcal{LCUT}(N^u)$ contains less than four vertices.
\end{theorem}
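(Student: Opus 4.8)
The plan is to show that an edgebased and $\mathcal{H}$-connected network $N^u$ must be extremely small, by exploiting the tension between the two hypotheses. Recall that $\mathcal{H}$-connectedness means $\mathcal{LCUT}(N^u)$ is Hamilton connected, and in particular Hamiltonian (hence $2$-connected, provided it has at least three vertices). On the other hand, $N^u$ being edgebased means (by Corollary \ref{cor_gsp_edgebased}, since $N^u$ is loopless by definition) that $N^u$ is a $GSP$ graph, and by Lemma \ref{Lemma_blocks_are_SP} every block of $N^u$ is an $SP$ graph. The key structural observation I would use is that $\mathcal{LCUT}(N^u)$ is obtained from $N^u$ by deleting the leaves and their pendant edges, so its blocks are closely tied to the non-trivial blobs/blocks of $N^u$; an $SP$ graph cannot contain a $K_4$-minor (this is the classical characterization of $SP$, and hence $GSP$, graphs via excluded minors, which follows from Definition \ref{Def_GSP}). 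So the engine of the proof is: Hamilton connectedness forces $\mathcal{LCUT}(N^u)$ to be ``richly connected'', while edgebasedness forbids a $K_4$-minor, and these two can only coexist on very few vertices.

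First I would reduce to the case where $G := \mathcal{LCUT}(N^u)$ is $2$-connected. If $\vert V(G)\vert \geq 3$ and $G$ is Hamilton connected, then $G$ is certainly Hamiltonian and therefore $2$-connected; moreover $G$ has minimum degree $\geq 2$ and, more usefully, Hamilton connectedness of a graph on $n \geq 4$ vertices forces minimum degree $\geq 3$ (take a vertex $v$ of degree $2$ with neighbours $p,q$: a Hamiltonian path from $p$ to $q$ must enter and leave $v$, so both of $v$'s edges are used, but then the path from $p$ to $q$ is forced to start $p,v,q$ and cannot reach the remaining $n-3\geq 1$ vertices — contradiction). So assume for contradiction $\vert V(G)\vert \geq 4$; then $\delta(G)\geq 3$ and $G$ is $2$-connected. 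Next I would argue that such a $G$ contains a $K_4$-minor: a $2$-connected graph with minimum degree at least $3$ is not a series--parallel graph, equivalently it has a $K_4$-minor (this is the standard forbidden-minor characterization). Finally I would transfer this minor back to $N^u$: since $G = \mathcal{LCUT}(N^u)$ is obtained from $N^u$ by deleting leaves (degree-$1$ vertices) and $N^u$ has no degree-$2$ vertices, $G$ is a subgraph of $N^u$ on the inner vertices, so a $K_4$-minor in $G$ is a $K_4$-minor in $N^u$. But $N^u$ is edgebased, hence (Corollary \ref{cor_gsp_edgebased}) a $GSP$ graph, so by Lemma \ref{Lemma_blocks_are_SP} each of its blocks is $SP$ and thus $K_4$-minor-free; since any $K_4$-minor lives inside a single block, this is a contradiction. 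Therefore $\vert V(G)\vert < 4$.

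I expect the main obstacle to be making the ``transfer'' step fully rigorous in the paper's own vocabulary, since the paper does not state the excluded-minor characterization of $SP$/$GSP$ graphs explicitly — only the recursive Definition \ref{Def_GSP} and the block-decomposition Lemma \ref{Lemma_blocks_are_SP}. To stay self-contained I would instead phrase the argument in terms of the reduction operations: a $2$-connected graph on $\geq 4$ vertices with $\delta \geq 3$ admits \emph{no} leaf to delete and \emph{no} degree-$2$ vertex to suppress, and after deleting parallel edges and loops it still has $\delta \geq 3$ and $\geq 4$ vertices, so it can never be reduced to $K_2$ — directly contradicting that $N^u$ (and hence, by Proposition \ref{blobreduce} applied to the non-trivial blob containing $G$, this blob) is edgebased. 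A secondary, more delicate point is the relationship between the blocks/blobs of $N^u$ and those of $\mathcal{LCUT}(N^u)$: deleting leaves can split a blob or create new cut vertices, so I would be careful to locate a single $2$-connected piece of $G$ — e.g. the block of $G$ through which the Hamiltonian cycle of $G$ passes, which is all of $G$ once $G$ is $2$-connected — and observe it sits inside one blob of $N^u$, to which Proposition \ref{blobreduce} and Theorem \ref{thm-edgebased} apply. Modulo this bookkeeping, the proof is short: Hamilton connectedness gives high connectivity and minimum degree $\geq 3$ on $\geq 4$ vertices, which is incompatible with being reducible to $K_2$ by the four restriction operations.
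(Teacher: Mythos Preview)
Your proposal is correct and the core mechanism is exactly the paper's: Hamilton connectedness on at least four vertices forces minimum degree at least three (this is precisely Lemma~\ref{nodeg2}), and then $\mathcal{LCUT}(N^u)$ admits no leaf deletion, no degree-2 suppression, no parallel edges (since $N^u$ is simple), and no loops, so no restriction operation applies and the graph cannot shrink to $K_2$.

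Where you diverge from the paper is in the packaging. The $K_4$-minor detour is a valid alternative but, as you note yourself, not available in the paper's vocabulary, so your fallback argument is the right one. More importantly, the ``transfer'' step you worry about is much simpler than your blob/block discussion suggests: the paper simply observes that $\mathcal{LCUT}(N^u)$ is obtained from $N^u$ by a sequence of leaf deletions (restriction operation~1), and then invokes Theorem~\ref{thm-edgebased} to say that \emph{any} sequence of restriction operations on an edgebased graph must eventually reach $K_2$. Hence one may route the reduction of $N^u$ through $\mathcal{LCUT}(N^u)$, and since the latter is stuck with at least four vertices, $\mathcal{LS}(N^u)=\mathcal{LCUT}(N^u)$ and $N^u$ is not edgebased. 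No appeal to Proposition~\ref{blobreduce}, blocks, or minors is needed; the order-independence theorem does all the work.
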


\begin{remark} \label{remark_emptyset}
This theorem actually shows that there are no edgebased, $\mathcal{H}$-connected and non-chordal phylogenetic networks, because non-chordal networks require a cycle of length at least four (without a chord) and thus at least four vertices in $\mathcal{LCUT}$. 
\end{remark}

Before we can prove Theorem \ref{emptyset}, we need two more lemmas.

\begin{lemma}\label{Hnocuts}
Let $N^u$ be an $\mathcal{H}$-connected phylogenetic network such that $\mathcal{LCUT}(N^u)$ consists of more than just one edge. Then, $\mathcal{LCUT}(N^u)$ contains no cut vertices and no cut edges.
\end{lemma}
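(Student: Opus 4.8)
The plan is to rely entirely on the single hypothesis at our disposal, namely that $\mathcal{LCUT}(N^u)$ is Hamilton connected, together with the elementary observation already recorded in Section~\ref{graphtheo} that every Hamilton connected graph on at least three vertices is Hamiltonian. First I would note that $\mathcal{LCUT}(N^u)$ is by construction a simple connected graph, and since by assumption it consists of more than one edge it has at least three vertices; hence it contains a Hamiltonian cycle, which I will denote by $C$. All further arguments then just say that a Hamiltonian cycle cannot coexist with a cut vertex or a cut edge.

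To rule out cut vertices, I would argue by contradiction: suppose $v$ is a cut vertex of $\mathcal{LCUT}(N^u)$. Since $C$ visits every vertex of $\mathcal{LCUT}(N^u)$ exactly once, deleting $v$ from $C$ turns $C$ into a Hamiltonian path of $\mathcal{LCUT}(N^u)-v$; in particular $\mathcal{LCUT}(N^u)-v$ is connected, contradicting the assumption that $v$ is a cut vertex.

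To rule out cut edges, suppose $e=\{a,b\}$ is a cut edge of $\mathcal{LCUT}(N^u)$. Then $\mathcal{LCUT}(N^u)-e$ has exactly two connected components $A\ni a$ and $B\ni b$, and $e$ is the only edge of $\mathcal{LCUT}(N^u)$ joining $A$ and $B$. Since $\mathcal{LCUT}(N^u)$ is simple, the cycle $C$ uses $e$ at most once; but $C$ is Hamiltonian and $a\in A$, $b\in B$, so traversing $C$ one must pass from $A$ to $B$ and back, using at least two edges between $A$ and $B$ --- a contradiction.

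I do not anticipate a genuine obstacle here; the statement is a routine instance of the classical fact that Hamiltonicity excludes both cut vertices and bridges. The only points that deserve a little care are verifying that ``more than one edge'' together with simplicity forces $\vert V(\mathcal{LCUT}(N^u))\vert\geq 3$, so that ``Hamilton connected implies Hamiltonian'' is applicable, and recording that $\mathcal{LCUT}(N^u)$ is simple so that $C$ cannot traverse a cut edge twice. One could alternatively invoke the facts that a Hamiltonian graph is $2$-connected and $2$-edge-connected, but arguing directly with the Hamiltonian cycle $C$ is the most self-contained route.
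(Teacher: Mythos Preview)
Your proof is correct, but it takes a slightly different route from the paper's. The paper argues directly from the definition of Hamilton connectedness: given a cut vertex $v$, it picks vertices $u$ and $w$ lying in different components of $\mathcal{LCUT}(N^u)-v$ and observes that any walk from $u$ to $v$ that also visits $w$ must pass through $v$ at least twice, so no Hamiltonian $u$--$v$ path can exist, contradicting Hamilton connectedness. For cut edges, the paper simply notes that the endpoints of a cut edge are cut vertices and reduces to the first case. You instead first weaken Hamilton connectedness to Hamiltonicity (using $\vert V\vert\geq 3$) and then invoke the classical incompatibility of a Hamiltonian cycle with cut vertices and bridges. Both arguments are short and elementary; the paper's is marginally more direct in that it never needs the implication ``Hamilton connected $\Rightarrow$ Hamiltonian'', while yours has the small advantage of making the conclusion an instance of a standard textbook fact and of treating cut edges independently rather than by reduction to cut vertices.
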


\begin{proof} Let $N^u$ be an $\mathcal{H}$-connected phylogenetic network such that $\mathcal{LCUT}(N^u)$ consists of more than just one edge.  Assume $\mathcal{LCUT}(N^u)$ contains a cut vertex $v$. As $v$ is a cut vertex, there are at least two more vertices $u$ and $w$ which get disconnected by the removal of $v$. So the only paths from $u$ to $w$ in $\mathcal{LCUT}(N^u)$ are all via $v$. This implies that there cannot be a Hamiltonian path from $u$ to $v$, because any sequence of vertices starting at $u$ and going via $w$ (and possibly other vertices) to $v$ would visit $v$ at least twice. Thus, if $N^u$ contains cut vertices, $N^u$ is not $\mathcal{H}$-connected, which is a contradiction.

On the other hand, if $\mathcal{LCUT}(N^u)$ contains a cut edge $e=\{u,v\}$, this implies that $u$ and $v$ are cut vertices. So again, this leads to a contradiction. This completes the proof. 
\end{proof}

\begin{lemma} \label{nodeg2}
Let $G=(V,E)$ be a Hamilton-connected graph with at least 4 vertices. Then for all $v \in V$ we have $deg(v)>2$. 
\end{lemma}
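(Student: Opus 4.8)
The plan is to argue by contradiction: assume $G$ is Hamilton-connected with $|V|\ge 4$ but contains a vertex $v$ with $deg(v)\le 2$, and derive a contradiction by inspecting a cleverly chosen Hamiltonian path. The guiding observation is elementary: in any path, an \emph{interior} vertex has exactly two neighbours along the path, so a vertex of $G$ with fewer than two distinct neighbours can never be interior to a Hamiltonian path, and a vertex with exactly two neighbours, when it is interior, forces both of those neighbours to sit next to it on the path.

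First I would dispose of the degenerate cases. If $deg(v)=0$ then $G$ is disconnected, which contradicts Hamilton-connectedness (indeed already contradicts the existence of a single Hamiltonian path between two vertices). If $deg(v)=1$, let $u$ be the unique neighbour of $v$ and pick any vertex $w\notin\{u,v\}$; such a $w$ exists since $|V|\ge 4$. A Hamiltonian path from $u$ to $w$ must visit $v$, but $v$ has only one incident edge and hence can only occur as an \emph{endpoint} of that path, which is impossible because its endpoints are $u$ and $w$. (If one wishes to allow multigraphs, the same argument covers a degree-$2$ vertex whose two edges are parallel, since it too has a single neighbour.)

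The main case is $deg(v)=2$ with two distinct neighbours; write $N(v)=\{u,w\}$, $u\neq w$. Consider a Hamiltonian path $P$ from $u$ to $w$, which exists by Hamilton-connectedness. Since $v\notin\{u,w\}$, the vertex $v$ is interior to $P$, so its two $P$-neighbours must be graph-neighbours of $v$, forcing them to be exactly $u$ and $w$; thus $P$ contains the length-two subpath through $u$, $v$, $w$ in that order. But $u$ is an endpoint of $P$, so its only $P$-neighbour is $v$, and likewise $w$'s only $P$-neighbour is $v$. Hence $P$ consists solely of the three vertices $u,v,w$, contradicting $|V|\ge 4$. This contradiction completes the proof.

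I do not anticipate any real obstacle here; the only point that needs a moment's care is the choice of endpoints for the Hamiltonian path so that $v$ is forced to be interior, and this is exactly where the hypothesis $|V|\ge 4$ (rather than $|V|\ge 3$) is essential --- in $K_3$ every vertex has degree $2$, yet $K_3$ is Hamilton-connected, so the bound cannot be weakened.
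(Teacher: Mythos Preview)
Your proof is correct and follows essentially the same approach as the paper: for the key case $deg(v)=2$ with $N(v)=\{u,w\}$, both arguments consider a Hamiltonian path from $u$ to $w$ and observe that it cannot accommodate any fourth vertex (the paper phrases this as ``no path from $u$ to $w$ via some other vertex $x$ exists without repeating $u$ or $w$'', which is exactly your observation that $P$ would have to consist of only $u,v,w$). The handling of the degenerate cases differs slightly --- the paper rules out $deg(v)=1$ via a cut-edge argument rather than your direct endpoint argument --- but this is cosmetic.
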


\begin{proof} First note that in a Hamilton-connected graph, there are clearly no isolated vertices, i.e. $deg(v)>0$ for all $v \in V$. Moreover, there cannot be any vertices of degree 1 in $G$, because by the same arguments used in the proof of Lemma \ref{Hnocuts}, $G$ cannot contain a cut edge (but each edge incident to a leaf would be a cut edge). So we conclude $deg(v)>1$ for all $v \in V$. 
So now let $u$, $v$, $w$ be in $V$ such that $deg(v)=2$ and $u$ and $w$ are the two neighbors of $v$ in $G$, and let $x$ denote some other vertex in $V$, which must exist as $|V|\geq 4$. 
Now there is no Hamiltonian path from $u$ to $w$ visiting both $v$ and $x$.
If a path from $u$ to $w$ starts via $v$, $x$ cannot be contained in it unless either $u$ or $w$ is visited twice. On the other hand, if a path from $u$ to $w$ visits $x$ before $v$, then $v$ can only be reached by visiting either $u$ or $w$ twice. In both cases, the corresponding path from $u$ to $w$ is not Hamiltonian. 
So this is a contradiction, as $G$ is Hamilton-connected. This completes the proof. 
\end{proof}

We are now in the position to prove Theorem \ref{emptyset}.

\begin{proof}[Theorem \ref{emptyset}]Assume the statement is wrong, i.e. assume there exists a phylogenetic network $N^u$ which is $\mathcal{H}$-connected and edgebased and such that  $\mathcal{LCUT}(N^u)$ contains at least four vertices.  As $N^u$ is $\mathcal{H}$-connected, by Lemma \ref{nodeg2}, $\mathcal{LCUT}(N^u)$ contains no vertices of degree at most 2 because it by assumption contains at least four vertices. We now consider $\mathcal{LS}(N^u)$. When we create $\mathcal{LS}(N^u)$ from $\mathcal{LCUT}(N^u)$ (note that we can go from $N^u$ to $\mathcal{LS}(N^u)$ via $\mathcal{LCUT}(N^u)$ as the order of restriction operations does not matter due to Theorem \ref{thm-edgebased}), there are no degree-2 vertices to suppress. Moreover, there are no parallel edges, because if $\mathcal{LCUT}(N^u)$ contained parallel edges, so would $N^u$, which contradicts the definition of a phylogenetic network. Additionally, there can also be no leaves as this would imply degree-1 vertices (which cannot exist due to Lemma \ref{nodeg2}). So there is no leaf to suppress, no degree-2 vertex and no parallel edge -- in other words, $\mathcal{LS}(N^u)=\mathcal{LCUT}(N^u)$ as there is nothing to shrink. As $|V(\mathcal{LCUT}(N^u))|\geq 4$, this implies $|V(\mathcal{LS}(N^u))|\geq 4$, which shows that $N^u$ cannot be edgebased. This is a contradiction. Therefore, the assumption was wrong and such a network cannot exist. This completes the proof. 
\end{proof}

Next we want to characterize all cases in which a phylogenetic network is $\mathcal{H}$-connected and edgebased. We will show that the number of networks in this class is in fact very small -- in fact we can fully characterize their $\mathcal{LCUT}$ graphs.

\begin{theorem} \label{Hedgebased} Let $N^u$ be an $\mathcal{H}$-connected and edgebased phylogenetic network. Then, one of the following two cases holds:
\begin{itemize}
\item $N^u$ is a tree with at most one inner edge, i.e. $\mathcal{LCUT}(N^u)$ consists of either only one vertex or one edge.
\item $N^u$ contains precisely one cycle, and this cycle is a triangle, and $\mathcal{LCUT}(N^u)$ consists only of this triangle. 
\end{itemize}
In particular, $N^u$ is chordal.
\end{theorem}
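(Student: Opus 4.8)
The plan is to lean on Theorem~\ref{emptyset}, which already pins $\mathcal{LCUT}(N^u)$ down to fewer than four vertices, and then to enumerate the handful of possibilities for $\mathcal{LCUT}(N^u)$ and translate each one back into a structural statement about $N^u$. First I would record three elementary facts about $\mathcal{LCUT}(N^u)$: (a) it is a simple graph, by definition of the leaf cutting procedure; (b) it is connected, since deleting a degree-$1$ vertex from a connected graph never disconnects it, so deleting \emph{all} leaves of the connected graph $N^u$ leaves a connected graph (I would set aside at the outset the degenerate case $N^u = K_2$ with two leaves, which is already a tree with no inner edge); and (c) it is Hamilton connected, because $N^u$ is $\mathcal{H}$-connected. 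Combining (a)--(c) with Theorem~\ref{emptyset}, $\mathcal{LCUT}(N^u)$ is a connected simple Hamilton-connected graph on at most three vertices, hence one of $K_1$, $K_2$, $P_3$, $K_3$.

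Next I would rule out $P_3$: it has more than one edge and its middle vertex is a cut vertex, so by Lemma~\ref{Hnocuts} it cannot be $\mathcal{LCUT}(N^u)$ for an $\mathcal{H}$-connected network (alternatively, one checks directly that the path on three vertices has no Hamiltonian path from an endpoint to its centre). Hence $\mathcal{LCUT}(N^u) \in \{K_1, K_2, K_3\}$.

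The bridge between $\mathcal{LCUT}(N^u)$ and $N^u$ is the observation that leaf cutting neither creates nor destroys cycles: a cycle never passes through a degree-$1$ vertex, and we only delete degree-$1$ vertices together with their incident edges, so $N^u$ and $\mathcal{LCUT}(N^u)$ have exactly the same cycles as subgraphs; moreover $V(\mathcal{LCUT}(N^u))$ is precisely the set of inner vertices of $N^u$, and the edges of $\mathcal{LCUT}(N^u)$ are precisely the inner edges of $N^u$. Using this: if $\mathcal{LCUT}(N^u) \in \{K_1, K_2\}$, then $\mathcal{LCUT}(N^u)$ is acyclic, so $N^u$ is acyclic, i.e. a tree, and it has at most one inner edge (namely $0$ or $1$ according as $\mathcal{LCUT}(N^u)$ is $K_1$ or $K_2$); this is the first alternative. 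If $\mathcal{LCUT}(N^u) = K_3$, then $N^u$ has exactly one cycle, namely the triangle that \emph{is} $\mathcal{LCUT}(N^u)$; this is the second alternative. In both alternatives $N^u$ has no cycle of length four or more, so $N^u$ is chordal, which gives the final clause.

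I do not anticipate a serious obstacle: the substantive work has already been done in Theorem~\ref{emptyset} (via Lemma~\ref{nodeg2}) and in Lemma~\ref{Hnocuts}. The only points needing a little care are making the cycle-preservation and inner-vertex correspondence between $N^u$ and $\mathcal{LCUT}(N^u)$ precise, and being explicit about boundary behaviour (the single-edge network $K_2$, and the fact that $K_1$ counts as Hamilton connected vacuously) so that the stated description of $\mathcal{LCUT}(N^u)$ is literally correct in every case.
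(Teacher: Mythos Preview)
Your proposal is correct and follows essentially the same route as the paper: invoke Theorem~\ref{emptyset} to bound $|V(\mathcal{LCUT}(N^u))|\le 3$, enumerate the connected simple graphs on at most three vertices, and eliminate the path $P_3$ via Lemma~\ref{Hnocuts}. Your write-up is in fact slightly more careful than the paper's, since you make explicit why $\mathcal{LCUT}(N^u)$ is connected and why acyclicity of $\mathcal{LCUT}(N^u)$ transfers back to $N^u$.
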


\begin{proof} Let $N^u$ be an $\mathcal{H}$-connected and edgebased phylogenetic network. By Theorem \ref{emptyset}, $\mathcal{LCUT}(N^u)$ contains at most 3 vertices. We now distinguish two cases: 

\begin{itemize}
\item If $|V(\mathcal{LCUT}(N^u))| \leq 2$, then $N^u$ is clearly a tree (because the at most two vertices of $\mathcal{LCUT}(N^u)$ cannot form a cycle) with at most one inner edge (because there is at most one edge in $\mathcal{LCUT}(N^u)$ as there are at most two vertices). So in this case, the first case of the theorem holds.
\item Now suppose that $|V(\mathcal{LCUT}(N^u))| =3$. Then we are clearly not in the first case of the theorem. Now assume that the three vertices $u$, $v$ and $w$ of $\mathcal{LCUT}(N^u)$ do not form a cycle. As $\mathcal{LCUT}(N^u)$ is connected, this implies that $u$, $v$ and $w$ form a path, i.e. $\mathcal{LCUT}(N^u)$ contains precisely two edges, say $e_1=\{u,v\}$ and $e_2=\{v,w\}$. Then, both $e_1$ and $e_2$ are cut edges, as their removal would disconnect $u$ and $w$. But as $N^u$ is $\mathcal{H}$-connected, $\mathcal{LCUT}(N^u)$ does not contain any cut edges due to Lemma \ref{Hnocuts}. So this is a contradiction. So the three vertices $u$, $v$ and $w$ must form a triangle. As there cannot be a further vertex in $\mathcal{LCUT}(N^u)$, this completes the proof.
\end{itemize}

\end{proof}

So by Theorem \ref{Hedgebased} we know that {\em all } $\mathcal{H}$-connected and edgebased phylogenetic networks are chordal, and that all of them either have a single vertex, a single edge or a triangle as their $\mathcal{LCUT}$ graph. However, note that the number of networks with these properties is not restricted, because arbitrarily many leaves can be attached to such $\mathcal{LCUT}$ graphs.

\section{Discussion and Conclusion} \label{sec_discussion}
The main aim of the present manuscript was to link treebasedness of phylogenetic networks to classical graph theory. To be precise, we have established links between treebasedness and the theory of Hamiltonian or Hamilton connected graphs as well as between treebasedness and the family of generalized series parallel graphs.

The close links of treebased networks and Hamiltonian or Hamilton connected graphs allow for sufficient criteria that make a network treebased, even if none of these criteria is necessary. We are sure that future research will establish even more links between Hamiltonicity of graphs and treebasedness of phylogenetic networks. Furthermore, as more and more classes of graphs are e.g. discovered to be Hamilton connected (see e.g. \citet{Hu2005, Alspach2013}), this will deliver more and more known graphs that lead to treebased networks.

However, none of these links to Hamiltonicity leads to classes of networks for which treebasedness can be efficiently checked, as the mentioned graph theoretical counterparts of treebasedness, e.g. testing if a graph is Hamiltonian, are known to be NP-complete (cf. \citet{Karp1972}). 

Additionally, however, we have introduced a class of networks that are guaranteed to be treebased, namely the class of edgebased networks. Interestingly, edgebased networks are closely related to another important concept in classic graph theory, namely the class of generalized series parallel graphs. In the present manuscript, we have shown that the links between treebasedness, edgebasedness and generalized series parallel graphs lead to a sufficient criterion for treebasedness that can be verified in linear time. In this regard, edgebased phylogenetic networks form a class of treebased networks that can easily be found. 
We, for example, showed that all unrooted, binary, chordal phylogenetic networks are edgebased. 
As mentioned in Remark \ref{remark_perfect}, an interesting question is whether this result generalizes to other classes of proper phylogenetic networks, e.g. to binary, perfect ones.
It would also be of interest to analyze if edgebased networks frequently occur in practice, i.e. when phylogenetic networks are constructed from biological data. As research on reconstructing phylogenetic networks from data is still at its beginning, this is hard to predict. However, we are sure that edgebased networks will be of practical relevance in future.

We concluded our study with analyzing the relationships between the classes of treebased networks presented in this manuscript and summarized them in Figure \ref{venndiagram}. Again, we are sure that future research will characterize more and more classes of treebased networks, adding to our results.

\section*{Acknowledgement} We wish to thank Clemens A. Fischer for helpful discussions concerning chordal graphs. Moreover, we thank two anonymous reviewers for their helpful comments on an earlier version of this manuscript.
The third and fifth author also thank the state Mecklenburg-Western Pomerania for the Landesgraduierten-Studentship. Moreover, the second author thanks the University of Greifswald for the Bogislaw-Studentship and the fifth author thanks the German Academic Scholarship Foundation for a studentship.

\newpage
\bibliographystyle{spbasic}      
\bibliography{References1}   

\section{Appendix}

\setcounter{lemma}{2}
\begin{lemma}
Let $G$ be a connected graph with vertex set $V(G)$ and edge set $E(G)$. Let $G'$ result from $G$ by deleting one loop. Then, $K_2$ is a restricted topological subgraph of $G$ if and only if $K_2$ is a restricted topological subgraph of $G'$.
\end{lemma}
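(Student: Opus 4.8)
The plan is to prove the two implications separately; the backward direction is essentially immediate, while the forward direction is where the work lies. For the backward direction, suppose $K_2$ is a restricted topological subgraph of $G'$. Since $G$ arises from $G'$ by adding a single loop at some vertex $u$, this is precisely an instance of Lemma~\ref{addsth} (with its operation~4); equivalently, one may simply delete that loop from $G$ as the very first reduction step, thereby reaching $G'$, and then append a reduction of $G'$ to $K_2$.

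For the forward direction, let $\ell=\{u,u\}$ be the loop with $G'=G-\ell$, and assume $K_2$ is a restricted topological subgraph of $G$, witnessed by a sequence of restriction operations $o_1,\dots,o_m$ producing graphs $G=G_0,G_1,\dots,G_m=K_2$. Since $K_2$ is loopless, $\ell\notin E(G_m)$, so there is a smallest index $i$ with $\ell\notin E(G_i)$. First I would show that the step $o_i$ is simply the removal of $\ell$ (operation~4, or operation~3 if $u$ happens to carry a parallel loop), so that $G_i=G_{i-1}-\ell$. The point is that $o_i$ cannot be the suppression or deletion of $u$: each of the four restriction operations preserves connectedness and does not increase $|V|$, so every $G_j$ is connected with $|V(G_j)|\ge|V(K_2)|=2$; hence for every $j<i$ the vertex $u$ has at least one non-loop incident edge in $G_j$, and together with the contribution $2$ of the loop $\ell$ this gives $\deg_{G_j}(u)\ge 3$. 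Thus $u$ is neither a leaf nor a degree-$2$ vertex in any of $G_0,\dots,G_{i-1}$, so none of $o_1,\dots,o_{i-1}$ involves $u$ at all, and the only way $\ell$ can disappear at step $i$ is that $o_i$ deletes $\ell$ itself.

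Next I would transfer the sequence to $G'$. Writing $H^-$ for the graph obtained from a graph $H$ by deleting $\ell$ whenever it is present, I would check that for each $j<i$ the operation $o_j$ --- which by the above touches neither $\ell$ nor $u$ --- is applicable to $G_{j-1}^-$ exactly when it is applicable to $G_{j-1}$, and that performing it commutes with deleting $\ell$, i.e. it turns $G_{j-1}^-$ into $G_j^-$. (The only case needing a word is when $o_j$ deletes one of two parallel loops whose partner is $\ell$: in $G_{j-1}^-$ that loop is no longer parallel, so one deletes it via operation~4 instead of operation~3, reaching the same graph.) Consequently, starting from $G'=G_0^-$ and applying $o_1,\dots,o_{i-1}$ we arrive at $G_{i-1}^-=G_{i-1}-\ell=G_i$, and then appending $o_{i+1},\dots,o_m$ reduces this to $G_m=K_2$. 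Hence $K_2$ is a restricted topological subgraph of $G'$, which completes the proof.

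The only genuinely delicate point --- and the one I expect to be the main obstacle --- is ruling out that a reduction of $G$ could dispose of the loop $\ell$ by first suppressing or deleting its vertex $u$ while $\ell$ is still present; this is exactly what the bound $\deg_{G_j}(u)\ge 3$ for $j<i$ excludes, and that bound in turn rests on the elementary observation, which I would state explicitly, that all four restriction operations preserve connectedness and never decrease the vertex count below two. Everything else is routine bookkeeping, since none of the operations ever reads off the degree of $u$ or the presence of $\ell$. Finally, the degenerate case $|V(G)|=1$ requires no separate argument: then $K_2$ is a restricted topological subgraph of neither $G$ nor $G'$, so the equivalence holds vacuously.
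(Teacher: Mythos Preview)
Your proof is correct, and it takes a genuinely different route from the paper's. The paper argues by minimal counterexample: it assumes a smallest $G$ (in number of edges) for which deleting the loop destroys reducibility to $K_2$, performs one reduction step on $G$ to reach a smaller $G''$, invokes minimality to delete the loop from $G''$, and then uses Lemma~\ref{addsth} to \emph{undo} that first step, landing on $G'$ with the $K_2$ property intact --- a contradiction. You instead take a fixed reduction sequence of $G$ and explicitly commute the deletion of $\ell$ to the front.

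Your approach buys two things. First, it is constructive: it manufactures a concrete reduction of $G'$ from one of $G$. Second, you make explicit the reason the loop cannot vanish except by direct deletion, via the bound $\deg_{G_j}(u)\ge 3$; the paper's proof tacitly uses the same fact (to ensure $\ell$ is still present in $G''$) without spelling it out. Conversely, the paper's use of Lemma~\ref{addsth} sidesteps your commutation bookkeeping entirely, which is a cleaner manoeuvre once that lemma is available.

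One phrasing point: the clause ``none of $o_1,\dots,o_{i-1}$ involves $u$ at all'' is stronger than what you actually need or prove --- operations of type~3 or~4 may well touch edges at $u$. What you have shown (and what suffices) is that $u$ is never the vertex deleted or suppressed by operations~1 or~2; the remaining cases are then correctly handled in your commutation check, including the edge case where $o_j$ deletes a loop parallel to $\ell$. Tightening that sentence would remove the only ambiguity in an otherwise complete argument.
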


\begin{proof} By Lemma \ref{addsth}, if $K_2$ is a restricted topological subgraph of $G'$, then it is also a restricted topological subgraph of $G$, so this direction is clear. 

Now assume that there is a graph $G$ such that $K_2$ is a restricted topological subgraph of $G$, but if we delete one loop of $G$ to obtain $G'$, $K_2$ no longer is a restricted topological subgraph. If such graphs exist, we may consider a minimal one in terms of the number of edges. So assume that $G$ is minimal with this property, i.e. for all graphs with fewer edges we know that if $K_2$ is a restricted topological subgraph, this property still holds after the deletion of a loop.

As $G$ has $K_2$ as a restricted topological subgraph, there is a sequence of the restriction operations which converts $G$ into $K_2$. However, there is also a loop $\{u,u\}$, whose deletion turns $G$ into $G'$. So the first operation to convert $G$ into $K_2$ cannot be the deletion of this loop. So the first step is either the deletion of a leaf (together with its incident edge), the suppression of a degree-2 vertex (which `melts' two edges into one), the deletion of one copy of a parallel edge or the deletion of some other loop. In all cases, we obtain a graph $G''$ which has fewer edges than $G$ and which must have $K_2$ as a restricted topological subgraph as it is on the path from $G$ to $K_2$. However, as $G$ is minimal with the property that the deletion of a loop can cause a loss of $K_2$ as a restricted topological subgraph, we can delete the loop $\{u,u\}$ from $G''$ to obtain $\widetilde{G}$, which again has $K_2$ as a restricted topological subgraph. By Lemma \ref{addsth}, we can undo the first step which we took from $G$ to $G''$, i.e. we can re-add the deleted leaf or degree-2 vertex or parallel edge or loop (note that this means we convert $\widetilde{G}$ into $G'$), without losing the property that $K_2$ is a restricted topological subgraph. Thus, $K_2$ is a restricted topological subgraph of $G'$, which contradicts our assumption. Therefore, such graphs cannot exist, which means that the question whether $K_2$ is a restricted topological subgraph of a graph $G$ cannot depend on the loops of $G$. This completes the proof.
\end{proof}

\begin{lemma}
Let $G$ be a connected graph with vertex set $V(G)$ and edge set $E(G)$. Let $G'$ result from $G$ by deleting one copy of a parallel edge. Then, $K_2$ is a restricted topological subgraph of $G$ if and only if $K_2$ is a restricted topological subgraph of $G'$.
\end{lemma}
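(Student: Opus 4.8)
The plan is to mirror the structure of the proof of Lemma \ref{loopdelete} (the loop-deletion case), adapting it to parallel edges. One direction is immediate: by Lemma \ref{addsth} (Operation 3), if $K_2$ is a restricted topological subgraph of $G'$, then adding back the copy of the parallel edge shows $K_2$ is a restricted topological subgraph of $G$. So the work is entirely in the converse: assuming $K_2$ is a restricted topological subgraph of $G$, show that deleting one copy of a parallel edge $e=\{u,v\}$ (to obtain $G'$) preserves this property.

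For the converse I would argue by contradiction, taking a counterexample $G$ minimal in the number of edges. Since $K_2$ is a restricted topological subgraph of $G$, there is a sequence of restriction operations reducing $G$ to $K_2$. I would distinguish cases according to the first operation in this sequence. If that first operation is the deletion of the very copy $e$ of the parallel edge $\{u,v\}$ that we want to remove, then that first step already produces $G'$, so $K_2$ is a restricted topological subgraph of $G'$ — contradiction. Otherwise the first operation produces some graph $G''$ with strictly fewer edges than $G$, still having $K_2$ as a restricted topological subgraph (being on the path to $K_2$), and still containing the parallel edge $\{u,v\}$ in some form; by minimality of $G$, deleting that copy from $G''$ yields a graph $\widetilde{G}$ that still has $K_2$ as a restricted topological subgraph, and then by Lemma \ref{addsth} we can undo the first operation on $\widetilde{G}$ to recover $G'$ without destroying the property, a contradiction.

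The step I expect to be the main obstacle is the bookkeeping in the case where the first operation is the suppression of a degree-2 vertex incident to $u$ or $v$, or a deletion of a leaf, in a way that interacts with the pair of parallel edges: one must check that ``the parallel edge $\{u,v\}$'' is still meaningfully present in $G''$ (possibly as an edge between the images of $u$ and $v$) so that the minimality hypothesis applies, and that undoing the first operation via Lemma \ref{addsth} genuinely reconstructs $G'$ rather than some other graph. In particular, if suppressing a degree-2 vertex $w$ adjacent to $u$ creates a new parallel edge, one has to be careful about which copy is being deleted; I would handle this by observing that in every such case $G''$ still contains two parallel edges between the relevant pair of vertices (so the minimality hypothesis still applies to the deletion of one of them), and that the operations of Lemma \ref{addsth} commute with that deletion in the required sense. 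Once these case distinctions are laid out, each individual case closes by the same ``undo with Lemma \ref{addsth}'' move, so the argument is essentially a careful enumeration rather than a deep new idea.
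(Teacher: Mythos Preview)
Your overall plan matches the paper's proof closely: one direction via Lemma \ref{addsth}, the other via a minimal (in number of edges) counterexample and a case analysis on the first restriction operation. However, there is one genuine gap in your case analysis.

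You assert that ``in every such case $G''$ still contains two parallel edges between the relevant pair of vertices''. This fails precisely when the first operation is the suppression of one of the \emph{endpoints} of $e$ itself. Concretely: if there are exactly two copies of $e=\{u,v\}$ and $\deg(v)=2$ (so $v$ is incident only to these two copies), then suppressing $v$ does not leave any parallel edge in $G''$ --- it produces a single loop $\{u,u\}$, and no copy of $e$ survives. Your minimality hypothesis, which concerns deletion of a copy of a parallel edge, therefore does not apply to $G''$, and the ``undo with Lemma \ref{addsth}'' move you describe cannot be executed as stated.

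The paper handles exactly this boundary case by invoking Lemma \ref{loopdelete}: from $G''$ one deletes the loop $\{u,u\}$ to obtain a graph $G'''$ which still has $K_2$ as a restricted topological subgraph, and then observes that attaching $v$ back as a leaf at $u$ (Operation 1 of Lemma \ref{addsth}) recovers $G'$. So the missing ingredient in your proposal is the appeal to the loop lemma for this one case; once you add that, your argument coincides with the paper's.
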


\begin{proof} By Lemma \ref{addsth}, if $K_2$ is a restricted topological subgraph of $G'$, then it is also a restricted topological subgraph of $G$, so this direction is clear. 

Now assume that there is a graph $G$ such that $K_2$ is a restricted topological subgraph of $G$, but if we delete a copy of a parallel edge of $G$ to obtain $G'$, $K_2$ no longer is a restricted topological subgraph. If such graphs exist, we may consider a minimal one in terms of the number of edges. So assume that $G$ is minimal with this property, i.e. for all graphs with fewer edges we know that if $K_2$ is a restricted topological subgraph, this property still holds after the deletion of a parallel edge.

As $G$ has $K_2$ as a restricted topological subgraph, there is a sequence of the restriction operations which converts $G$ into $K_2$. However, there is also an edge $e$ of which multiple copies exist, such that the deletion of $e$ turns $G$ into $G'$. So the first operation to convert $G$ into $K_2$ cannot be the deletion of $e$. Thus, the first step is either the deletion of a leaf (together with its incident edge), the suppression of a degree-2 vertex (which `melts' two edges into one), the deletion of one copy of a parallel edge other than $e$ or the deletion of a loop. In all cases, we obtain a graph $G''$ which has fewer edges than $G$ and which must have $K_2$ as a restricted topological subgraph as it is on the path from $G$ to $K_2$. However, as $G$ is minimal with the property that the deletion of a parallel edge can cause a loss of $K_2$ as a restricted topological subgraph, if $e$ is contained in $G''$, we can delete one copy of $e$ from $G''$ to obtain $\widetilde{G}$, which again has $K_2$ as a restricted topological subgraph. By Lemma \ref{addsth}, we can now undo the first step which we took from $G$ to $G''$, i.e. we can re-add the deleted leaf or degree-2 vertex or parallel edge or loop (note that this means we convert $\widetilde{G}$ into $G'$) to $\widetilde{G}$, without losing the property that $K_2$ is a restricted topological subgraph. Thus, $K_2$ is a restricted topological subgraph of $G'$, which contradicts our assumption. 

If, on the other hand, $G''$ does not contain $e$, this must imply that $e$ disappeared on the way from $G$ to $G''$ by one of the other operations. Note that a leaf deletion only affects a degree-1 vertex and its incident edge, which thus cannot be a parallel edge (else the vertex would have degree at least 2). Moreover, the deletion of a loop -- even if it was parallel, i.e. even if it existed multiple times -- would not cause the disappearance of an edge $e$ which is present multiple times in $G$; and neither would the deletion of another parallel edge which has nothing to do with $e$. So the only way for $e$ to disappear in the first step is if there are precisely two copies of $e=\{u,v\}$ that lead to a vertex $v$ which is \emph{only} incident to these two edges $e$, i.e. $deg(v)=2$. Then, the suppression of $v$ would lead to a loop $\{u,u\}$, and indeed no copy of $e$ would be present in $G''$. However, in this case, we know by Lemma \ref{loopdelete} that we can delete loop $\{u,u\}$ to obtain $G'''$, and $G'''$ still has $K_2$ as a restricted topological subgraph. As above, we can now undo the first step (which we took from $G$ to $G''$) in $G'''$ by Lemma \ref{addsth}. This leads to a graph $\widetilde{G}$, which still has $K_2$ as a restricted topological subgraph. Again by Lemma \ref{addsth}, we can then add vertex $v$ and connect it to vertex $u$ with one new edge $e=\{u,v\}$. This is equivalent to introducing a new leaf and thus keeps $K_2$ as a restricted topological subgraph. However, the resulting graph is $G'$, which means it cannot have $K_2$ as a restricted topological subgraph by assumption. Therefore, this is a contradiction.  

So in both cases we obtain a contradiction, which means that such graphs cannot exist. So the question whether $K_2$ is a restricted topological subgraph of a graph $G$ cannot depend on $G$'s copies of multiple edges. This completes the proof.
\end{proof}

\setcounter{lemma}{4}
\begin{lemma}
Let $G=(V,E)$ be a simple and biconnected $SP$ graph with at least three vertices. Then, there exists a spanning tree $T$ in $G$ whose leaves correspond to degree-2 vertices of $G$. In particular, no vertex $v \in V$ of $G$ with $deg(v) > 2$ is a leaf in $T$. 
\end{lemma}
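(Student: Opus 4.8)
I would prove a stronger two-terminal statement by structural induction on the series-parallel decomposition of $G$ (Definition~\ref{Def_GSP}) and then read off the lemma. The first observation is that a simple biconnected $SP$ graph with at least three vertices is necessarily a \emph{parallel} composition $G = H_1 \parallel H_2 \parallel \cdots \parallel H_k$ with $k\ge 2$, of two-terminal $SP$ graphs $H_i$ all sharing the two terminals $s,t$ of $G$: a series composition glues two pieces along a single vertex, and that vertex would then be a cut vertex of $G$, contradicting biconnectedness. Since $G$ is simple, at most one strand $H_i$ is the bare edge $\{s,t\}$, and every strand that is not the bare edge is itself, recursively, a series composition of smaller two-terminal $SP$ pieces.

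The inductive statement I would carry along concerns an arbitrary loopless two-terminal $SP$ graph $G$ with terminals $s,t$: for a prescribed ``role'' (\emph{leaf} or \emph{internal}) of each of $s$ and $t$ — any combination that is not outright impossible, e.g.\ a degree-$1$ terminal must be a leaf and ``both internal'' is excluded for $K_2$ — there is a spanning tree $T$ of $G$ realizing those roles, in which every leaf of $T$ other than $s$ and $t$ has degree $2$ in $G$, and in which the $T$-neighbour of any leaf-terminal is ``safe'', meaning it equals the other terminal, or has $G$-degree $2$, or has $T$-degree at least $3$ (so that later deleting that terminal cannot demote its neighbour to an illegitimate leaf). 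The base case $G=K_2$ is immediate. For a series composition $G = G_1\cdot G_2$ glued at the shared vertex $m$, one applies the hypothesis to $G_1$ and to $G_2$ — choosing the role of $m$ on the two sides, and the roles of the outer terminals, according to what is needed — and takes the edge-union of the two trees: $m$ then carries a tree-edge on each side, so it is automatically internal; the outer terminals keep the roles dictated by $G_1$ and $G_2$; and since gluing at $m$ does not change the degree of any non-terminal vertex, the ``leaf implies $G$-degree $2$'' and ``safe neighbour'' conditions transfer, the latter using the flexibility in the role of $m$ precisely when a terminal happens to be tree-adjacent to $m$.

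The parallel composition is the crux. I would keep the spanning tree $T_1$ of one strand $H_1$ unchanged, and for each other strand $H_i$ take a spanning tree $T_i$ of $H_i$ in which $s$ is a leaf, \emph{delete} $s$ from $T_i$ — legitimate precisely because $s$ is a leaf of $T_i$, so $T_i-s$ is still a connected spanning tree of $H_i-s$ and still contains $t$ — and glue $T_i-s$ onto the rest at $t$. Since two strands meet only in $\{s,t\}$, after deleting $s$ from all strands but $H_1$ the pieces are glued along the single vertex $t$, so the union is again a spanning tree, now of $G$. The role of $s$ is preserved (it keeps only its $T_1$-edge), and the only vertices whose degree drops are the former $T_i$-neighbours of $s$; here the ``safe neighbour'' clause does the work, since each such neighbour is either $t$ (a terminal, hence an admissible leaf of the assembled tree), or was already a degree-$2$ vertex of $H_i$ and therefore of $G$, or retains tree-degree at least $2$. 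One then checks that the inductive conditions survive for the composed tree, symmetrising in $s,t$ and in the choice of the kept strand as needed.

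Finally one applies this at the top level: in the spanning tree assembled for $G = H_1 \parallel \cdots \parallel H_k$ the roles of the strands are chosen so that both $s$ and $t$ become internal — or, if $s$ (respectively $t$) already has $G$-degree $2$, it is simply left as a leaf, which is harmless — so that the leaves of the assembled tree are exactly degree-$2$ vertices of $G$, which is the lemma. I expect the real difficulty to lie not in any single construction but in pinning down the exact form of the ``safe neighbour'' clause and of the admissible leaf/internal combinations, so that the inductive statement is at once strong enough to survive the cycle-breaking in the parallel step and weak enough to remain true for the degenerate strands (bare edges, short paths, and pieces in which a terminal has degree $1$); getting that bookkeeping right is the heart of the proof.
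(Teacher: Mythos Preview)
Your approach is genuinely different from the paper's. The paper does \emph{not} use the series--parallel decomposition at all: it proceeds by induction on $|V|$, with the base cases $n\le 6$ handled by an exhaustive computer-generated catalogue, and the inductive step driven by a structural lemma of Song (2007) asserting that every simple biconnected $SP$ graph on at least five vertices contains one of ten specific local configurations (two adjacent degree-$2$ vertices, two degree-$2$ vertices with identical neighbourhood, a degree-$4$ vertex adjacent to two degree-$2$ vertices in a prescribed pattern, and so on). For each of the ten cases one or two vertices are suppressed to obtain a smaller simple biconnected $SP$ graph, the hypothesis is invoked, and the resulting valid spanning tree is extended back by a short local case analysis. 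Your structural induction on the two-terminal decomposition is more conceptual --- it dispenses with both the external ten-case lemma and the computer-verified base catalogue --- but the price is precisely the strengthened two-terminal statement you must carry, and that is where the proposal is not yet closed.

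The concrete place your sketch breaks is the parallel step when a terminal, say $s$, has $G$-degree $>2$ but degree exactly $1$ in \emph{every} strand $H_i$; the theta graph (three internally disjoint length-$2$ paths between $s$ and $t$) is already an instance. Your construction keeps $s$ only inside $T_1$, so its tree-degree equals its $T_1$-degree, which is forced to be $1$; hence $s$ is a leaf of the assembled tree while $\deg_G(s)=3$, contradicting the conclusion. The phrase ``symmetrising in $s,t$ and in the choice of the kept strand'' does not by itself fix this, since the obstruction is symmetric in the strands. What does work is to attach some of the pruned pieces at $s$ rather than all at $t$ --- i.e.\ delete $t$ (not $s$) from a suitable subset of the $T_i$ --- so that $s$ picks up extra tree-edges; but then the ``safe neighbour'' bookkeeping must be maintained at both terminals simultaneously, and you must still arrange for $t$ to stay internal. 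Once you write down the exact inductive hypothesis (which role combinations are guaranteed, and precisely what ``safe'' means) and verify it survives this two-sided gluing, the argument should go through; as it stands the proposal identifies the right architecture but leaves open the very case the architecture is meant to handle.
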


Recall that we call such a spanning tree a \emph{valid} spanning tree (cf. Remark \ref{remark_valid}).
In order to prove Lemma \ref{Lemma_SP_SpanningTrees}, we require the following lemma from \citet{Song2007}. Therefore, recall that $N(v)$ denotes the neighborhood of a vertex $v$ in $G$, i.e. the set of vertices adjacent to $v$.

\setcounter{lemma}{7}
\begin{lemma}[adapted from Song 2007] \label{Lemma_SP_Cases}
Let $G=(V,E)$ be a simple and biconnected $SP$ graph with $\vert V \vert \geq 5$. Then one of the following conditions holds:
	\begin{enumerate}
	\item $G$ has two adjacent degree-2 vertices $x$ and $y$;
	\item $G$ has two different degree-2 vertices $x$ and $y$ and $N(x)=N(y)$;
	\item $G$ has a degree-4 vertex $z$ adjacent to two degree-2 vertices $x$ and $y$ such that $N(z) \setminus \{x,y\} = \{N(x) \cup N(y) \} \setminus \{z\}$;
	\item $G$ has a degree-3 vertex $w$ with $N(w)=\{x,y,z\}$ such that both $x$ and $y$ are degree-2 vertices, $N(x)=\{z,w\}$ and edge $\{y,z\} \notin E$;
	\item $G$ has two adjacent degree-3 vertices $x$ and $y$ such that $N(x) \cap N(y) = \{z\}$ and $N(z) = \{x,y\}$;
	\item $G$ has two adjacent degree-3 vertices $w_1$ and $w_2$ such that $N(w_1) = \{x,z_1, w_2\}$, $N(w_2) = \{y, z_2, w_1\}$, $N(x) = \{z_1, w_1\}$ and $N(y) = \{z_2, w_2\}$;
	\item $G$ has a degree-3 vertex $w$ with $N(w)=\{x,y,z\}$ such that $N(z)=\{w,y\}$ and edge $\{x,y\} \in E$;
	\item $G$ has two non-adjacent degree-3 vertices $w_1$ and $w_2$ such that $N(w_1) = \{x,y,z_1\}$, $N(w_2) = \{x,y,z_2\}$, $N(z_1)=\{x,w_1\}$ and $N(z_2)=\{y,w_2\}$;
	\item $G$ has two non-adjacent degree-3 vertices $w_1$ and $w_2$ such that $N(w_1) = \{x,y,z_1\}$, $N(w_2) = \{x,y,z_2\}$, $N(z_1) = \{x,w_1\}$ and $N(z_2) = \{x,w_2\}$;
	\item $G$ has a degree-3 vertex $w$ with $N(w)=\{x,z_1,z_2\}$ such that there is a degree-2 vertex $y \in N(z_1) \cap N(z_2)$ and $N(x) = \{z_1, w\}$.
	\end{enumerate}
\end{lemma}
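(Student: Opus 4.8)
This lemma is adapted from \citet{Song2007}, so the plan is to reconstruct that argument by strong induction on $\vert V \vert$, using the reduction characterisation of $SP$ graphs (suppression of degree-2 vertices and deletion of parallel edges, cf. page \pageref{sp}) together with two classical facts about simple $SP$ graphs: such graphs are $2$-degenerate, so a $2$-connected simple $SP$ graph on at least three vertices has minimum degree exactly $2$; and a simple $SP$ graph on $n$ vertices has at most $2n-3$ edges, so it is sparse. The base case $\vert V \vert = 5$ is handled by directly enumerating the simple $2$-connected $SP$ graphs on five vertices and checking that one of the ten configurations is present. For $\vert V \vert \geq 6$ the reduction described below always produces a graph on at least five vertices, so the inductive hypothesis will apply.

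For the inductive step I would fix a degree-2 vertex $x$ with $N(x) = \{a,b\}$, which exists by the degeneracy fact. If $a$ or $b$ also has degree $2$, then $x$ is adjacent to a degree-2 vertex and condition (1) holds immediately. Otherwise $\deg(a), \deg(b) \geq 3$, and I would suppress $x$: if $a \not\sim b$ in $G$ this yields a simple graph $H$ with $\deg_H(a) = \deg_G(a)$ and $\deg_H(b) = \deg_G(b)$, whereas if $a \sim b$ the suppression duplicates the edge $\{a,b\}$ and, after deleting one copy, yields $H = G - x$ with $\deg_H(a) = \deg_G(a) - 1$ and $\deg_H(b) = \deg_G(b) - 1$. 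In either case a short check (using that $x$ has degree $2$) shows that $H$ is again simple, $2$-connected and $SP$ with $\vert V \vert - 1 \geq 5$ vertices, so the inductive hypothesis supplies one of the ten configurations inside $H$.

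It then remains to lift the configuration from $H$ back to $G$. If it uses neither $a$, $b$, nor the edge $\{a,b\}$, then the relevant neighbourhoods and degrees agree in $G$ and $H$, so the very same configuration already occurs in $G$ and we are done. The hard part, and the bulk of the work, is the case in which the configuration involves $a$, $b$, or the edge $\{a,b\}$: here reinserting $x$ (and, in the $a \sim b$ branch, restoring one unit of degree to each of $a$ and $b$) shifts degrees and adjacencies, so a configuration such as two adjacent degree-3 vertices in $H$ may reappear in $G$ as a degree-3/degree-4 pattern, matching condition (3), (6) or one of the later configurations. I would therefore run through each of the ten possibilities for $H$, intersect it with the finitely many ways it can meet the set $\{ a, b, \{a,b\} \}$, and verify that every resulting local picture in $G$ is one of the ten listed configurations; throughout, the series-parallel structure of Definition \ref{Def_GSP} (equivalently, the absence of a $K_4$ minor) is what forbids the extra edges that would otherwise produce a pattern outside the list. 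Ensuring that this case analysis is both exhaustive and free of gaps is the main obstacle, and it is precisely the part that follows \citet{Song2007} most closely.
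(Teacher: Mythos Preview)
The paper does not prove this lemma at all: it is stated as ``adapted from \citet{Song2007}'' and is invoked as a black box in the proof of Lemma~\ref{Lemma_SP_SpanningTrees}. So there is no proof in the paper to compare your proposal against.

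Your outline is a plausible reconstruction of an inductive argument in the style of Song's paper, and the overall strategy (reduce at a degree-2 vertex, apply the inductive hypothesis, lift the configuration back) is the natural one. The real content, as you acknowledge, lies entirely in the final case analysis, and here your proposal is only a promise rather than an argument: you have not indicated how any of the ten configurations in $H$ actually transforms when it meets $\{a,b,\{a,b\}\}$, nor why the list of ten cases is closed under this lifting. In particular, several of the configurations involve two separate degree-2 vertices, and when both of them coincide with vertices whose $H$-degree differs from their $G$-degree (the $a\sim b$ branch), the bookkeeping is delicate; it is not obvious without doing it that one always lands back in the list rather than needing an eleventh pattern. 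If you intend to include a self-contained proof, you will need to carry out that analysis explicitly; otherwise, citing \citet{Song2007} as the paper does is the appropriate course.
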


Based on this we can now prove Lemma \ref{Lemma_SP_SpanningTrees}.
\begin{proof}[Proof of Lemma \ref{Lemma_SP_SpanningTrees}]
We prove this statement by induction on the number $n \coloneqq |V|$ of vertices of $G$. 
For $n=3, \ldots, 6$ we generated a catalog of all simple and biconnected $SP$ graphs with $n$ vertices as follows: We retrieved all simple and connected graphs with $n=3,\ldots,6$ vertices from the \enquote{House of graphs} database \citep{Brinkmann2013} and filtered them for biconnected $SP$ graphs with the computer algebra system Mathematica \citep{Mathematica}. First, each downloaded graph $G$ was checked for biconnectedness with the Mathematica function \texttt{KVertexConnectedGraphQ[$G,2$]}. For each of the remaining graphs it was then checked whether a reduction to $K_2$ via the series reduction rules (cf. p.~\pageref{sp}) was possible. If not, the respective graph was discarded. The remaining simple and biconnected $SP$ graphs are depicted in Figure \ref{Fig_Catalog}. We exhaustively analyzed all these graphs to show that in all cases, there exists a valid spanning tree having only degree-2 vertices of the respective $SP$ graph as leaves (which is also depicted in Figure \ref{Fig_Catalog}). This completes the base case of the induction.

Now, assume that the statement holds for all simple and biconnected $SP$ graphs with up to $n-1$ vertices and let $G=(V,E)$ be a simple and biconnected $SP$ graph with $n \geq 7$ vertices. 

According to Lemma \ref{Lemma_SP_Cases}, we can distinguish between ten cases:

\begin{enumerate}
\item $G$ has two adjacent degree-2 vertices $x$ and $y$ (as depicted in Figure \ref{Fig_Case1}): \\
Let $x' \neq y$ be the second vertex adjacent to $x$ and let $y' \neq x$ be the second vertex adjacent to $y$. Note that as $n \geq 7$, we cannot have $y'=x'$, because in this case $x'$ would be a cut vertex, contradicting the fact that $G$ is biconnected. We now construct a simple and biconnected $SP$ graph $G'$ with $n-1$ vertices from $G$ by suppressing vertex $x$ (cf. Figure \ref{Fig_Case1}). Now, by the inductive hypothesis (as $G'$ is a simple and biconnected graph on $6 \leq n-1 <n$ vertices), there exists a valid spanning tree $T'$ in $G'$. We can now obtain a valid spanning tree $T$ for $G$ as follows:
	\begin{itemize}
	\item If edge $\{x',y\}  \in E(T')$ (note that $\{x',y\} \in E(G') \setminus E(G)$), we replace this edge by $\{x',x\}$ and $\{x,y\}$ to obtain $T$. Note that in this case $x$ is not a leaf of $T$.
	\item If edge $\{x',y\} \notin E(T')$, we add either one of the edges $\{x',x\}$ or $\{y,x\}$ to $T'$ in order to obtain $T$. This implies that $x$ is a leaf in $T$, but since $x$ was a degree-2 vertex in $G$, this is valid.
	\end{itemize}
In both cases $T$ is a valid spanning tree for $G$. 

	\begin{figure}[htbp]
	\centering
	\includegraphics[scale=0.3]{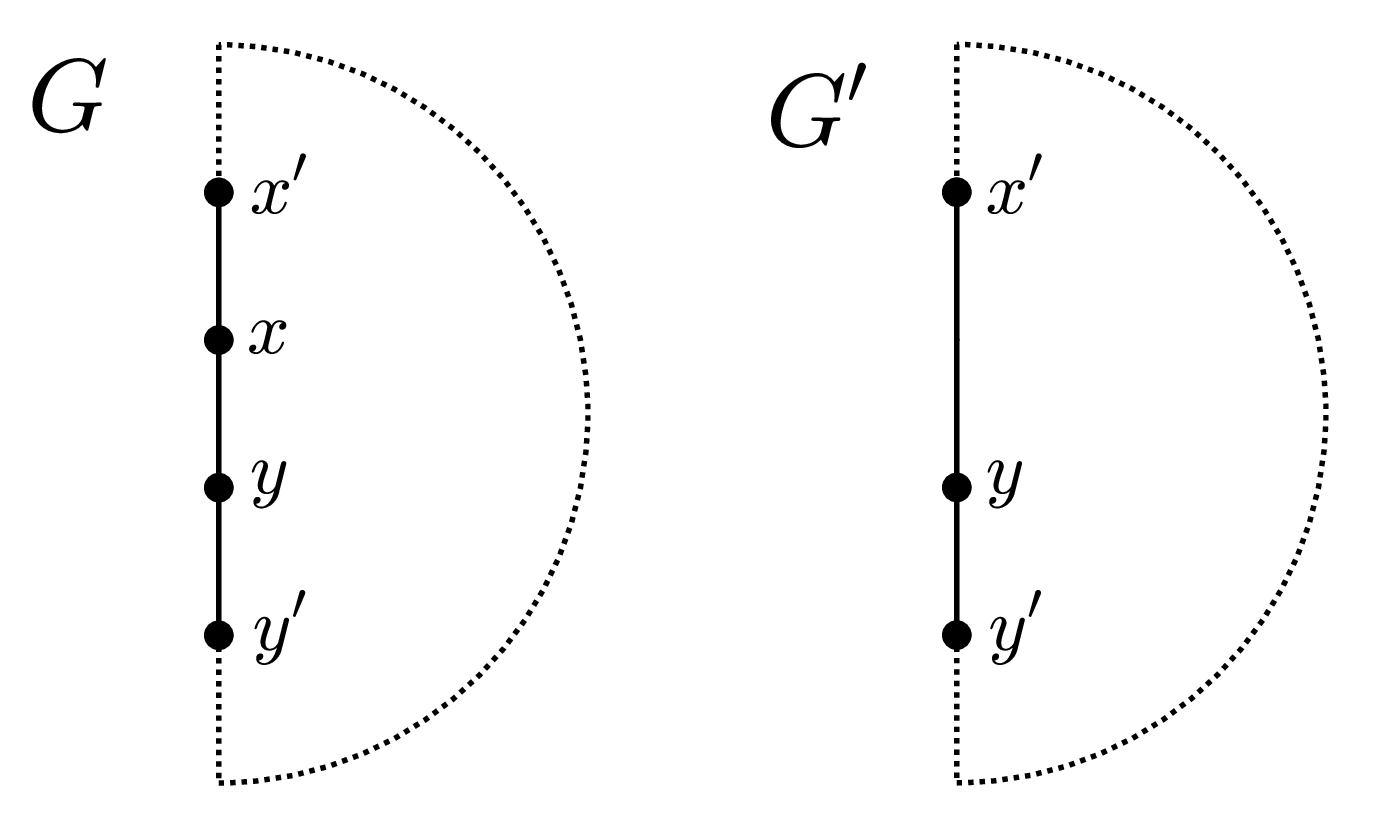}
	\caption{Case 1 in the proof of Lemma \ref{Lemma_SP_SpanningTrees}. The dotted lines depict some path between $x'$ and $y'$ that must exist since $G$ is biconnected.}
	\label{Fig_Case1}
	\end{figure}


\item $G$ has two different degree-2 vertices $x$ and $y$ and $N(x)=N(y)$ (as depicted in Figure \ref{Fig_Case2}): \\
Let $N(x)=N(y)=\{u,v\}$.
We now construct a simple and biconnected $SP$ graph $G'$ with $n-2$ vertices from $G$ by suppressing vertices $x$ and $y$ and deleting all but one copy of the resulting parallel edge $\{u,v\}$ (cf. Figure \ref{Fig_Case2}).

	\begin{figure}[htbp]
	\centering
	\includegraphics[scale=0.3]{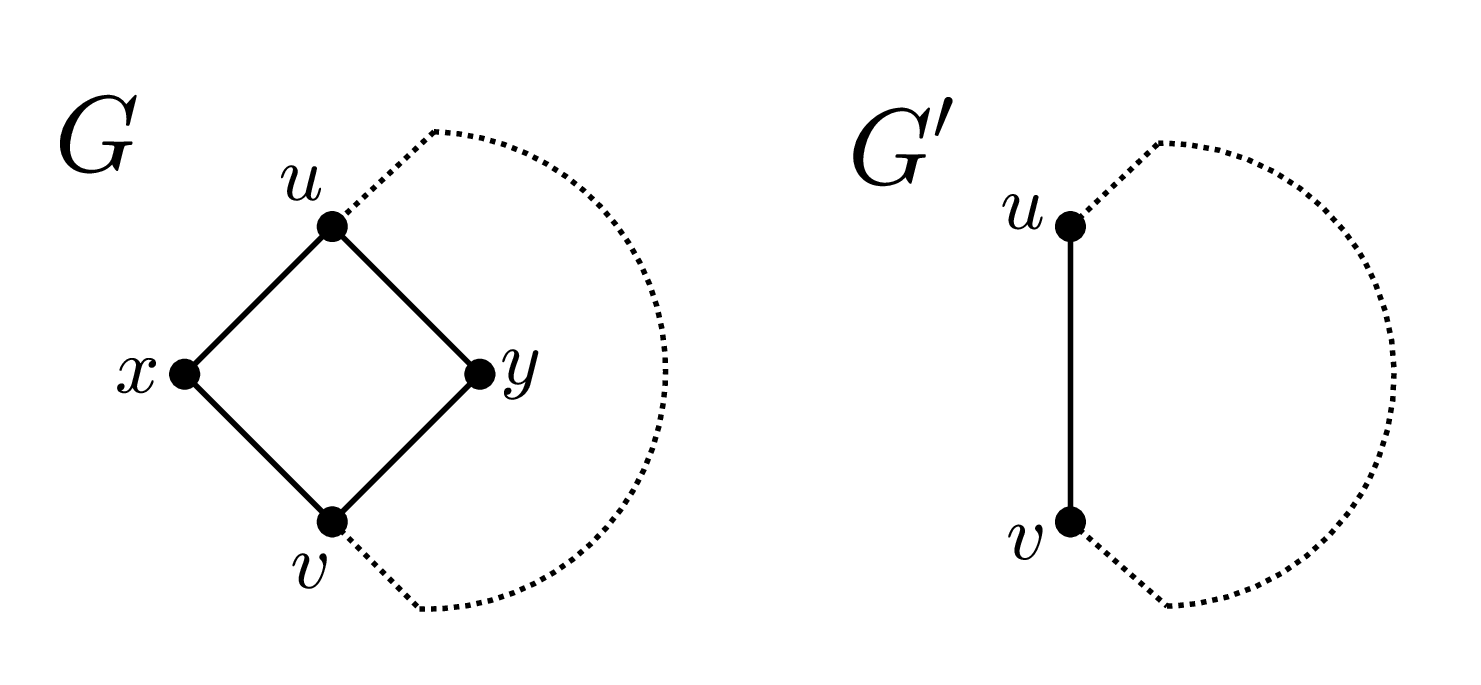}
	\caption{Case 2 in the proof of Lemma \ref{Lemma_SP_SpanningTrees} for $n \geq 7$. The dotted line depicts some path between $u$ and $v$ (possibly consisting of a single edge) that must exist since $G$ is biconnected and $n \geq 7$.}
	\label{Fig_Case2}
	\end{figure}

As $G'$ is a simple and biconnected graph with $5 \leq n-2 < n$ vertices, by the inductive hypothesis, $G'$ has a valid spanning tree $T'$. Note that $u$ and $v$ are potentially leaves in $T'$ (as they are potentially degree-2 vertices in $G'$). We can now construct a valid spanning tree $T$ for $G$ by adding edges $\{u,x\}$ and $\{v,y\}$ (or $\{u,y\}$ and $\{v,x\}$) to $T'$. This guarantees that $u$ and $v$ are internal vertices of $T$, while $x$ and $y$ are leaves (which is allowed since they are degree-2 vertices in $G$).


\item $G$ has a degree-4 vertex $z$ adjacent to two degree-2 vertices $x$ and $y$ such that $N(z) \setminus \{x,y\} = \{N(x) \cup N(y) \} \setminus \{z\}$ (as depicted in Figure \ref{Fig_Case3}): \\
Let $N(z) \setminus \{x,y\} = \{N(x) \cup N(y) \} \setminus \{z\} = \{x', y'\}$, where $x' \in N(x) \setminus \{z\}$ and $y' \in N(y) \setminus \{z\}$.

We now construct a simple and biconnected $SP$ graph $G'$ with $n-2$ vertices from $G$ by suppressing vertices $x$ and $y$ and deleting all but one copy of the resulting parallel edges (cf. Figure \ref{Fig_Case3}). Note that $z$ is a degree-2 vertex in $G'$ and $x',y'$ may also be of degree $2$ in $G'$. By the inductive hypothesis (as $G'$ is a simple and biconnected graph with $5 \leq n-2 <n$ vertices), there exists a valid spanning tree $T'$ for $G'$. As $z, x'$ and $y'$ are potentially all degree-2 vertices in $G$, they are all potentially leaves in $T'$. However, they cannot simultaneously be leaves, because then $T'$ would not be connected. Thus, we distinguish between different cases:
	\begin{itemize}
	\item $x'$ and $y'$ are leaves in $T'$. This case cannot happen since $T'$ would not be connected. 
	\item $z$ and $y'$ are leaves in $T'$. In this case, we add the edges $\{z,x\}$ and $\{y',y\}$ to $T'$ in order to obtain a valid spanning tree $T$ for $G$ (in which $z$ and $y'$ are internal vertices and $x$ and $y$ are leaves).
	\item $x'$ and $z$ are leaves in $T'$. In this case, we add the edges $\{x',x\}$ and $\{z,y\}$ to $T'$ and obtain a valid spanning tree $T$ for $G$ (in which $z$ and $x'$ are internal vertices and $x$ and $y$ are leaves).
	\item $x'$ is a leaf in $T'$. In this case, we add the edges $\{x',x\}$ and either one of $\{z,y\}$ or $\{y',y\}$ to $T'$ and obtain a valid spanning tree $T$ for $G$ (in which $x'$ is an internal vertex and $x$ and $y$ are leaves).
	\item $y'$ is a leaf in $T'$. In this case, we add the edges $\{y',y\}$ and either one of $\{z,x\}$ or $\{x',x\}$ to $T'$ and obtain a valid spanning tree $T$ for $G$ (in which $y'$ is an internal vertex and $x$ and $y$ are leaves).
	\item $z$ is a leaf in $T'$. In this case, we add the edges $\{z,x\}$ and $\{y',y\}$ or $\{z,y\}$ and $\{x',x\}$ to $T'$ and obtain a valid spanning tree $T$ for $G$ (in which $z$ is an internal vertex and $x$ and $y$ are leaves).
	\item Neither $x', y'$ or $z$ are leaves in $T'$. In this case, we can for example add the edges $\{x',x\}$ and $\{y',y\}$ to $T'$ in order to obtain a valid spanning tree $T$ for $G$.
	\end{itemize}

\begin{figure}[htbp]
\centering
\includegraphics[scale=0.3]{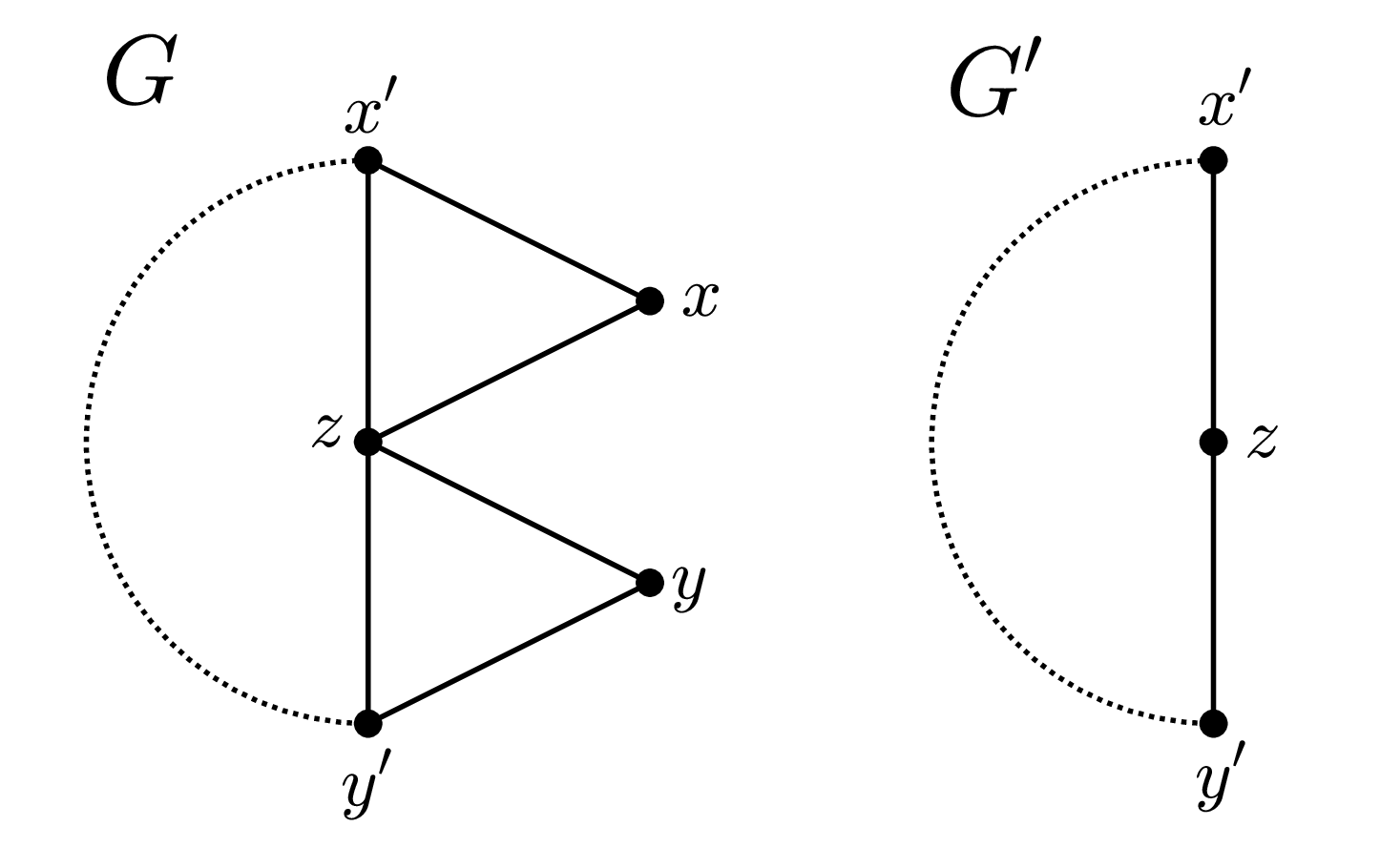}
\caption{Case 3 in the proof of Lemma \ref{Lemma_SP_SpanningTrees}. The dotted line depicts some path between $x'$ and $y'$ which must exist since $G$ is biconnected and $n \geq 7$.}
\label{Fig_Case3}
\end{figure}


\item $G$ has a degree-3 vertex $w$ with $N(w)=\{x,y,z\}$ such that both $x$ and $y$ are degree-2 vertices, $N(x)=\{z,w\}$ and edge $\{y,z\} \notin E$ (as depicted in Figure \ref{Fig_Case4}): \\
Let $y' \neq w$ be the second vertex adjacent to $y$. Note that we cannot have $y'=z$, because $\{y,z\} \notin E$.

We now construct a simple and biconnected $SP$ graph $G'$ with $n-1$ vertices from $G$ by suppressing vertex $y$ (cf. Figure \ref{Fig_Case4}). By the inductive hypothesis (as $G'$ is a simple an biconnected $SP$ graph with $6 \leq n-1 < n$ vertices) $G'$ has a valid spanning tree $T'$ and we can obtain a valid spanning tree $T$ for $G$ from $T'$ as follows:
	\begin{itemize}
	\item If edge $\{w,y'\} \in E(T')$, we replace this edge by $\{w,y\}$ and $\{y,y'\}$ to obtain $T$. 
	\item If edge $\{w,y'\} \notin E(T')$, we add either one of the edges $\{w,y\}$ or $\{y',y\}$ to $T'$ in order to obtain $T$, i.e. we add $y$ as a leaf to $T$ (which is allowed since $y$ has degree 2 in $G$).
	\end{itemize}

\begin{figure}[htbp]
\centering
\includegraphics[scale=0.3]{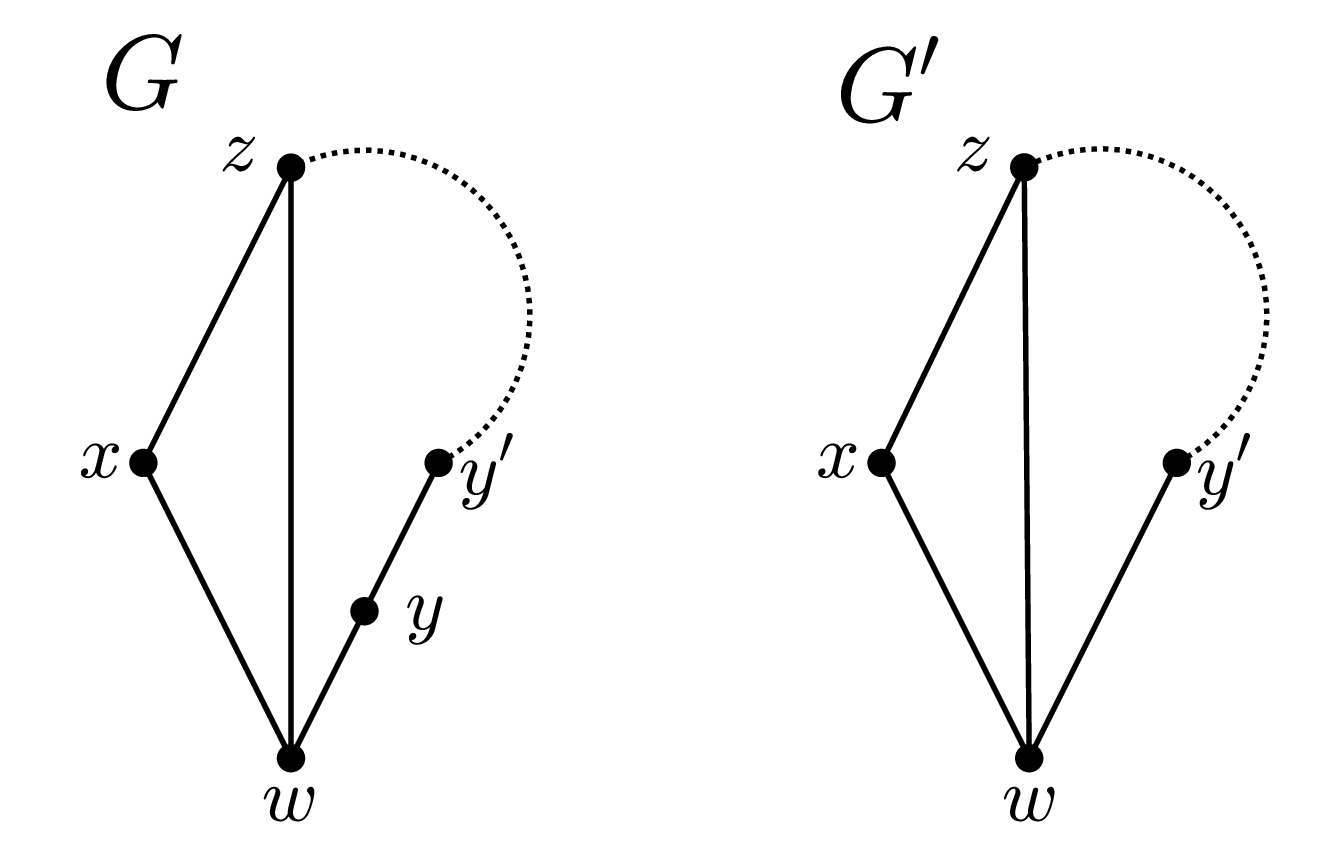}
\caption{Case 4 in the proof of Lemma \ref{Lemma_SP_SpanningTrees}. The dotted line depicts some path between $z$ and $y'$ which must exist since $G$ is biconnected and $n \geq 7$.}
\label{Fig_Case4}
\end{figure}


\item $G$ has two adjacent degree-3 vertices $x$ and $y$ such that $N(x) \cap N(y) = \{z\}$ and $N(z) = \{x,y\}$ (as depicted in Figure \ref{Fig_Case5}):\\
Let $x'$ be the vertex in $N(x) \setminus \{y,z\}$ and let $y'$ be the vertex in $N(y) \setminus \{x,z\}$. 

We now construct a simple and biconnected $SP$ graph $G'$ with $n-2$ vertices from $G$ as follows (cf. Figure \ref{Fig_Case5}):
	\begin{itemize}
	\item Suppress the degree-2 vertex $z$;
	\item Delete one copy of the resulting parallel edge $\{x,y\}$;
	\item Suppress the resulting degree-2 vertex $x$.
	\end{itemize}
	
As $G'$ is a simple and biconnected graph with $5 \leq n-2 < n$ vertices, by the inductive hypothesis $G'$ has a valid spanning tree $T'$. Note that as $y$ is a degree-2 vertex in $G'$ it might be a leaf in $T'$. We now construct a valid spanning tree $T$ for $G$ from $T'$ by distinguishing between two cases:
	\begin{itemize}
	\item If edge $\{x',y\} \in E(T)'$ (note that $\{x',y\} \in E(G') \setminus E(G)$), we replace $\{x',y\}$ by $\{x',x\}$ and $\{x,y\}$ and additionally add the edge $\{y,z\}$ to obtain $T$. This guarantees that the degree-3 vertices $x$ and $y$ in $G$ are not a leaves in $T$ and thus $T$ is a valid spanning tree for $G$ (note that $z$ is a leaf in $T$, but as $deg(z)=2$ in $G$, this is valid).
	\item If edge $\{x',y\} \notin E(T')$, we add the edges $\{y,x\}$ and $\{x,z\}$ to $T'$ in order to obtain $T$. Again, while $z$ is a leaf in $T$, $x$ and $y$ are not, and thus $T$ is a valid spanning tree for $G$.
	\end{itemize}

\begin{figure}[htbp]
\centering
\includegraphics[scale=0.3]{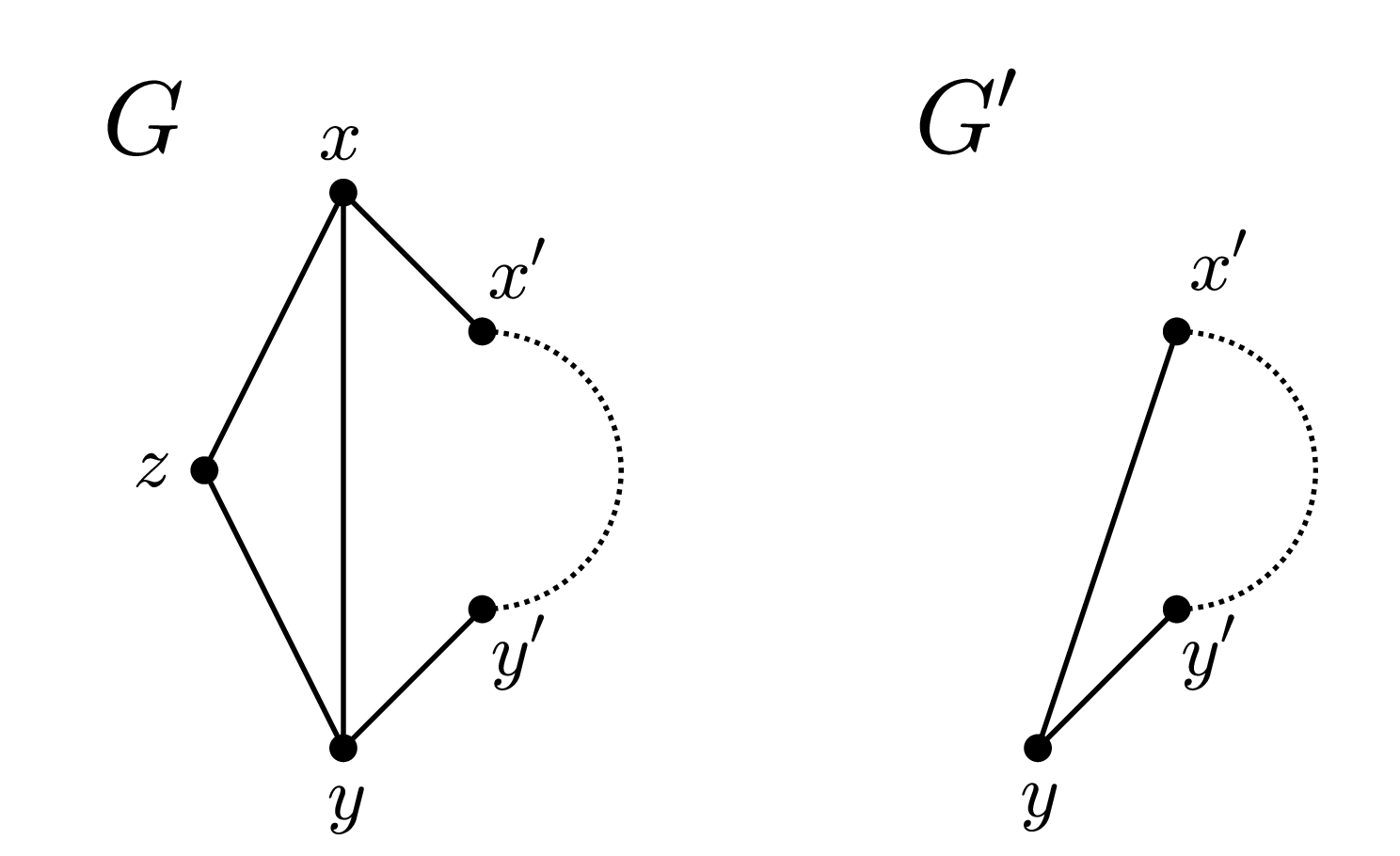}
\caption{Case 5 in the proof of Lemma \ref{Lemma_SP_SpanningTrees} for $n \geq 7$. The dotted line depicts some path between $x'$ and $y'$ which must exist since $G$ is biconnected and $n \geq 7$.}
\label{Fig_Case5}
\end{figure}


\item $G$ has two adjacent degree-3 vertices $w_1$ and $w_2$ such that $N(w_1) = \{x,z_1, w_2\}$, $N(w_2) = \{y, z_2, w_1\}$, $N(x) = \{z_1, w_1\}$ and $N(y) = \{z_2, w_2\}$ (as depicted in Figure \ref{Fig_Case6}): \\
Note that as $n \geq 7$, $z_1$ and $z_2$ are distinct, since otherwise $z_1 = z_2$ would be a cut vertex, contradicting the fact that $G$ is biconnected.
We now construct a simple and biconnected $SP$ graph $G'$ with $n-2$ vertices from $G$ as follows (cf. Figure \ref{Fig_Case6}):
	\begin{itemize}
	\item Suppress the degree-2 vertex $x$ and delete one copy of the resulting parallel edge $\{z_1, w_1\}$;
	\item Suppress the degree-2 vertex $y$ and delete one copy of the resulting parallel edge $\{z_2, w_2\}$.
	\end{itemize}
Note that $w_1$ and $w_2$ are degree-2 vertices in $G'$ and $z_1$ and $z_2$ may be of degree 2 in $G'$ as well. 

By the inductive hypothesis (as $G'$ is a simple and biconnected graph with $5 \leq n-2 < n$ vertices), $G'$ has a valid spanning tree $T'$, in which $w_1, w_2, z_1$ and $z_2$ are potentially leaves. Note, however, that at most two of them can be simultaneously leaves in $T'$, as otherwise $T'$ would be disconnected. We now construct a valid spanning tree $T$ for $G$ from $T'$ by distinguishing between the following cases:
	\begin{itemize}
	\item If $w_1, w_2, z_1$ and $z_2$ are all internal vertices of $T'$, we can for example add the edges $\{w_1, x\}$ and $\{w_2, y\}$ to $T'$ and obtain a valid spanning tree $T$ for $G$.
	\item If $w_1$ is a leaf in $T'$ (and $w_2, z_1, z_2$ are internal vertices in $T'$), we add the edge $\{w_1, x\}$ and either one of the edges $\{w_2, y\}$ or $\{z_2,y\}$ to $T'$ and obtain a valid spanning tree $T$ for $G$.
	\item If $w_2$ is a leaf in $T'$ (and $w_1, z_1, z_2$ are internal vertices in $T'$), we add the edge $\{w_2, y\}$ and either one of the edges $\{w_1, x\}$ or $\{z_1,x\}$ to $T'$ and obtain a valid spanning tree $T$ for $G$.
	\item If $z_1$ is a leaf in $T'$ (and $w_1, w_2, z_2$ are internal vertices in $T'$), we add the edge $\{z_1, x\}$ and either one of the edges $\{w_2, y\}$ or $\{z_2,y\}$ to $T'$ and obtain a valid spanning tree $T$ for $G$.
	\item If $z_2$ is a leaf in $T'$ (and $w_1, w_2, z_1$ are internal vertices in $T'$), we add the edge $\{z_2, y\}$ and either one of the edges $\{w_1, x\}$ or $\{z_1,x\}$ to $T'$ and obtain a valid spanning tree $T$ for $G$.
	\item If $z_1$ and $z_2$ are leaves in $T'$ (and $w_1, w_2$ are internal vertices in $T'$), we add the edges $\{z_1, x\}$ and $\{z_2,y\}$ to $T'$ and obtain a valid spanning tree $T$ for $G$.
	\item If $w_1$ and $w_2$ are leaves in $T'$ (and $z_1, z_2$ are internal vertices in $T'$), we add the edges $\{w_1, x\}$ and $\{w_2,y\}$ to $T'$ and obtain a valid spanning tree $T$ for $G$.
	\item If $w_1$ and $z_1$ are leaves in $T'$ (and $w_2, z_2$ are internal vertices in $T'$), edges $\{w_1, w_2\}$ and $\{w_2, z_2\}$ must be in $T'$ (since $w_2$ is an internal vertex in $T'$). We now remove edge $\{w_1, w_2\}$ from $T'$ (to prevent cycles) and add edges $\{z_1,w_1\}, \, \{w_1, x\}$ as well as $\{w_2, y\}$ to $T'$. This yields a valid spanning tree $T$ for $G$, in which $w_1, w_2, z_1$ and $z_2$ are internal vertices and $x$ and $y$ are leaves.
	\item If $w_1$ and $z_2$ are leaves in $T'$ (and $w_2, z_1$ are internal vertices in $T'$), we add edges $\{w_1, x\}$ and $\{z_2,y\}$ to $T'$ and obtain a valid spanning tree $T$ for $G$.
	\item If $w_2$ and $z_1$ are leaves in $T'$ (and $w_1, z_2$ are internal vertices in $T'$), we add edges $\{z_1, x\}$ and $\{w_2,y\}$ to $T'$ and obtain a valid spanning tree $T$ for $G$.
	\item If $w_2$ and $z_2$ are leaves in $T'$ (and $w_1, z_1$ are internal vertices in $T'$), edges $\{w_1, w_2\}$ and $\{w_1, z_1\}$ must be in $T'$ (since $w_1$ is an internal vertex in $T'$). We now remove edge $\{w_1, w_2\}$ from $T'$ (to prevent cycles) and add edges $\{z_2,w_2\}, \, \{w_2, y\}$ as well as $\{w_1, x\}$ to $T'$. This yields a valid spanning tree $T$ for $G$, in which $w_1, w_2, z_1$ and $z_2$ are internal vertices and $x$ and $y$ are leaves.
	\end{itemize}

\begin{figure}[htbp]
\centering
\includegraphics[scale=0.3]{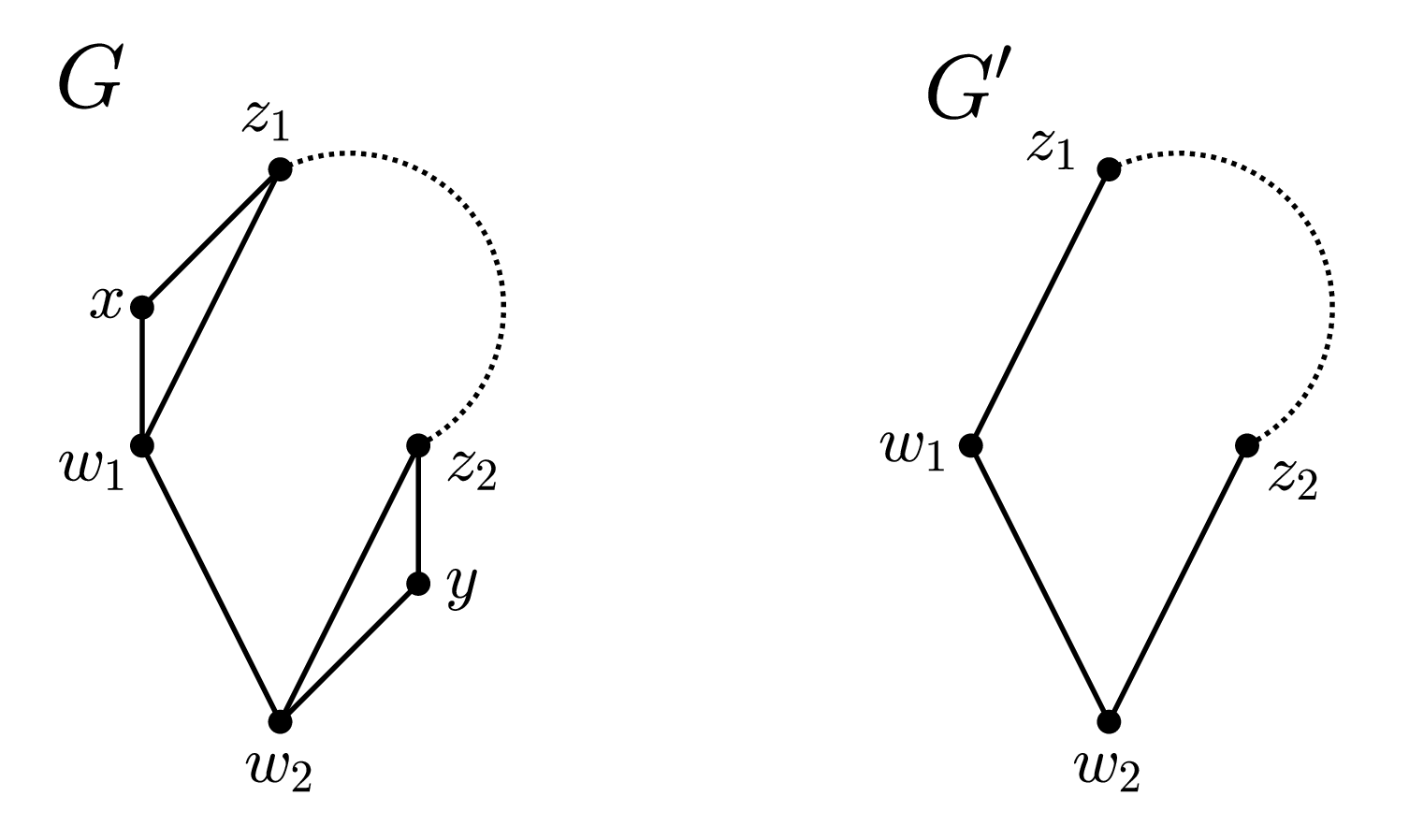}
\caption{Case 6 in the proof of Lemma \ref{Lemma_SP_SpanningTrees} for $n \geq 7$. The dotted line depicts some path between $z_1$ and $z_2$ which must exist since $n \geq 7$ and $G$ is biconnected.}
\label{Fig_Case6}
\end{figure} 


\item $G$ has a degree-3 vertex $w$ with $N(w)=\{x,y,z\}$ such that $N(z)=\{w,y\}$ and edge $\{x,y\} \in E$ (as depicted in Figure \ref{Fig_Case7}): \\
As $n \geq  7$ and $G$ is biconnected, there exists a vertex $u$ in $N(y) \setminus \{w,x,z\}$ (and since $G$ is biconnected $u$ lies on some path from $y$ to $x$). In particular, $deg(y) \geq 4$ in $G$. We now construct a simple and biconnected $SP$ graph $G'$ with $n-1$ vertices from $G$ by suppressing $z$ and deleting one copy of the resulting parallel edge $\{w,y\}$. Note that as $deg(y) \geq 4$ in $G$, we have $deg(y) \geq 3$ in $G'$. In particular, $y$ is not a degree-2 vertex in $G'$, while $w$ is. 
As $G'$ is a simple and biconnected graph with $6 \leq n-1 <n$ vertices, by the inductive hypothesis, $G'$ has a valid spanning tree $T'$, in which vertex $w$ is potentially a leaf. We can now obtain a valid spanning tree $T$ for $G$ from $T'$ by adding the edge $\{w,z\}$ to $T'$. This guarantees that $w$ is not a leaf in $T$ (but $z$ is, which is valid, since $z$ is a degree-2 vertex in $G$) and thus $T$ is a valid spanning tree for $G$.

\begin{figure}[htbp]
\centering
\includegraphics[scale=0.3]{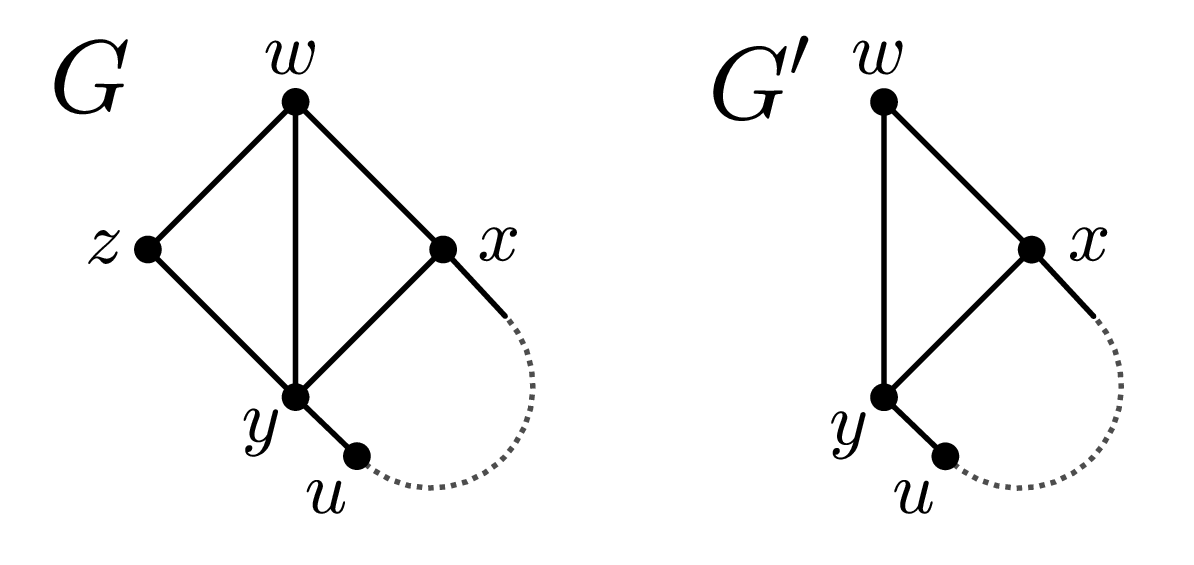}
\caption{Case 7 in the proof of Lemma \ref{Lemma_SP_SpanningTrees}. The dashed line depicts some path between $u$ and $x$ which must exist since $n \geq 7$ and $G$ is biconnected.}
\label{Fig_Case7}
\end{figure}


\item $G$ has two nonadjacent degree-3 vertices $w_1$ and $w_2$ such that $N(w_1) = \{x,y,z_1\}$, $N(w_2) = \{x,y,z_2\}$, $N(z_1)=\{x,w_1\}$ and $N(z_2)=\{y,w_2\}$ (as depicted in Figure \ref{Fig_Case8}): \\
As $G$ is biconnected and $n \geq 7$ there exists a vertex $x' \in N(x) \setminus \{z_1,w_1,w_2,y\}$. In particular, $deg(x) \geq 4$ in $G$.

We now construct a simple and biconnected graph $G'$ with $n-1$ vertices from $G$ by suppressing $z_1$ and deleting one copy of the resulting parallel edge $\{x,w_1\}$. Note that $w_1$ is a degree-2 vertex in $G'$, while $deg(x) \geq 3$ in $G'$. By the inductive hypothesis (as $G'$ is a simple and biconnected $SP$ graph with $6 \leq n-1 < n$ vertices), there exists a valid spanning tree $T'$ for $G'$ (potentially containing vertex $w_1$ as a leaf). We can now obtain a valid spanning tree $T$ for $G$ from $T'$ by adding the edge $\{w_1, z_1\}$. 

\begin{figure}[htbp]
\centering
\includegraphics[scale=0.3]{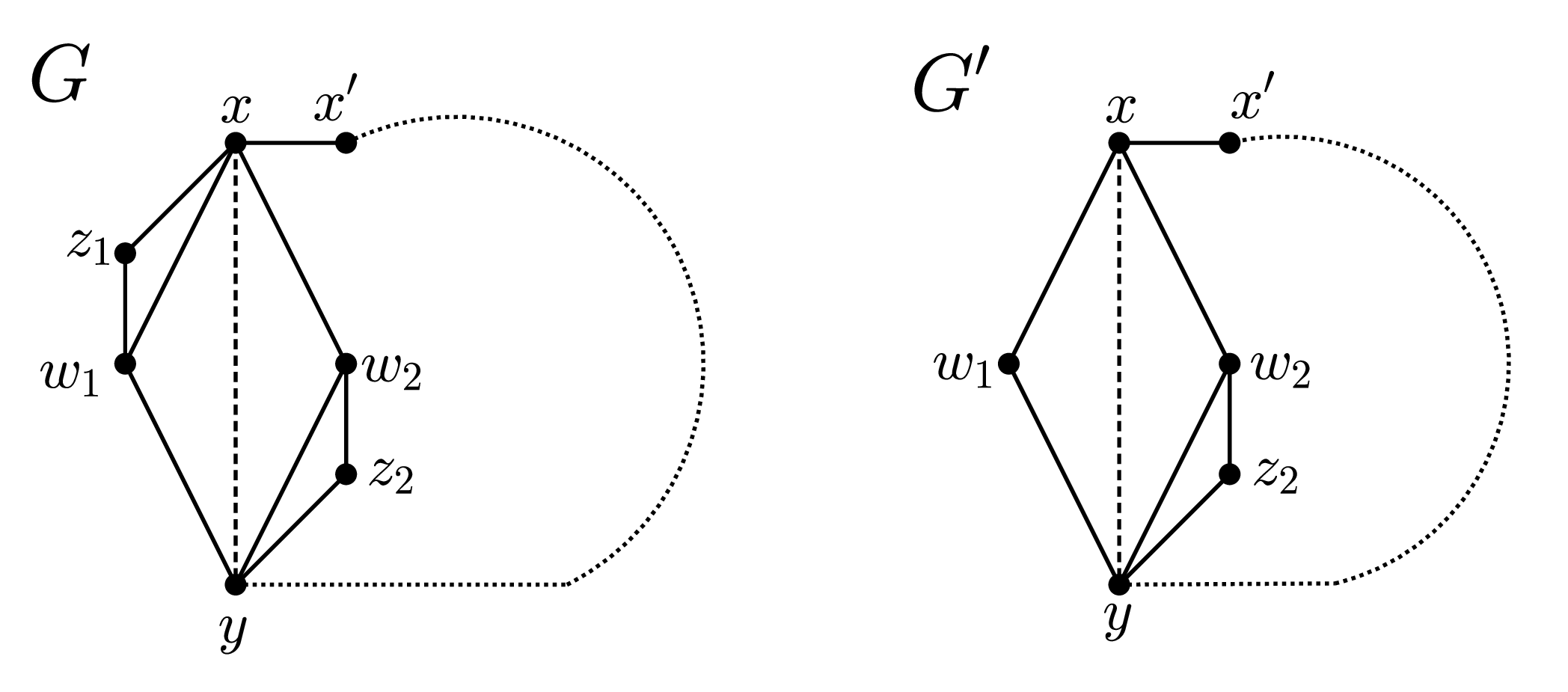}
\caption{Case 8 in the proof of Lemma \ref{Lemma_SP_SpanningTrees} for $n\geq 7$. The dashed edge $\{x,y\}$ may be present or not and the dotted line depicts some path between $x'$ and $y$, which must exist since $G$ is biconnected and $n \geq 7$.}
\label{Fig_Case8}
\end{figure}


\item $G$ has two non-adjacent degree-3 vertices $w_1$ and $w_2$ such that $N(w_1) = \{x,y,z_1\}$, $N(w_2) = \{x,y,z_2\}$, $N(z_1) = \{x,w_1\}$ and $N(z_2) = \{x,w_2\}$ (as depicted in Figure \ref{Fig_Case9}): \\
In this case, we can construct a simple and biconnected $SP$ graph $G'$ with $n-1$ vertices from $G$ by suppressing $z_1$ and deleting one copy of the resulting parallel edge $\{x,w_1\}$ (cf. Figure \ref{Fig_Case9}). Note that this makes $w_1$ a degree-2 vertex in $G'$. By the inductive hypothesis (since $G'$ is a simple and biconnected graph with $6 \leq n-1 <n$ vertices), there exists a valid spanning tree $T'$ for $G'$, in which $w_1$ is potentially a leaf (note that $x$ cannot be a leaf in $T'$ by the inductive hypothesis, since $deg(x) \geq 3$ in $G'$). We can now obtain a valid spanning tree $T$ for $G$ from $T'$ by adding the edge $\{w_1, z_1\}$. This guarantees that $w_1$ is not a leaf in $T$ and thus $T$ is a valid spanning tree for $G$.

\begin{figure}[htbp]
\centering
\includegraphics[scale=0.3]{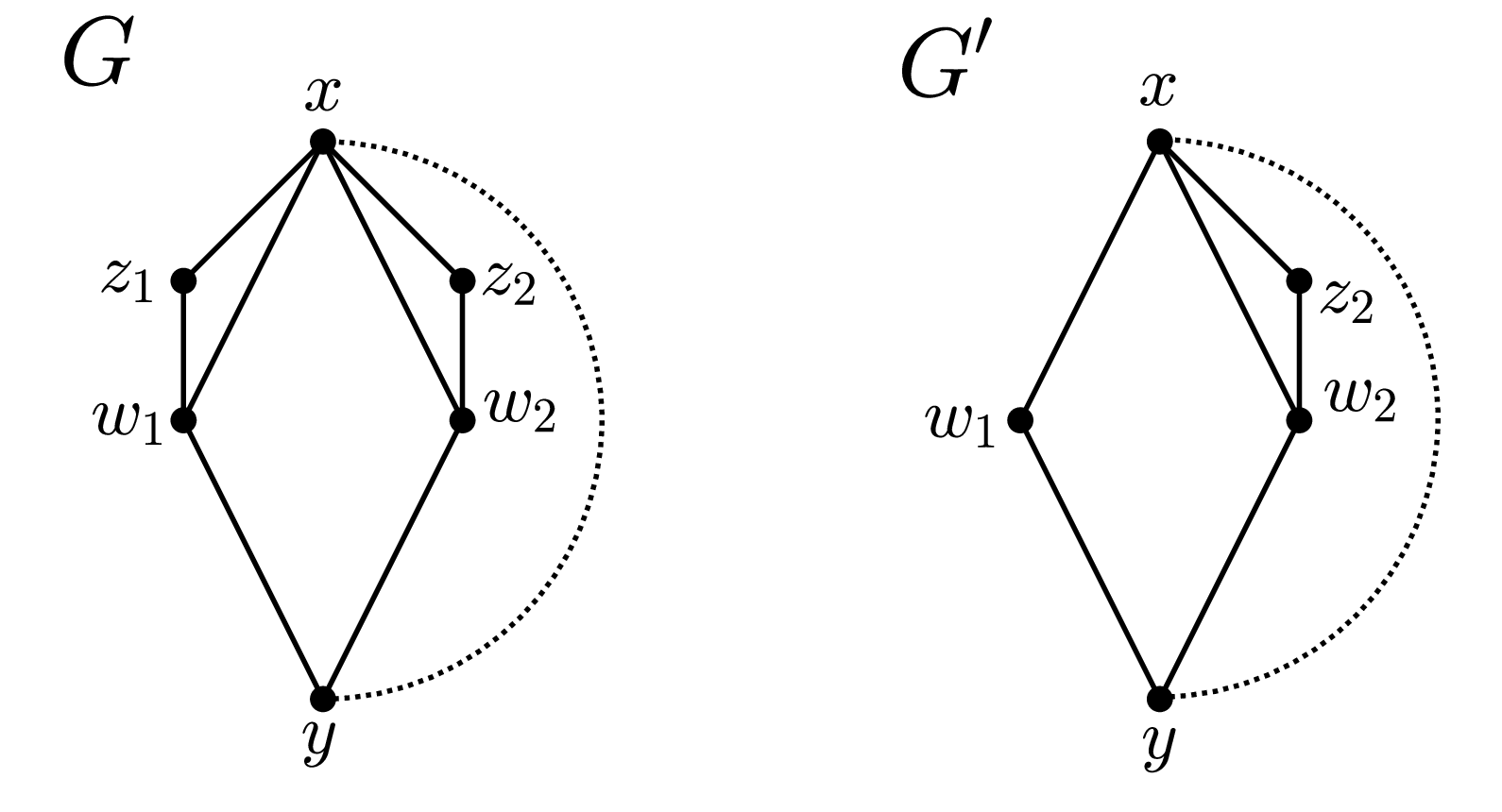}
\caption{Case 9 in the proof of Lemma \ref{Lemma_SP_SpanningTrees}. The dotted line depicts some path between $x$ and $y$ which must exist as $n \geq 7$ (since $G$ is biconnected).}
\label{Fig_Case9}
\end{figure}


\item $G$ has a degree-3 vertex $w$ with $N(w)=\{x,z_1,z_2\}$ such that there is a degree-2 vertex $y \in N(z_1) \cap N(z_2)$ and $N(x) = \{z_1, w\}$ (as depicted in Figure \ref{Fig_Case10}):\\
As $G$ is biconnected and $n \geq 7$ there exists a vertex $z_1' \in N(z_1) \setminus \{x,y,w,z_2\}$ in $G$ and $z_1'$ lies on some path between $z_1$ and $z_2$ (as $G$ is biconnected). In particular, $deg(z_1) \geq 4$ and $deg(z_2) \geq 3$ in $G$.
We now construct a simple and biconnected graph $G'$ with $n-1$ vertices from $G$ by suppressing $x$ and deleting one copy of the resulting parallel edge $\{z_1, w\}$ (cf. Figure \ref{Fig_Case10}). Note that $w$ is a degree-2 vertex in $G'$, while $deg(z_1) \geq 3$ and $deg(z_2) \geq 3$ in $G'$. As $G'$ is a simple and biconnected $SP$ graph with $6 \leq n-1 <n$ vertices, by the inductive hypothesis, there exists a valid spanning tree $T'$ for $G'$, in which $w$ is potentially a leaf. We can now obtain a valid spanning tree $T$ for $G$ from $T'$ by adding the edge $\{w,x\}$, which makes $w$ an internal vertex of $T$ and $x$ a leaf. \qed

\begin{figure}[htbp]
\centering
\includegraphics[scale=0.3]{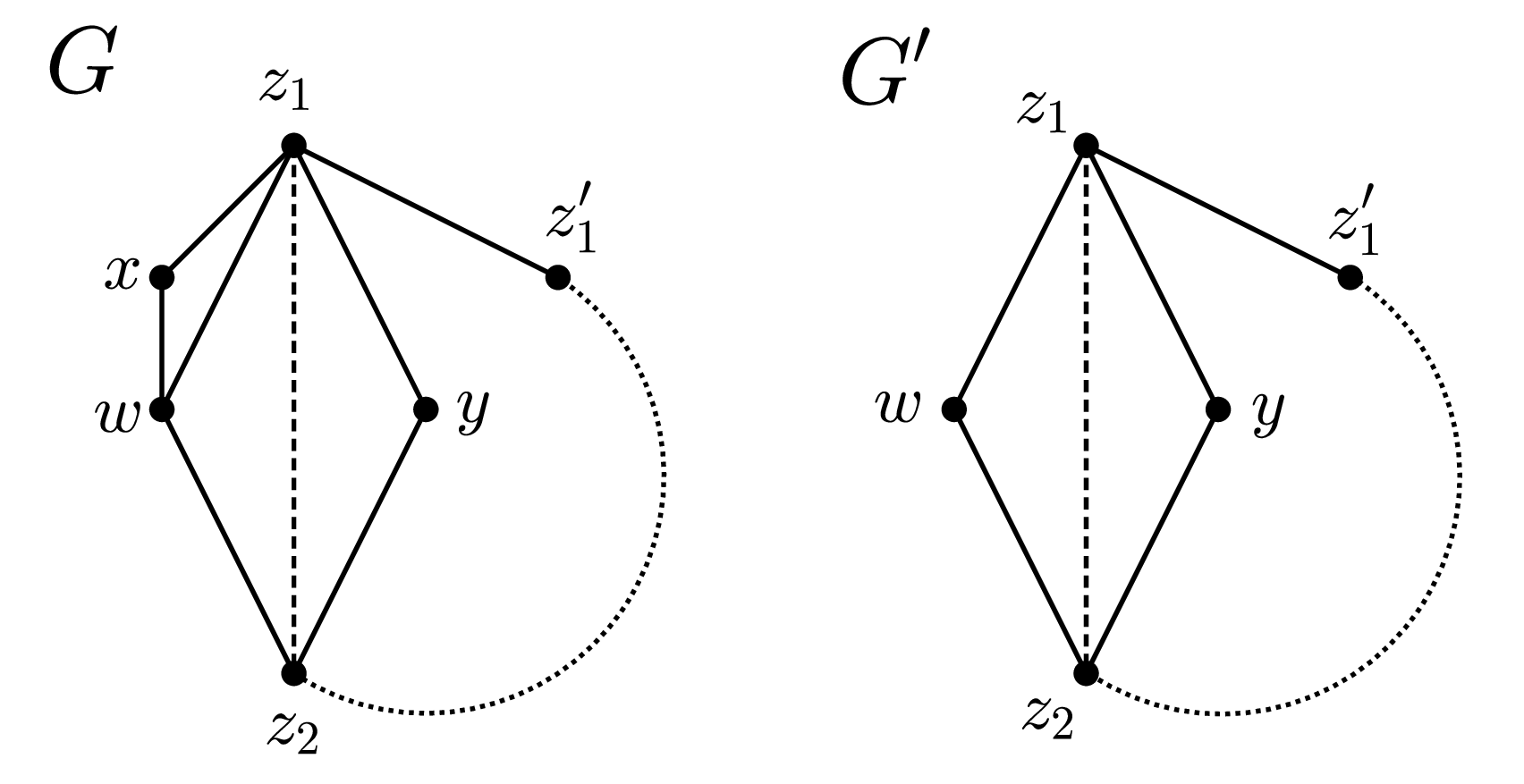}
\caption{Case 10 in the proof of Lemma \ref{Lemma_SP_SpanningTrees} for $n \geq 7$. The dashed edge $\{z_1, z_2\}$ may either be present or absent and the dotted line depicts some path between $z_1'$ and $z_2$ which must exist since $G$ is biconnected.}
\label{Fig_Case10}
\end{figure}
\end{enumerate}
\end{proof}

\begin{figure}[htbp]
\centering
\includegraphics[scale=0.65]{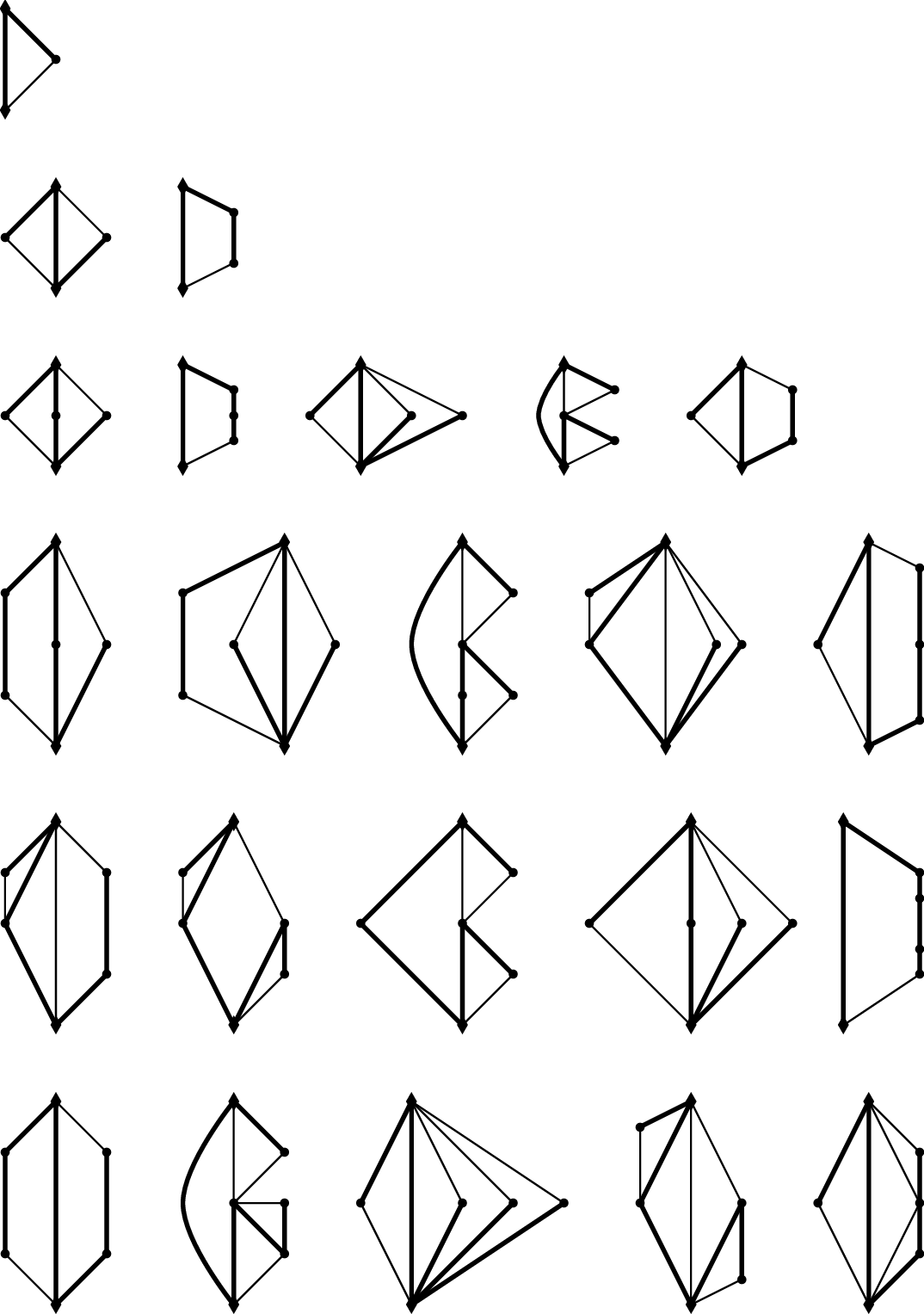} \\
\caption{Catalog of all simple and biconnected $SP$ graphs on $n \leq 6$ vertices for (diamond vertices present one possible pair of terminals, respectively) and a valid spanning tree (depicted in bold).}
\label{Fig_Catalog}
\end{figure}

\setcounter{lemma}{8}
\begin{lemma}\label{chordaltriangle}
Let $G=(V,E)$ be a simple chordal graph without cut edges and with $deg(v)\geq 2$ for all $v \in V$. Then, for every vertex $v \in V$, there exist two other vertices $u$ and $w$, such that $u$, $v$ and $w$ form a triangle in $G$, i.e. such that the edges $\{u,v\}$, $\{u,w\}$ and $\{v,w\}$ are all in $E$.
\end{lemma}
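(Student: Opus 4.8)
The plan is to prove the statement by extracting a triangle from a \emph{shortest} cycle through the given vertex. First I would observe that, since $G$ has no cut edges, every edge of $G$ lies on some cycle (an edge is a bridge exactly when it lies on no cycle); as $deg(v) \geq 2$, the vertex $v$ is incident to an edge and therefore lies on at least one cycle. So I may choose a cycle $C$ through $v$ of minimum length and write it as $C = v, x_1, x_2, \dots, x_k, v$ with $k \geq 2$. The goal is to show $k = 2$, for then $C$ is the triangle $v x_1 x_2$ and we may take $u = x_1$, $w = x_2$.

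Assume for contradiction that $k \geq 3$, so $C$ has length at least $4$. Since $G$ is chordal, $C$ has a chord, and I would split into two cases according to whether the chord is incident to $v$. If the chord is $\{v, x_i\}$ --- necessarily with $2 \leq i \leq k-1$, since $x_1$ and $x_k$ are the neighbours of $v$ on $C$ --- then it divides $C$ into the two cycles $v, x_1, \dots, x_i, v$ and $v, x_i, x_{i+1}, \dots, x_k, v$; both pass through $v$ and both have strictly fewer vertices than $C$, contradicting minimality. If instead the chord is $\{x_i, x_j\}$ with $i < j$ and $j \geq i + 2$ (non-consecutiveness on $C$), then the cycle $v, x_1, \dots, x_i, x_j, x_{j+1}, \dots, x_k, v$ passes through $v$, has length at least $3$, and has length $i + k - j + 2 \leq k$, which is strictly less than the length $k+1$ of $C$ --- again a contradiction.

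Both cases being impossible, we conclude $k = 2$, i.e.\ the shortest cycle through $v$ is a triangle, and its two vertices other than $v$ are the required $u$ and $w$. The only point requiring a little care is the bookkeeping in the second case: verifying that the cycle one keeps really does contain $v$, really is a cycle (length $\geq 3$), and really is strictly shorter than $C$; all three follow at once from the inequalities $1 \leq i < j \leq k$ together with $j \geq i+2$. I expect no further obstacle, since chordality does all the heavy lifting by guaranteeing a chord at every step of the descent.
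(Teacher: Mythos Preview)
Your proof is correct and follows essentially the same two-step strategy as the paper: first show that $v$ lies on some cycle (using the absence of cut edges), then use chordality to descend to a triangle containing $v$. Your formulation via a \emph{shortest} cycle through $v$ together with the explicit two-case chord analysis is in fact a bit more careful than the paper's version, which simply says ``this cycle must have a chord, so $v$ belongs also to a smaller cycle; recursively \ldots'' without spelling out why the smaller cycle still contains $v$ when the chord is not incident to $v$.
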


\begin{proof}
Let $G$ be a simple chordal graph without cut edges and with  $deg(v)\geq 2$ for all $v \in V$. 

First we show that every vertex belongs to a cycle. Assume there is a vertex $v$ in $V$ which does not belong to any cycle. As $deg(v)\geq 2$, $v$ has at least two neighbors $a$ and $b$. Now if we remove the edge $e=\{a,v\}$, the resulting graph must still be connected, because otherwise, $e$ would be a cut edge, but $G$ has no cut edge. However, this must mean that there is a path $P$ from $a$ to $v$, which does not use edge $e$. Re-introducing edge $e$ therefore closes a cycle. So indeed, $v$ belongs to a cycle in $G$.

Now assume that $v$ does not belong to a triangle. Then, $v$ belongs to a cycle of length at least four. As $G$ is chordal, this cycle must have a chord. So $v$ belongs also to a smaller cycle. Recursively, this shows that $v$ must belong to a triangle, as all cycles of length larger than three by definition of chordality have a chord. This completes the proof.
\end{proof}

\end{document}